\newtheorem{prop}{Proposition}
\newtheorem{theo}{Theorem}
\newtheorem{lemma}{Lemma}
\newtheorem{cor}{Corollary}
\theoremstyle{definition}
\newtheorem{rem}{Remark}
\newcommand{\Lmatr}{\mathsf{L}}
\DeclareMathOperator{\ad}{ad}
\DeclareMathOperator{\Ad}{Ad}
\DeclareMathOperator{\spanOp}{span}
\DeclareMathOperator{\tr}{tr}
\DeclareMathOperator{\adj}{adj}
\newcommand{\x}{\mathrm{x}}
\newcommand{\rmx}{\mathrm{x}}
\newcommand{\rmt}{\mathrm{t}}
\newcommand{\rmr}{\mathrm{r}}
\newcommand{\rmh}{\mathrm{h}}
\newcommand{\sfb}{\mathsf{b}}
\newcommand{\sfx}{\mathsf{x}}
\newcommand{\sft}{\mathsf{t}}
\newcommand{\sfr}{\mathsf{r}}
\newcommand{\rmd}{\mathrm{d}}
\DeclareMathOperator{\modR}{mod}
\newcommand{\M}{\mathcal{M}}
\newcommand{\rmb}{\mathrm{b}}
\newcommand{\rma}{\mathrm{a}}
\newcommand{\A}{\mathrm{A}}
\newcommand{\Pp}{\textsf{P}}
\newcommand{\Qp}{\textsf{Q}}
\newcommand{\Rp}{\textsf{R}}
\newcommand{\Cp}{\textsf{C}}
\newcommand{\Dp}{\textsf{D}}
\newcommand{\Np}{\textsf{N}}
\renewcommand{\H}{\textsf{H}}
\newcommand{\X}{\textsf{X}}
\newcommand{\Y}{\textsf{Y}}
\newcommand{\Z}{\textsf{Z}}
\newcommand{\Complex}{\mathbb{C}}
\newcommand{\Real}{\mathbb{R}}
\newcommand{\Natural}{\mathbb{N}}
\newcommand{\Integer}{\mathbb{Z}}
\newcommand{\Jac}{\mathrm{Jac}}
\DeclareMathOperator*{\res}{res}
\DeclareMathOperator{\const}{const}
\newcommand{\w}{\textsf{w}}
\newcommand{\JFr}{\mathfrak{J}}
\newcommand{\mFr}{\mathfrak{m}}
\newcommand{\kFr}{\mathfrak{k}}
\newcommand{\I}{\mathcal{I}}
\newcommand{\J}{\mathcal{J}}
\DeclareMathOperator{\ReN}{\mathrm{Re}}
\DeclareMathOperator{\ImN}{\mathrm{Im}}
\title[Exact quasi-periodic solutions to the sine(sinh)-Gordon equations]
{Exact quasi-periodic solutions to the sine(sinh)-Gordon equations: 
The method for computation and analysis}
\author{Julia Bernatska}
\address{University of Connecticut, Department of Mathematics}
\email{julia.bernatska@uconn.edu}
\begin{document}

\begin{abstract}
The sine(sinh)-Gordon hierarchy of integrable Hamiltonian systems
is described in detail, and all 
dynamic variables are expressed in terms of the $\wp$-functions 
that uniformize the associated spectral curve.
Quasi-periodic solutions to the sine(sinh)-Gordon equations are obtained
 in terms of the function $\wp_{1,2g-1}$, reality conditions are revised,
 and a method of computation and analysis 
 is presented. The proposed method is designed to analyze solutions 
 by means of the Hamiltonian technique, which is illustrated in genera one and two.
\end{abstract}




\maketitle

\section{Introduction}

The present paper  focuses on the sine-Gordon equation 
\begin{equation}\label{SGEq}
\phi_{\rmt \x} = 4 \sin \phi,
\end{equation} 
and also the sinh-Gordon equation
\begin{equation}\label{ShGEq}
\phi_{\rmt \x} = 4 \sinh \phi.
\end{equation} 
These equations are known to be completely integrable \cite{AKNS1973}
and to possess soliton solutions. The sine-Gordon equation describes
various  physical phenomena (see, for example,
\cite{SGAppl1936}) and has thus been widely studied in the literature.

The simplest soliton solutions, such as kinks and breathers expressed
in terms of elementary functions, were obtained in \cite{AKNS1973} using the inverse scattering method.  
In \cite{Hirota1972}, multi-soliton solutions to the sine-Gordon equation were obtained
via Hirota's method, and the interaction of two solitons was analyzed in detail.
Three types of traveling-wave solutions  
in terms of elliptic functions were found in \cite{CMPSS1978}
by using the Lamb ansatz $\tan \big(\tfrac{1}{4}\phi(\rmx,\rmt)\big)\,{=}\,F(\rmx) G(\rmt)$;
these solutions are interpreted 
as different forms of oscillatory behavior of a one-dimensional  Josephson transmission line.

A generic quasi-periodic solution to the sine-Gordon equation was suggested in
\cite{KK1976}. This solution is expressed in terms of the theta function and has the form, 
see \cite[Eq.(4.2.25)]{bbeim1994},
\begin{equation}\label{ThetaSol}
\phi(\x, \rmt) =  2 \imath \log \frac{\theta( \imath (\bm{U}\x + \bm{W} \rmt) + \bm{D} + \imath \pi \Delta)}
{\theta( \imath(\bm{U}\x + \bm{W} \rmt) + \bm{D} )},
\end{equation}
where $\bm{U}$, $\bm{W}$, $\bm{D}$, $\Delta$ are vectors defined through 
periods on the corresponding spectral curve. The curve is hyperelliptic,
of an arbitrary genus.

In \cite{ForMcL1982}, solutions to the sine-Gordon equation with one period along the $\rmx$-axis are analyzed
based on the spectral theory of linear operators arising in the inverse scattering method.
Traveling-wave solutions and separable solutions (of the form of  the Lamb ansatz) 
are classified up to genus two. 
Reality conditions are also addressed  in \cite{ForMcL1982},
where the following statement is presented: for real-valued solutions to the sine-Gordon equation,
the spectral curve is required to have branch points split into pairs $e_{2i}$, $e_{2i-1}$
which are either real and negative, $e_{2i}\,{<}\,e_{2i-1}\,{<}\,0$, or occur in complex conjugate pairs
$\bar{e}_{2i}\,{=}\, e_{2i-1}$.

In \cite{DN1982}, the analysis of real ovals produced by the anti-holomorphic involution on the 
spectral curve is used to derive reality conditions for genus two.  
This analysis yields a relation between $\bm{D}$, $\Delta$, and a half-period---varying according to the number of
real ovals---which ensures that \eqref{ThetaSol} is real-valued. 
In the genus two case, this relation is interpreted via the characteristics of the theta function;
specifically, in the presence of real branch points, quarter-integer characteristics are required

In \cite{Date1982},  fixed-point-free involutions of the spectral curve 
are used to single out real-valued solutions; reality condition are described through involutions 
of the canonical homology basis. In this way, the existence of real-valued solutions to the sine-Gordon equation
in arbitrary genus was proven.

The problem of choosing tan argument that ensures the solution is real-valued
is addressed in \cite{EF1985}.
It was shown that each pair of real (and negative) branch points 
causes addition of a quarter in the corresponding component of $\bm{D}$.

In \cite{bb1982}, solution \eqref{ThetaSol} in genus two is used to classify two-phase waves
that take the form of the Lamb ansatz for specific  values of parameters.
These waves are expressed via  theta functions with half-integer characteristics
and represent physically meaningful solutions.

Recently, some improvements in computing 
multi-soliton solutions using the inverse scattering method were suggested in \cite{ADM2010};
a review of  solutions expressed in elementary and elliptic functions can be found therein.
In addition to what mentioned in \cite{ADM2010},
two- and three-gap elliptic solutions  were constructed 
on the spectral curves that are two- and three-sheeted coverings 
of the elliptic curve (see \cite{Smi1991,Smi1997}).

Although  the sine(sinh)-Gordon equations are broadly illuminated in the literature,
quasi-periodic wave solutions have remained out of consideration.
Solutions expressed through elliptic functions are periodic, 
and represent a special case of the generic  solution \eqref{ThetaSol} in genus two.
In fact, the only analysis of quasi-periodic solutions is the interpretation of
 two-phase waves in \cite{bb1982}.

The purpose of the present paper is to provide an effective tool for working 
with quasi-periodic solutions. The proposed method is rooted in the Lie-algebraic approach to 
constructing the sine(sinh)-Gordon hierarchy, which is briefly presented in Sections 3 and 4.
Advantages of the method are  a simple and easy-derived 
connection between a finite-gap Hamiltonian system and its spectral curve 
(the latter serves as the Hamiltonian function in separated variables), and a straightforward
algebraic integration in terms of the $\wp$-functions associated with the spectral curve, see Section 5.
The  $\wp$-functions parametrize the phase space of such a Hamiltonian system and provide an elegant form 
for solutions to the integrable hierarchy. Computational techniques for working with these functions
have been recently developed, see  \cite{BerCompWP2024}.

Finite-gap solutions to the sine(sinh)-Gordon equations are expressed in terms of the function $\wp_{1,2g-1}$
for any  hyperelliptic genus $g$, see Theorems\;\ref{T:SineGSol} and \ref{T:SinhGSol} below.
The proposed solution  is equivalent to \eqref{ThetaSol}, see \cite[\S\,216]{bakerAF}. 
In Section 6, the addition law is used to find all real-valued solutions with half-period shifts.
This covers all solutions to the sinh-Gordon equation,  along with the most interesting 
(and physically meaningful, as noted in \cite{bb1982}) solutions to the sine-Gordon equation.

The proposed method is designed to analyze solutions using the Hamiltonian technique.
Parameters of the spectral curve serve as integrals of motion---specifically, constraints or Hamiltonians. 
Depending on the choice of these parameters, different wave propagation regimes arise,
which can be easily investigated, as illustrated in Section 7. 
The genus of the spectral curve is limited only by computational capacity.

While the method works with quasi-periodic wave solutions, 
soliton, traveling-wave, and separated solutions can be obtained from 
the generic finite-gap solution by degenerating the spectral curve, 
which is a subject for further investigation.

\section{Preliminaries}
\subsection{Hyperelliptic curves}
We work with the canonical form of a  hyperelliptic curve $\mathcal{V}$ of genus $g$, namely 
\begin{equation}\label{V22g1Eq}
f(x,y) \equiv -y^2 + x^{2 g+1} + \sum_{i=0}^{2g} \lambda_{2i+2} x^{2g-i} = 0.
\end{equation}
In terms of  branch points $(e_i,0)$, or simply $e_i$, $i=1$, \ldots, $2g+1$, $f$ has the form
\begin{equation}\label{V22g1EqBP}
f(x,y) = -y^2 + \prod_{i=1}^{2g+1} (x- e_i).
\end{equation}
One more branch point $e_0$  is located at infinity, and serves as the basepoint.
We assume that all branch points are distinct, and so the curve is not degenerate, that is, the genus equals $g$.
Finite branch points are enumerated in the ascending order of real and imaginary parts.
The Riemann surface of $\mathcal{V}$ is constructed using a
 monodromy path that passes through all branch points in order.
 This path starts  and ends at infinity (see the orange line in fig.~\ref{cyclesOddCC}). 
For further details on constructing  Riemann surfaces of hyperelliptic curves 
suitable for computation, see \cite[Sect.\,3]{BerCompWP2024}.

Following H.\,Baker \cite[p.\,297]{bakerAF}, a homology basis is defined as follows.
Cuts are made between points $e_{2k-1}$ and $e_{2k}$ for $k\,{=}\,1$, \ldots, $g$, with
an additional cut extending from  $e_{2g+1}$ to infinity.
Canonical homology cycles are then defined:
each $\mathfrak{a}_k$-cycle ($k=1$, \ldots $g$) encircles the cut $(e_{2k-1},e_{2k})$ counter-clockwise,
while each $\mathfrak{b}_k$-cycle emerges from the cut $(e_{2g+1},\infty)$ 
and enters the cut $(e_{2k-1},e_{2k})$, see  fig.\;\ref{cyclesOddCC}. 

\begin{figure}[h]
\includegraphics[width=0.67\textwidth]{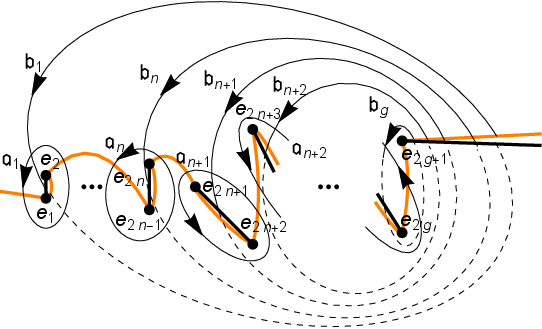} 
\caption{Cuts and cycles on a hyperelliptic curve.} \label{cyclesOddCC}
\end{figure}

Fig.~\ref{cyclesOddCC} represents the spectral curve of 
 a $g$-gap system in the sine-Gordon hierarchy, see Section~\ref{s:RealCond} for details.
Parameters $\lambda_k$  of the curve $\mathcal{V}$ are real, and all but one of the finite branch points  are
complex conjugate pairs.

Let first kind differentials $\rmd u = (\rmd u_1,\rmd u_3, \dots, \rmd u_{2g-1})^t$ 
and second kind differentials $\rmd r = (\rmd r_1,\rmd r_3, \dots, \rmd r_{2g-1})^t$ form a system of
 associated differentials, see \cite[\S\,138]{bakerAF}, and be
defined as in \cite[\textit{Ex.\,i}, p.\,195]{bakerAF}.
Actually, 
\begin{subequations}\label{K1K2DifsGen}
\begin{align}
& \rmd u_{2n-1} =  \frac{x^{g-n} \rmd x}{\partial_y f(x,y)},\quad n=1,\dots, g,\label{K1DifsGen} \\
& \rmd r_{2n-1} =  \frac{ \rmd x}{\partial_y f(x,y)} 
 \sum_{j=1}^{2n-1} (2n-j) \lambda_{2j-2} x^{g+n-j},\  \lambda_0=1,\quad n=1,\dots, g. \label{K2DifsGen}
\end{align}
\end{subequations}
The indices of $\rmd u_{2n-1}$ and $\rmd r_{2n-1}$ 
represent the orders of the zeros and poles at infinity, respectively.

The first kind differentials are not normalized.
The corresponding periods along the canonical cycles $\mathfrak{a}_k$,  $\mathfrak{b}_k$, $k=1$, \ldots, $g$, 
 are defined as follows:
\begin{gather}\label{FKD}
  \omega_k = \oint_{\mathfrak{a}_k} \rmd u,\qquad\qquad
  \omega'_k = \oint_{\mathfrak{b}_k} \rmd u.
\end{gather}
The vectors $\omega_k$, $\omega'_k$ 
form  first kind period matrices $\omega$, $\omega'$, respectively.
Similarly, second kind period matrices $\eta$, $\eta'$ are composed of columns
\begin{gather}\label{SKD}
  \eta_k = \oint_{\mathfrak{a}_k} \rmd r,\qquad\qquad
  \eta'_k = \oint_{\mathfrak{b}_k} \rmd r.
\end{gather}

The normalized first kind period matrices are $1_g$, $\tau$,
where $1_g$ denotes the identity matrix of order $g$, 
and $\tau = \omega^{-1}\omega'$.  Matrix $\tau$ is symmetric with a positive imaginary part: 
$\tau^t=\tau$, $\ImN \tau >0$,
that is, $\tau$ belongs to the Siegel upper half-space. 
The normalized holomorphic differentials are denoted by
\begin{gather*}
 \rmd v = \omega^{-1} \rmd u.
\end{gather*}

\subsection{Abel's map}
The vectors $\omega_k$, $\omega'_k$  form a period lattice $\{\omega, \omega'\}$, and
 $\Jac(\mathcal{V}) \,{=}\, \Complex^g/\{\omega, \omega'\}$ is the Jacobian variety of $\mathcal{V}$.
Let $u=(u_1$, $u_3$, \ldots, $u_{2g-1})^t$ be not normalized coordinates of $\Jac(\mathcal{V})$.

The Abel map is defined by
\begin{gather*}
 \mathcal{A}(P) = \int_{\infty}^P \rmd u,\qquad P=(x,y)\in \mathcal{V},
\end{gather*}
and on  a positive divisor $D = \sum_{i =1}^n (x_i,y_i)$  by $\mathcal{A}(D) =  \sum_{i =1}^n  \mathcal{A}(P_i)$.
The Abel map is one-to-one on the $g$-th symmetric power of the curve.

\subsection{Theta function}
The Riemann theta function  is defined by 
\begin{gather}\label{ThetaDef}
 \theta(v;\tau) = \sum_{n\in \Integer^g} \exp \big(\imath \pi n^t \tau n + 2\imath \pi n^t v\big).
\end{gather}
where $v = \omega^{-1}u$ are normalized coordinates.
Theta function with characteristic $[\varepsilon]$ is defined by
\begin{equation*}
 \theta[\varepsilon](v;\tau) = \exp\big(\imath \pi (\varepsilon'{}^t/2) \tau (\varepsilon'/2)
 + 2\imath \pi  (v+\varepsilon/2)^t \varepsilon'/2\big) \theta(v+\varepsilon/2 + \tau \varepsilon'/2;\tau),
\end{equation*}
where $[\varepsilon]= (\varepsilon', \varepsilon)^t$
 is a $2\times g$ matrix, all components of $\varepsilon$ and $\varepsilon'$
are real values within the interval $[0,2)$. Modulo $2$ addition
is defined on characteristics. 

Every point $u$ within the fundamental domain of  $\Jac(\mathcal{V})$ 
can be represented by its characteristic $[\varepsilon]$ as follows
\begin{equation*}
u =  \tfrac{1}{2}  \omega \varepsilon +  \tfrac{1}{2}  \omega' \varepsilon'.
\end{equation*}
Characteristics with values $0$ and $1$ correspond to
half-periods, which are Abel images of divisors composed of branch points.
Such a characteristic $[\varepsilon]$ is odd whenever $\varepsilon^t \varepsilon'  \,{=}\, 1$ ($\modR 2$)
and even whenever $\varepsilon^t \varepsilon' \,{=}\, 0$ ($\modR 2$). Theta function with characteristic
has the same parity as its characteristic.

\subsection{Sigma function  and $\wp$-functions}\label{ss:}
The modular invariant entire function on $\Complex^g \,{\supset}\, \Jac(\mathcal{V})$  is called the sigma function, 
which we define after \cite[Eq.(2.3)]{belHKF}:
\begin{equation}\label{SigmaThetaRel}
\sigma(u) = C \exp\big({-}\tfrac{1}{2} u^t \varkappa u\big) \theta[K](\omega^{-1} u;  \omega^{-1} \omega'),
\end{equation}
where $[K]$ is the characteristic of the vector $K$ of Riemann constants, and 
$\varkappa = \eta \omega^{-1}$ is a symmetric matrix.

In what follows we use multiply periodic $\wp$-functions 
\begin{gather*}
\wp_{i,j}(u) = -\frac{\partial^2 \log \sigma(u) }{\partial u_i \partial u_j },\qquad
\wp_{i,j,k}(u) = -\frac{\partial^3 \log \sigma(u) }{\partial u_i \partial u_j \partial u_k},
\end{gather*}
which are defined on $\Jac(\mathcal{V}) \backslash \Sigma$, 
where $\Sigma = \{u \mid \sigma(u)=0\}$ denotes the theta divisor 
(as introduced in \cite[p.\,38]{DN1982})
 in not normalized coordinates.

\subsection{Jacobi inversion problem}
A solution to the Jacobi inversion problem on a hyperelliptic curve is proposed in  \cite[Art.\;216]{bakerAF},
see also \cite[Theorem 2.2]{belHKF}.
Let $u = \mathcal{A}(D )$ be the Abel image of  a non-special positive divisor  
$D \in \mathcal{V}^g$. Then $D$ is uniquely defined by the system of equations 
\begin{subequations}\label{EnC22g1}
\begin{align}
&\mathcal{R}_{2g}(x;u) \equiv x^{g} -  \sum_{i=1}^{g} x^{g-i}  \wp_{1,2i-1}(u) = 0, \label{R2g}\\ 
&\mathcal{R}_{2g+1}(x,y;u) \equiv 2 y + \sum_{i=1}^{g} x^{g-i}  \wp_{1,1,2i-1}(u) = 0. \label{R2g1}
\end{align}
\end{subequations}

\subsection{Characteristics and partitions}\label{ss:CharPart}
Let $S = \{0,1,2,\dots, 2g+1\}$ be the set of indices of all branch points of a 
hyperellipitic curve of genus $g$, and $0$ stands for the branch point at infinity.
According to \cite[\S\,202]{bakerAF}, 
all characteristics of half-periods are represented by 
partitions of $S$ of the form $\I_\mFr\cup \J_\mFr$ with $\I_\mFr = \{i_1,\,\dots,\, i_{g+1-2\mFr}\}$
and $\J_\mFr = \{j_1,\,\dots,\, j_{g+1+2\mFr}\}$, where $\mFr$ runs from $0$ to $[(g+1)/2]$, 
and $[\cdot]$ denotes the integer part. 
The index $0$ is usually omitted when listing sets, as well as when calculating the cardinality of sets.

Denote by $[\varepsilon(\I)] = \sum_{i\in\I} [\varepsilon_i]$ $(\modR 2)$ the characteristic of
\begin{gather*}
 \mathcal{A} (\I) = \sum_{i\in\I} \mathcal{A}(e_i) = 
 \omega \Big(\tfrac{1}{2} \varepsilon(\I)  + \tfrac{1}{2} \tau \varepsilon'(\I) \Big).
\end{gather*}
Characteristics corresponding to $2g+1$ branch points serve as a basis for constructing
all $2^{2g}$ half-period characteristics.
Below, a partition is  referred to by the part of less cardinality, denoted by $\I$.
Let
$$[\I] = [\varepsilon(\I)] + [K] \ (\modR 2),$$ 
and $[K]$  equals 
the sum of $g$ odd characteristics of branch points.
In the basis of canonical cycles introduced by fig.~\ref{cyclesOddCC} we have
\begin{gather}\label{Kchar}
 [K] = \sum_{k=1}^g [\varepsilon_{2k}].
\end{gather}

Let $\I_\mFr\cup \J_\mFr$ be a partition introduced above; then $[\I_\mFr]=[\J_\mFr]$.
Characteristics $[\I_\mFr]$ are even when $\mFr$ is even  and odd when $\mFr$ is odd.
According to the Riemann vanishing theorem, $\theta(v+\mathcal{A} (\I_\mFr)+K; \tau)$ 
vanishes to order $\mFr$ at $v\,{=}\,0$.
The integer $\mFr$ is called the \emph{multiplicity}.
Characteristics of multiplicity $0$ are called \emph{non-singular even characteristics},
while those of  multiplicity $1$ are called \emph{non-singular odd characteristics}.
All other characteristics are called \emph{singular}.

\section{Integrable systems on coadjoint orbits of a loop group}\label{s:KdVHier}
The sine-Gordon equation arises within the hierarchy of integrable Hamiltonian systems 
on coadjoint orbits iof the loop algebra $\mathfrak{su}(2)$. 
We briefly recall this construction as presented in \cite{Holod84}
and revised in \cite{BerHol07}.

While similarities to  \cite[Section\;3,\,4]{BerKdV2024}  may be observed,  
 the sine(sinh)-Gordon hierarchy arises on different coadjoint orbits; 
therefore, the present exposition remains essentially  distinct.

\subsection{Affine Lie algebra}
Let $\widetilde{\mathfrak{g}} = \mathfrak{g} \otimes
\mathfrak{L}(z,z^{-1})$, where $\mathfrak{L}(z,z^{-1})$ denotes the algebra of Laurent series in~$z$,
and $\mathfrak{g}=\mathfrak{sl}(2,\Complex)$ has the standard basis
\begin{gather*}
\H = \frac{1}{2} \begin{pmatrix} 1 & 0 \\ 0 & -1  \end{pmatrix},\quad
\X = \begin{pmatrix} 0 & 1 \\ 0 & 0  \end{pmatrix},\quad
\Y = \begin{pmatrix} 0 & 0 \\ 1 & 0  \end{pmatrix}.
\end{gather*}
The algebra $\widetilde{\mathfrak{g}}$ is \emph{principally graded} by the grading operator 
\begin{gather}\label{PrGrad}
 \mathfrak{d} = 2 z \frac{\rmd }{\rmd z}  + \ad_\H,
 \end{gather}
where $\ad$ denotes the adjoint operator in $\mathfrak{g}$, and  $\forall\, \Z \in \mathfrak{g}$\ \   $\ad_\H \Z = [\H,\Z] $. 

Let $\{\X_{2m-1} \,{=}\, z^{m-1} \X,\, \Y_{2m-1} \,{=}\,  z^{m} \Y,\, \H_{2m} \,{=}\, z^m \H \mid m \in \Integer\}$
form a basis of $\widetilde{\mathfrak{g}}$. 
Let $\mathfrak{g}_{\ell}$ be the eigenspace of $\mathfrak{d}$ of degree $\ell$.
Actually, $\mathfrak{g}_{2m-1} = \spanOp \{\X_{2m-1},\, \Y_{2m-1}\}$, and $\mathfrak{g}_{2m} = \spanOp \{\H_{2m}\}$.
We also denote the basis elements in $\widetilde{\mathfrak{g}}$  by $\Z_{a;\ell}$, $a=1$, $2$, $3$, $\ell \in \Integer$,
where $\mathfrak{d} \Z_{a;\ell} = \ell \Z_{a;\ell}$, and  $\Z_{1;\ell}= \H_\ell$, $\Z_{2;\ell}=\Y_\ell$, $\Z_{3;\ell}=\X_\ell$.

With a fixed positive integer $N$, let the bilinear form be defined by
\begin{gather}\label{BiLinFSinG}
\forall A(z), B(z) \in \widetilde{\mathfrak{g}} \qquad 
\langle A(z), B(z) \rangle = \res_{z=0} z^{-N} \tr A(z) B(z).
\end{gather}
By means of $\langle \Z_{a;\ell}, \Z^{\ast}_{b;\ell'} \rangle =  \delta_{a,b} \delta_{\ell,\ell'}$
the basis $\{\Z^{\ast}_{a;\ell}  \mid a=1,2,3,\, \ell\in \Integer\}$ of the dual algebra $\widetilde{\mathfrak{g}}^\ast$ 
is introduced. In more detail,
\begin{gather*}
\Z_{1;2m}^\ast = 2 \H_{2N-2m-2},\quad
\Z_{2;2m-1}^\ast = \X_{2N-2m-1},\quad
\Z_{3;2m-1}^\ast =  \Y_{2N-2m-1}.
\end{gather*}

According to the Adler---Kostant---Symes scheme (see \cite{AdlMoer80}),  $\widetilde{\mathfrak{g}}$
falls into two subalgebars
\begin{gather*}
\widetilde{\mathfrak{g}}_+ = \oplus \sum_{\ell \geqslant 0} \mathfrak{g}_{\ell},\qquad
\widetilde{\mathfrak{g}}_- = \oplus \sum_{\ell \leqslant -1} \mathfrak{g}_{\ell}.
\end{gather*}
The  dual spaces $\widetilde{\mathfrak{g}}_+^\ast$ and $\widetilde{\mathfrak{g}}_-^\ast$ 
with respect to the bilinear form \eqref{BiLinFSinG} are
\begin{gather*}
\widetilde{\mathfrak{g}}_+^\ast = \oplus \sum_{\ell \leqslant 2N-2} \mathfrak{g}_{\ell},\qquad
\widetilde{\mathfrak{g}}_-^\ast =  \oplus \sum_{\ell > 2N-2} \mathfrak{g}_{\ell}.
\end{gather*}
Note, that $\mathfrak{g}_{N-1}^\ast = \mathfrak{g}_{N-1}$, 
and $\mathfrak{g}_{\ell}^\ast$ is dual to $\mathfrak{g}_{2N-\ell-2}$.

\subsection{Phase space of sine-Gordon hierarchy}
The sine-Gordon hierarchy of Hamiltonian systems arises
on coadjoint orbits of the group $\widetilde{G}_+ \,{=}\, \exp(\widetilde{\mathfrak{g}}_+)$.
The subspace 
$\M_N \,{=}\, \widetilde{\mathfrak{g}}^\ast_+ / \big( \sum_{\ell < -2}  \mathfrak{g}_\ell \big)$
is $\ad^\ast$-invariant under the coadjoint action of $\widetilde{\mathfrak{g}}_+$.
Actually,
\begin{gather*}
\M_N = \bigg\{\Lmatr = \sum_{\ell=-2}^{2N-2}  \sum_{a=1,2,3}  L_{a;\ell} \Z_{a;2N-\ell-2}^\ast \bigg\},
\end{gather*}
where $L_{a;\ell}$ are coordinates on $\M_N$:
$$L_{a;\ell} = \langle \Lmatr, \Z_{a;2N-\ell-2}\rangle,$$
and also serve as dynamic variables 
for the $N$-gap system within the hierarchy,
Let $L_{1;\ell}=\alpha_{\ell}$, 
$L_{2;\ell} = \beta_{\ell}$, $L_{3;\ell} = \gamma_{\ell}$; then
every element $\Lmatr \in \M_N$ has the form
\begin{subequations}\label{SinGPhSp}
\begin{gather}
\Lmatr(z) =
\begin{pmatrix} \alpha(z) & \beta(z) \\ \gamma(z) & -\alpha(z)
\end{pmatrix},\\
\begin{split}
\alpha(z) = \sum_{m=0}^{N} &\alpha_{2m} z^m,\quad
\beta(z) = \sum_{m=0}^{N} \beta_{2m-1} z^{m-1},\quad
\gamma(z) = \sum_{m=0}^{N} \gamma_{2m-1} z^{m},\\
&\alpha_{2N} = 0,\qquad \beta_{2N-1} = \gamma_{2N-1} = \sfb = \const.
\end{split}
\end{gather}
\end{subequations}
The $3N$ coordinates of $\M_N$ are ordered as follows: 
\begin{equation}\label{SinGDynVar}
\{\beta_{2m -1}, \gamma_{2m - 1}, \alpha_{2m}\}_{m=0}^{N-1}.
\end{equation}

The symplectic manifold $\M_N$ is equipped with the Lie-Poisson bracket
\begin{subequations}\label{LiePoiBraSinG}
\begin{gather}
\forall \mathcal{F}, \mathcal{H}\in \mathcal{C}^1(\M_N) \qquad
\{\mathcal{F},\mathcal{H}\} = \sum_{i,j =-2}^{2N-2}  \sum_{a, b=1,2,3} 
W_{i,j}^{a,b} \frac{\partial \mathcal{F}}{\partial L_{a;i}} \frac{\partial \mathcal{H}}{\partial L_{b;j}},
\\ W_{i,j}^{a,b}  = \langle \Lmatr, [\Z_{a,2N-i-2},\Z_{b,2N-j-2}] \rangle,
\end{gather}
\end{subequations}
or in terms of the dynamic variables:
\begin{gather}\label{SinGPoiBra}
\begin{split}
&\{\beta_{2m-1}, \alpha_{2n}\} = \beta_{2(n+m-N)+1},\\
&\{\gamma_{2m-1}, \alpha_{2n}\} = - \gamma_{2(n+m-N)+1},\quad N-1 \leqslant m+n,  \\
&\{\beta_{2m-1}, \gamma_{2n-1}\} = - 2 \alpha_{2(m+n-N)},\quad N \leqslant m+n.
\end{split}
\end{gather}
The Lie-Poisson structure $\textsf{W}= (W_{i,j}^{a,b})$ has the form
\begin{align*}
&\textsf{W} = \begin{pmatrix} 
 0 & 0 & \dots & 0 & \w_{0} \\
 \vdots & \vdots & \dots & \w_{0}  & \w_1\\
 0 & 0 & \iddots  & \w_1 & \w_2 \\
 0 &  \w_{0} &  \iddots  & \vdots & \vdots \\
 \w_{0}  & \w_1 & \dots & \w_{N-2} & \w_{N-1}
   \end{pmatrix},\\
&\w_{0} =  \begin{pmatrix} 0 & 0 & \beta_{-1}  \\
0 & 0 & - \gamma_{-1} \\
 - \beta_{-1} & \gamma_{-1} & 0 \\
 \end{pmatrix}, \\
&\w_n =  \begin{pmatrix} 0 & - 2\alpha_{2n-2} & \beta_{2n-1}  \\
 2\alpha_{2n-2}& 0 & -\gamma_{2n-1} \\
 - \beta_{2n-1}  & \gamma_{2n-1}  & 0 \\
 \end{pmatrix},\quad n = 1,\,\dots,\, N-1.
\end{align*}
This Poisson bracket  is degenerate and not canonical.

The action of $\widetilde{G}_+$ splits $\M_N$ into orbits
$$\mathcal{O} = \{\Lmatr = \Ad^\ast_{g} \Lmatr^{\text{in}} \mid g\in \widetilde{G}_+\},\qquad \Phi \in \M_N.$$
An initial point $\Lmatr^{\text{in}}$ belongs to the Weyl chamber of $\widetilde{G}_+$,
and is represented by a diagonal matrix, that is,  spanned by $H^{\ast}_{2m}$, $m = 0$, \ldots, $N$. 
Orbits have dimension $2N$
and serve as the phase spaces for Hamiltonian systems.

Physically meaningful Hamiltonian systems arise when $\mathfrak{g}$ is 
one of the real forms of $\mathfrak{sl}(2, \Complex)$, namely $\mathfrak{sl}(2, \Real)$, $\imath \mathfrak{sl}(2, \Real)$,
$\mathfrak{su}(2)$, or $\imath\mathfrak{su}(2)$. In the cases of $\mathfrak{su}(2)$ and $\imath\mathfrak{su}(2)$,
the sine-Gordon hierarchy is obtained,
while $\mathfrak{sl}(2, \Real)$ and $\imath\mathfrak{sl}(2, \Real)$ yield the $\sinh$-Gordon hierarchy.

\begin{rem}
In $\M_N$ with $\Lmatr$ of the form \eqref{SinGPhSp}, 
coadjoint orbits of $\widetilde{G}_{-} \,{=}\, \exp(\widetilde{\mathfrak{g}}_-)$
serve as phase spaces of the mKdV hierarchy, 
which is focusing  in the cases of $\mathfrak{su}(2)$ and $\imath\mathfrak{su}(2)$,
and defocusing if $\mathfrak{g}\,{=}\,\mathfrak{sl}(2, \Real)$ or $\imath \mathfrak{sl}(2, \Real)$. 
\end{rem}

\subsection{Integrals of motion}\label{ss:IntMot}
Let
\begin{equation}\label{InvF}
\begin{split}
H(z) &= \tfrac{1}{2} \tr \Lmatr^2(z) = \alpha(z)^2 + \beta(z) \gamma(z)\\
&= h_{2N-1} z^{2N-1} + \cdots + h_1 z + h_0 + h_{-1} z^{-1},
\end{split}
\end{equation}
where
\begin{gather}\label{hExprs}
\begin{split}
& h_{2N-1} = \sfb^2,\\
& h_{2N-2} = \alpha_{2N-2}^2 + \sfb (\beta_{2N-3} + \gamma_{2N-3}),\\
&\dots \\
& h_{0} = \alpha_0^2 + \beta_1 \gamma_{-1} + \beta_{-1} \gamma_{1},\\
& h_{-1} = \beta_{-1} \gamma_{-1}.
\end{split}
\end{gather}

Evidently, $H(z)$ is invariant under the action of $\widetilde{G}_{+}$.
Every $h_n$ is an integral of motion, and $h_{2N-1} \,{=}\, \sfb^2$ is an absolute constant. 
With respect to the symplectic structure \eqref{SinGPoiBra},
$h_{N-1}$, \ldots, $h_{2N-2}$ give rise to non-trivial Hamiltonian flows; 
we refer to these as \emph{Hamiltonians}. 

The remaining functions $h_{-1}$, $h_0$, \ldots, $h_{N-2}$ annihilate 
the Poisson bracket \eqref{LiePoiBraSinG} because 
$$\sum_{i =-1}^{2N-2}  \sum_{a =1,2,3} 
W_{i,j}^{a,b} \frac{\partial h_n}{\partial L_{a;i}} = 0,  \quad n=-1, 0,\dots, N-2.$$ 
Consequently, $\{h_n, \mathcal{F}\} = 0$
for any  $\mathcal{F} \in \mathcal{C}^1 (\M_N)$. 
The relations 
\begin{gather}\label{SinGConstr}
h_{-1} = \rmr_{-1},\quad h_{0} = \rmr_{0},\quad  \ldots,\quad  h_{N-2} = \rmr_{N-2},
\end{gather}
where $\rmr_{-1}$, $\rmr_{0}$, \ldots, $\rmr_{N-2}$ are constants,
thus serve as \emph{constraints} on the symplectic manifold $\M_N$. They 
fix an orbit~$\mathcal{O}$ of dimension $2N$, which  acts 
as the phase space for an $N$-gap Hamiltonian system in the sine(sinh)-Gordon hierarchy.

\subsection{Real forms of $\mathfrak{sl}(2,\Complex)$}\label{ss:RealForms}
The algebra $\mathfrak{sl}(2,\Complex)$ has two real forms: $\mathfrak{su}(2)$
and $\mathfrak{sl}(2,\Real)$, which can be extended to four cases:
$\mathfrak{sl}(2, \Real)$, $\imath \mathfrak{sl}(2, \Real)$,
$\mathfrak{su}(2)$, $\imath\mathfrak{su}(2)$. 
In all these cases,
integrals of motion $h_{-1}$, $h_0$, \ldots, $h_{2N-2}$, $h_{2N-1} \,{=}\, \sfb^2$ are real.
The dynamic variables acquire specific properties in each case:
\begin{itemize}
\item $\mathfrak{g} \,{=}\, \mathfrak{sl}(2,\Real)$, then all coordinates 
$\beta_{2m -1}$, $\gamma_{2m - 1}$, $\alpha_{2m}$, $m\,{=}\,0$, \ldots $N-1$,
are real, and also $\sfb \in \Real$.
Thus, $h_{2N-1} \,{=}\, \sfb^2 \,{>}\, 0$;
\item $\mathfrak{g} \,{=}\, \imath \mathfrak{sl}(2,\Real)$, then for  $m=0$, \ldots $N-1$
$\alpha_{2m} \,{\in}\, \Real$, $\beta_{2m -1}$, $\gamma_{2m - 1} \,{\in}\, \imath \Real$,
and $\sfb \,{\in}\, \imath \Real$, the latter implies $h_{2N-1} \,{<}\, 0$;
\item $\mathfrak{g} \,{=}\, \mathfrak{su}(2)$, then $\alpha_{2m} \,{\in}\, \imath \Real$, 
 $\beta_{2m-1} \,{=}\, {-} \bar{\gamma}_{2m-1}$,  and $\sfb \,{\in}\, \imath \Real$, 
 $h_{2N-1} < 0$;
 \item $\mathfrak{g} \,{=}\, \imath \mathfrak{su}(2)$, then $\alpha_{2m} \,{\in}\, \imath \Real$, 
 $\beta_{2m-1} \,{=}\, \bar{\gamma}_{2m-1}$,  and $\sfb \,{\in}\, \Real$, 
 $h_{2N-1} > 0$.
\end{itemize}

\subsection{Sine(sinh)-Gordon equation}\label{ss:SineSinhParam}
Let $h_{2N-2}$ give rise to a stationary flow with a parameter $\sfx$,
and $h_{N-1}$ give rise to an evolutionary flow with a parameter $\sft$:
\begin{gather}\label{TwoFlowsEq}
\frac{\rmd L_{a;\ell}}{\rmd \sfx} = \{h_{2N-2}, L_{a;\ell}\},\qquad
\frac{\rmd L_{a,\ell}}{\rmd \sft} = \{h_{N-1}, L_{a,\ell}\}.
\end{gather}  
In more detail, the stationary flow is
\begin{subequations}\label{xFlow}
\begin{align}
&\frac{\rmd \beta_{-1}}{\rmd \sfx} = 2 \alpha_{2N-2} \beta_{-1},\qquad
\frac{\rmd \gamma_{-1}}{\rmd \sfx} = - 2 \alpha_{2N-2} \gamma_{-1}, \label{BetaGammaM1EqSt} \\
&\frac{\rmd \alpha_{2m}}{\rmd \sfx} =  \sfb (\gamma_{2m-1} - \beta_{2m-1}),\quad m=0,\dots, N-1,\\
&\frac{\rmd \beta_{2m-1}}{\rmd \sfx} = 2 \alpha_{2N-2} \beta_{2m-1} - 2 \sfb \alpha_{2m-2},\\
&\frac{\rmd \gamma_{2m-1}}{\rmd \sfx} = - 2 \alpha_{2N-2} \gamma_{2m-1} + 2 \sfb \alpha_{2m-2}, 
\quad m=1,\dots, N-1.
\end{align}
\end{subequations}
The evolutionary flow is
\begin{subequations}\label{tFlow}
\begin{align}
&\frac{\rmd \alpha_{2m-2}}{\rmd \sft} =  \gamma_{-1} \beta_{2m-1} - \beta_{-1} \gamma_{2m-1},\quad m=1,\dots, N-1,
 \label{BetaGammaM1EqEv}\\
&\frac{\rmd \beta_{2m-1}}{\rmd \sft} = 2 \alpha_{2m} \beta_{-1}, \qquad
\frac{\rmd \gamma_{2m-1}}{\rmd \sft} = - 2 \alpha_{2m} \gamma_{-1},\quad 
m=0,\dots, N-1, \label{BetaGamma2NM1EqEv} \\
&\frac{\rmd \alpha_{2N-2}}{\rmd \sft} = \sfb(\gamma_{-1}  - \beta_{-1}).  \label{Alpha2NM2EqEv}
\end{align}
\end{subequations}

The sine(sinh)-Gordon equations are obtained from the equality
\begin{gather}\label{SinGEqZeroCurv}
\frac{\rmd \alpha_{2N-2}}{\rmd \sft} = \frac{\rmd \alpha_{0}}{\rmd \sfx}.
\end{gather}
Indeed, from \eqref{BetaGammaM1EqSt} one can see that 
$$\alpha_{2N-2} = \tfrac{1}{2} \rmd \ln \beta_{-1} / \rmd \sfx = - \tfrac{1}{2} \rmd \ln \gamma_{-1} / \rmd \sfx.$$
On the other hand, \eqref{BetaGamma2NM1EqEv}  implies
$$\alpha_{0} = \tfrac{1}{2} \rmd \ln \beta_{-1} / \rmd \sft = - \tfrac{1}{2} \rmd \ln \gamma_{-1} / \rmd \sft. $$

In the case of $\mathfrak{g} \,{=}\, \mathfrak{su}(2)$  (with $\sfb \,{\in}\, \imath  \Real$), 
we have $\beta_{-1} \,{=}\, {-} \bar{\gamma}_{-1}$.
Thus, the equation 
\begin{equation}\label{hm1Rel}
\beta_{-1} \gamma_{-1} = \rmr_{-1}
\end{equation}
implies 
${-}|\gamma_{-1}|^2 \,{=}\, \rmr_{-1} \,{=}\, \const$. This implies 
that $\gamma_{-1}$ and $\beta_{-1}$ both trace a circle of radius
$\sfr \,{=}\,\sqrt{{-}\rmr_{-1}}$ if $\rmr_{-1}\,{<}\,0$.
If $\rmr_{-1}\,{=}\,0$, then $\gamma_{-1} \,{=}\, \beta_{-1}\,{=}\, 0$; 
however, if $\rmr_{-1}\,{>}\,0$, no states of the Hamiltonian system exist.
We assign
\begin{subequations}\label{SUGammaBetaParam} 
\begin{equation}
\gamma_{-1} \,{=}\, \imath \sfr \exp (\imath \phi),\qquad 
\beta_{-1} \,{=}\, \imath \sfr \exp (- \imath \phi),\quad  
\sfr \,{=}\, \sqrt{-\rmr_{-1}},\ \ \rmr_{-1} < 0,
\end{equation}
where $\phi$ is a function of $\sfx$ and $\sft$.

In the case of $\mathfrak{g} \,{=}\, \imath \mathfrak{su}(2)$  (with $\sfb \,{\in}\, \Real$),
we have 
$\beta_{-1} \,{=}\, \bar{\gamma}_{-1}$, and  so 
$ |\gamma_{-1}|^2 \,{=}\, \rmr_{-1}$. Thus,  
$\gamma_{-1}$ traces a circle of radius $\sfr \,{=}\, \sqrt{\rmr_{-1}}$ if $\rmr_{-1}\,{>}\,0$.
Similar to the previous case, $\gamma_{-1} \,{=}\, \beta_{-1}\,{=}\, 0$
if $\rmr_{-1}\,{=}\,0$, and no states of the Hamiltonian system exist
if $\rmr_{-1}\,{<}\,0$.
In this case, we assign
\begin{equation}
\gamma_{-1} \,{=}\, \sfr \exp (\imath \phi),\qquad 
\beta_{-1} \,{=}\, \sfr \exp ({-} \imath \phi),\quad  
\sfr \,{=}\, \sqrt{\rmr_{-1}},\ \ \rmr_{-1}>0.
\end{equation}
\end{subequations}

In the both cases of \eqref{SUGammaBetaParam},  equations
\eqref{BetaGammaM1EqSt} and \eqref{BetaGamma2NM1EqEv}  imply,
respectively,
\begin{gather*}
\alpha_{2N-2} = - \frac{\imath}{2} \frac{\rmd \phi}{\rmd \sfx},\qquad 
\alpha_{0} = - \frac{\imath}{2} \frac{\rmd \phi}{\rmd \sft}.
\end{gather*}
Finally,  \eqref{Alpha2NM2EqEv} produces the equation for $\phi$
\begin{subequations}\label{SinGEq} 
\begin{align}\label{SUSinGEq} 
\imath \frac{\rmd^2 \phi}{\rmd \sft \rmd \sfx} &=  4  \sfb \sfr \sin \phi,\quad \sfb \,{\in}\, \imath  \Real,\\
\frac{\rmd^2 \phi}{\rmd \sft \rmd \sfx} &=  - 4  \sfb \sfr \sin \phi,\quad \sfb \,{\in}\,  \Real.  \label{iSUSinGEq} 
\end{align}
\end{subequations}
The both equations \eqref{SinGEq}
 turn into the \emph{sine-Gordon equation} \eqref{SGEq} by the corresponding transformation 
\begin{subequations}\label{tTrans} 
\begin{align}
&\sfx \mapsto \x,\qquad\qquad \sft \mapsto \rmt = 
- \imath \sfb \sfr  \sft,\quad \sfb \,{\in}\, \imath  \Real, \label{SUtTrans} \\
&\sfx \mapsto \x,\qquad\qquad \sft \mapsto \rmt = 
- \sfb \sfr  \sft,\quad \sfb \,{\in}\,  \Real. \label{iSUtTrans}
\end{align}
\end{subequations}

In the case of $\mathfrak{g} \,{=}\, \mathfrak{sl}(2,\Real)$,  
we assign
\begin{subequations}\label{SLGammaBetaParam} 
\begin{equation}
\gamma_{-1} \,{=}\, \sfr \exp ( \phi),\qquad  \beta_{-1} \,{=}\,  \sfr \exp (- \phi),\quad
\sfr \,{=}\, \sqrt{\rmr_{-1}},\ \ \rmr_{-1}>0,
\end{equation}
and in the case of $\mathfrak{g} \,{=}\, \imath \mathfrak{sl}(2,\Real)$,  
\begin{equation}
\gamma_{-1} \,{=}\, \imath \sfr \exp ( \phi),\qquad  \beta_{-1} \,{=}\,  \imath \sfr \exp (- \phi),\quad
\sfr \,{=}\, \sqrt{-\rmr_{-1}},\ \ \rmr_{-1}<0,
\end{equation}
\end{subequations}
where $\phi$ is a function of $\sfx$ and $\sft$.
Then, \eqref{BetaGammaM1EqSt} and \eqref{BetaGamma2NM1EqEv}  imply
\begin{gather*}
\alpha_{2N-2} = - \frac{1}{2} \frac{\rmd \phi}{\rmd \sfx},\qquad 
\alpha_{0} = - \frac{1}{2} \frac{\rmd \phi}{\rmd \sft},
\end{gather*}
and \eqref{Alpha2NM2EqEv} produces the equation for $\phi$
\begin{subequations}\label{SinhGEq} 
\begin{align}\label{SLSinhGEq} 
 \frac{\rmd^2 \phi}{\rmd \sft \rmd \sfx} &= - 4 \sfb \sfr \sinh \phi,\quad \sfb \in \Real,\\
 \frac{\rmd^2 \phi}{\rmd \sft \rmd \sfx} &= - 4 \imath \sfb \sfr \sinh \phi,\quad \sfb \in \imath\Real.
\end{align}
\end{subequations}
The both equations \eqref{SinhGEq}
 turn into the \emph{sinh-Gordon equation} \eqref{ShGEq} 
 by the corresponding transformation from \eqref{tTrans}.

\begin{rem}
Note, that the sine(sinh)-Gordon equations arise when $N \,{\geqslant}\, 2$.
If $N\,{=}\,1$, there exists the only Hamiltonian $h_{0} \,{=}\, \alpha_0^2 \,{+}\, \sfb (\beta_{-1} \,{+}\, \gamma_{-1})$,
which produces the stationary flow
\begin{align*}
&\frac{\rmd \beta_{-1}}{\rmd \sfx} = 2 \alpha_{0} \beta_{-1},&
&\frac{\rmd \gamma_{-1}}{\rmd \sfx} = - 2 \alpha_{0} \gamma_{-1}, &
&\frac{\rmd \alpha_{0}}{\rmd \sfx} =  \sfb (\gamma_{-1} - \beta_{-1}).
\end{align*}
From the flow of $h_0$ we find: 
$\alpha_{0} \,{=}\, \frac{1}{2} \rmd \ln \beta_{-1} / \rmd \sfx \,{=}\, {-} \frac{1}{2} \rmd \ln \gamma_{-1} / \rmd \sfx$.
With the same substitutions for $\gamma_{-1}$ and $\beta_{-1}$, where
 $\phi$ is a function  of $\sfx$, we come to the stationary equation, which is
 the \emph{equation of motion of a simple  pendulum} if $\mathfrak{g} \,{=}\, \mathfrak{su}(2)$:
 \begin{gather*}
 \imath \frac{\rmd^2 \phi}{\rmd \sfx^2} = 4  \sfb \sfr \sin \phi,\quad \sfb \in \imath \Real,
\end{gather*}
 or $\mathfrak{g} \,{=}\,\imath \mathfrak{su}(2)$:
\begin{gather*}
\frac{\rmd^2 \phi}{\rmd \sfx^2} = - 4 \sfb \sfr \sin \phi,\quad \sfb \in  \Real.
\end{gather*}
\end{rem}

\subsection{Zero curvature representation}
The system of dynamic equations \eqref{TwoFlowsEq} admits  the matrix form
\begin{gather*}
\frac{\rmd \Lmatr}{\partial \sfx} = [\Lmatr, \nabla h_{2N-2}], \qquad
\frac{\rmd \Lmatr}{\partial \sft} = [\Lmatr, \nabla h_{N-1}],
\end{gather*}
where $\nabla h_n$ denotes the matrix gradient of $h_n$, namely,
\begin{gather*}
\nabla h_n = \sum_{i =-1}^{2N-1}  \sum_{a =1,2,3} 
\frac{\partial h_n}{\partial L_{a;i}} \Z_{a,2N-i-2}.
\end{gather*}
The matrix gradient of each flow has a complementary matrix $ \mathsf{A}$ such that
$$[\Lmatr, \nabla h_{n}] =  [\Lmatr,  \mathsf{A}].$$  Unlike $\nabla h_n$, 
the complementary matrix $ \mathsf{A}$ is defined uniquely  for all $N$;
that is, for all Hamiltonian systems in the hierarchy.
Actually,
\begin{gather}\label{PsiMatrEqs} 
\frac{\rmd \Lmatr}{\partial \sfx} = [\Lmatr,   \mathsf{A}_\text{st}], \qquad
\frac{\rmd \Lmatr}{\partial \sft} = [\Lmatr,   \mathsf{A}_\text{ev}], \\
\begin{split}
& \mathsf{A}_\text{st} = - \begin{pmatrix} 
\alpha_{2N-2} & \sfb \\  
\sfb z  & -\alpha_{2N-2} 
 \end{pmatrix},\qquad
 \mathsf{A}_\text{ev} =  \begin{pmatrix} 
0 & z^{-1} \beta_{-1} \\
 \gamma_{-1} & 0
 \end{pmatrix}.
 \end{split} \notag
\end{gather}
The zero curvature representation for the sine(sinh)-Gordon hierarchy has the form
\begin{gather*}
\frac{\rmd  \mathsf{A}_\text{st}}{\partial \sft} - \frac{\rmd  \mathsf{A}_\text{ev}}{\partial \sfx} 
= [ \mathsf{A}_\text{st},  \mathsf{A}_\text{ev}].
\end{gather*}

\section{Separation of variables}\label{s:SoV}

\subsection{Spectral curve}
The sine(sinh)-Gordon hierarchy is associated with the family of hyperelliptic curves
\begin{equation}\label{SpectrCurve}
- w^2 + z^2 H(z) = 0,
\end{equation}
where  $H$ is defined by \eqref{InvF}. This equation of a genus $N$ curve
is obtained from the characteristic polynomial of $\Lmatr$, namely $\det \big(\Lmatr(z) \,{-}\, (w/z)\big) \,{=}\, 0$.
All parameters of the curve are integrals of motion:  Hamiltonians $h_{N-1}$, \dots, $h_{2N-2}$, 
constraints $h_{-1}$, \ldots, $h_{N-2}$, and $h_{2N-1} \,{=}\, \sfb^2 \,{=}\, \const$.

\subsection{Canonical coordinates}
As shown in \cite{KK1976}, variables of separation in the sine-Gordon hierarchy 
are given by a positive non-special\footnote{A non-special divisor on a hyperelliptic curve of genus $g$
is a degree $g$ positive divisor which contains no pairs of points in involution.}  
divisor of degree equal to the genus $N$ of the spectral curve.
In fact, pairs of coordinates of $N$ points from the support of such a divisor serve as quasi-canonical variables, 
and thus lead to separation of variables.
Below, we briefly explain how the required divisor can be obtained from the dynamic variables
for an $N$-gap system in the sine-Gordon hierarchy, and prove that  
pairs of coordinates of the $N$ points serve as quasi-canonical variables.

Recall that the symplectic manifold $\M_N$, with $3N$ coordinates \eqref{SinGDynVar} defined on it, splits
into orbits $\mathcal{O}$ fixed by $N$ constraints, and so $\dim \mathcal{O}\,{=}\, 2N$.
Let  $\beta_{2m-1}$, $m\,{=}\,0$, \ldots, $N-1$, be eliminated with the help of constraints \eqref{SinGConstr}.
Since the expressions \eqref{hExprs} are linear with respect to $\beta_{2m-1}$,
it is convenient to present them in  matrix form. The constraints are written as
\begin{gather}\label{Constr}
\Gamma_{\text{c}} \bm{\beta} + \A_{\text{c}} = \bm{r},\\
\intertext{where}
\Gamma_{\text{c}} = \begin{pmatrix}
\rmb & 0 & 0 &\dots & 0  \\
0& \gamma_{-1} & \gamma_{1} & \dots & \gamma_{2N-3} \\
\vdots & 0 & \gamma_{-1} & \ddots  & \vdots  \\
0 & \vdots & \ddots & \ddots & \gamma_{1} \\
0 & 0& \dots  & 0 & \gamma_{-1}
\end{pmatrix},\qquad
\bm{\beta} = \begin{pmatrix} \sfb \\
\beta_{2N-3}  \\ \vdots \\  \beta_1 \\ \beta_{-1}
\end{pmatrix},  \notag \\
\A_{\text{c}} = \begin{pmatrix} 0 \\ 
 \sum_{k=0}^{N-2} \alpha_{2(N-2-k)} \alpha_{2k} \\ 
\vdots \\
 \sum_{k=0}^{n} \alpha_{2(n-k)} \alpha_{2k} \\ 
\vdots\\
\alpha_0^2 \\
0 \end{pmatrix},\qquad
\bm{r} = \begin{pmatrix}  \sfb^2 \\ \rmr_{N-2}  \\ \vdots  \\ \rmr_{n}  \\ 
\vdots \\ \rmr_{0} \\ \rmr_{-1} \end{pmatrix}.  \notag
\end{gather}
The first equation is an identity. From \eqref{Constr} we find
\begin{equation}\label{GammaFromConstr}
\bm{\beta} = \Gamma_{\text{c}}^{-1}(\bm{r} - \A_{\text{c}}).
\end{equation}

Let values of the Hamiltonians be denoted by $\rmh_{N-1}$, $\rmh_N$, \ldots, $\rmh_{2N-2}$.
Expressions for the Hamiltonians admit the matrix form
\begin{gather}\label{Hams}
\Gamma_{\text{h}} \bm{\beta} + \A_{\text{h}} = \bm{h},\\
\intertext{where}
\Gamma_{\text{h}} = \begin{pmatrix}
\gamma_{2N-3} & \sfb & 0 &  \dots & 0 \\
\vdots  & \gamma_{2N-3} & \ddots & \ddots & \vdots   \\
\gamma_1 & \vdots & \ddots &  \sfb &  0  \\
\gamma_{-1} & \gamma_1 & \dots &  \gamma_{2N-3} & \sfb
\end{pmatrix},  \notag \\
\A_{\text{h}} = \begin{pmatrix} 
\alpha_{2N-2}^2 \\
 \vdots \\
 \sum_{k=n}^{N-1} \alpha_{2(N-1-k+n)} \alpha_{2k} \\
 \vdots \\
 \sum_{k=0}^{N-1} \alpha_{2(N-1-k)} \alpha_{2k} 
 \end{pmatrix},\qquad
\bm{h} = \begin{pmatrix} \rmh_{2N-2}  \\ \vdots \\ \rmh_{N-1+n} \\ \vdots \\ \rmh_{N-1} \end{pmatrix}.  \notag
\end{gather}
Substituting \eqref{GammaFromConstr} into \eqref{Hams}, we obtain
\begin{equation}\label{HamsBeta}
 \bm{h} = \Gamma_{\text{h}}  \Gamma_{\text{c}}^{-1}(\bm{r} - \A_{\text{c}})+ \A_{\text{h}}.
\end{equation}

On the other hand, the values of Hamiltonians $\rmh_{2N-2}$, \ldots, $\rmh_{N-1}$ can be found from 
the equation \eqref{SpectrCurve} of the spectral curve taken at the points $\{(z_i,w_i)\}_{i=1}^N$ 
which form a non-special divisor. Namely, with $i=1$, \ldots, $N$,
\begin{equation*}
 - w_i^2 + \rmb^2 z_i^{2N+1} + \rmh_{2N-2} z_i^{2N} + \cdots + \rmh_{N-1} z_i^{N+1} 
 + \rmr_{N-2} z_i^{N} + \dots + \rmr_{-1} z_i  = 0,
\end{equation*}
or in  matrix form
\begin{equation*}
- \bm{w} +  \mathrm{Z}_{\text{h}} \bm{h} + \mathrm{Z}_{\text{c}} \bm{r}  = 0,
\end{equation*}
where
\begin{gather*}
\mathrm{Z}_{\text{c}} = \begin{pmatrix} 
z_1^{2N+1} & z_1^{N} & \dots & z_1 \\
z_2^{2N+1} & z_2^{N} & \dots & z_2 \\
\vdots &  \vdots & \ddots & \vdots \\
z_N^{2N+1} & z_N^{N} & \dots & z_N 
\end{pmatrix},\quad
\mathrm{Z}_{\text{h}} = \begin{pmatrix} 
z_1^{2N} & \dots & z_1^{N+2} & z_1^{N+1} \\ 
z_2^{2N} & \dots & z_2^{N+2} & z_2^{N+1} \\ 
\vdots &  \ddots & \vdots & \vdots \\
z_N^{2N} & \dots & z_N^{N+2} & z_N^{N+1} 
\end{pmatrix},\quad 
\bm{w} =  \begin{pmatrix}
w_1^2 \\ w_2^2 \\ \vdots \\ w_N^2
\end{pmatrix}.
\end{gather*}
The matrix $\mathrm{Z}_{\text{h}}$ is square and invertible. Thus,
\begin{equation}\label{HamsSpectrC}
 \bm{h} = \mathrm{Z}_{\text{h}}^{-1} \big(\bm{w} -  \mathrm{Z}_{\text{c}} \bm{r}\big).
\end{equation}

Equations \eqref{HamsBeta} and \eqref{HamsSpectrC}
define the same Hamiltonians. Therefore, 
\begin{gather*}
 \Gamma_{\text{h}}  \Gamma_{\text{c}}^{-1}(\bm{r} - \A_{\text{c}})+ \A_{\text{h}} = 
 \mathrm{Z}_{\text{h}}^{-1} \big(\bm{w} -  \mathrm{Z}_{\text{c}} \bm{r}\big).
\end{gather*}
Moreover, constants $\bm{r}$ can be taken arbitrarily,
and thus the corresponding coefficients coincide as well as the remaining terms:
\begin{subequations}
\begin{gather}
 \Gamma_{\text{h}}  \Gamma_{\text{c}}^{-1} = - \mathrm{Z}_{\text{h}}^{-1}  \mathrm{Z}_{\text{c}}, \label{cCoefs} \\
  -\Gamma_{\text{h}}  \Gamma_{\text{c}}^{-1}\A_{\text{c}} + \A_{\text{h}} = \mathrm{Z}_{\text{h}}^{-1} \bm{w}. \label{fCoefs}
\end{gather}
\end{subequations}
From \eqref{cCoefs} we find
$$\mathrm{Z}_{\text{h}} \Gamma_{\text{h}} + \mathrm{Z}_{\text{c}} \Gamma_{\text{c}} = 0,$$
which is equivalent to $\gamma(z_i) \,{=}\, 0$.
Then, from \eqref{fCoefs} we obtain
$$ \mathrm{Z}_{\text{c}} \A_{\text{c}} + \mathrm{Z}_{\text{h}}\A_{\text{h}} = \bm{w}, $$
which produces
$w_i^2 - z_i^2 \alpha(z_i)^2 = 0$.
Thus, the points $(z_i,w_i)$ of the required divisor are defined by
\begin{equation*}\label{PointsDef2}
\gamma(z_i) = 0,\qquad w_i^2 - z_i^2 \alpha(z_i)^2 = 0,  \quad i=1,\dots, N.
\end{equation*}
This result  was firstly discovered in  \cite{KK1976}.

\begin{theo}\label{T:QCanonVars}
Suppose, an orbit $\mathcal{O} \,{\subset}\, \M_N$ 
is described in terms of the coordinates $(\gamma_{2m-1}, \alpha_{2m})$, $m\,{=}\,0$, \ldots, $N\,{-}\,1$, 
introduced in \eqref{SinGPhSp}. 
Then the new coordinates $(z_i, w_i)$, $i\,{=}\,1$, \ldots, $N$, defined by the formulas
\begin{equation}\label{PointsDef1}
\gamma(z_i) = 0,\qquad w_i = \epsilon z_i \alpha(z_i),  \quad i=1,\dots, N,
\end{equation}
where $\epsilon^2 \,{=}\, 1$,
have the following properties:
\begin{enumerate}
\renewcommand{\labelenumi}{\arabic{enumi})}
\item  a pair $(z_i, w_i)$ is a point of the spectral curve  \eqref{SpectrCurve};
\item a pair $(z_i, w_i)$ is quasi-canonically conjugate with respect to the  Lie-Poisson
bracket \eqref{SinGPoiBra}:
\begin{equation}\label{CanonCoord}
\{z_i, z_j\} = 0,\qquad  \{z_i, w_j\} = \epsilon\, z_i^N\, \delta_{i,j},\qquad
\{w_i, w_j\} = 0;
\end{equation}
\item the canonical $1$-form is
\begin{equation}\label{Liouv1Form}
 \epsilon \sum_{i=1}^N z_i^{-N} w_i \rmd z_i.
\end{equation}
\end{enumerate}
\end{theo}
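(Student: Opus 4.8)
The plan is to verify the three claims in order, treating the definition \eqref{PointsDef1} of $(z_i,w_i)$ as implicit functions of the dynamic variables $(\gamma_{2m-1},\alpha_{2m})$ and differentiating through them. Claim 1) is immediate: by construction $\gamma(z_i)=0$, so the spectral curve equation \eqref{SpectrCurve} at $z=z_i$ reduces to $-w^2 + z_i^2\bigl(\alpha(z_i)^2 + \beta(z_i)\gamma(z_i)\bigr) = -w^2 + z_i^2\alpha(z_i)^2$, which vanishes precisely when $w=w_i=\epsilon z_i\alpha(z_i)$. So the bulk of the work is claim 2), and claim 3) will follow formally.

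For claim 2), I would first record the differentials of the defining relations. Since $\gamma(z_i)=0$ with $\gamma(z)=\sum_{m=0}^{N}\gamma_{2m-1}z^m$ and $\gamma_{2N-1}=\rmb$ constant, differentiating gives $\gamma'(z_i)\,\rmd z_i = -\sum_{m=0}^{N-1}z_i^m\,\rmd\gamma_{2m-1}$, hence $\rmd z_i = -\bigl(\gamma'(z_i)\bigr)^{-1}\sum_{m}z_i^m\,\rmd\gamma_{2m-1}$; and $\rmd w_i = \epsilon\bigl(\alpha(z_i) + z_i\alpha'(z_i)\bigr)\rmd z_i + \epsilon z_i\sum_{m=0}^{N-1}z_i^m\,\rmd\alpha_{2m}$. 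The key computational input is the Poisson bracket \eqref{SinGPoiBra} among the $\gamma$'s and $\alpha$'s. Crucially, $\{\gamma_{2m-1},\gamma_{2n-1}\}=0$, so the $z_i$ depend only on the $\gamma$'s through a commuting family; this gives $\{z_i,z_j\}=0$ directly and also means $\{z_i,w_j\}$ picks up only the cross-bracket $\{\gamma_{2m-1},\alpha_{2n}\} = \gamma_{2(n+m-N)+1}$. The heart of the argument is to substitute the differentials above into
$\{z_i,w_j\} = \sum_{m,n}\frac{\partial z_i}{\partial\gamma_{2m-1}}\,\frac{\partial w_j}{\partial\alpha_{2n}}\,\{\gamma_{2m-1},\alpha_{2n}\}$
(the $\gamma$–$\gamma$ piece dies, and the term with $\partial w_j/\partial\gamma$ pairs against $\{\gamma,\gamma\}=0$ too) and to show that the resulting double sum collapses. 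Using $\{\gamma_{2m-1},\alpha_{2n}\}=\gamma_{2(m+n-N)+1}$ and the value $z_j^N$ hidden in $\sum_n z_j^{\,n}\gamma_{2(m+n-N)+1}$, one recognizes a shifted copy of $z_j^{N}\gamma(z_j)$ plus lower-order corrections; since $\gamma(z_j)=0$, and for $i\neq j$ one also uses $\gamma(z_i)=0$ to kill the sum over $m$, only the diagonal $z_i^N$ term survives with coefficient $\epsilon$. A symmetric and slightly more involved computation — using both $\{\gamma_{2m-1},\alpha_{2n}\}$ and $\{\alpha_{2m},\alpha_{2n}\}=0$ — shows $\{w_i,w_j\}=0$; here one must be careful that the $\partial w/\partial\gamma$ terms, which do contribute a priori, cancel in pairs once $\gamma(z_i)=\gamma(z_j)=0$ is used.

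Finally, claim 3) is a formal consequence: a set of coordinates satisfying \eqref{CanonCoord} means that after the rescaling $\tilde w_i = \epsilon\int^{w_i} z_i^{-N}\,\rmd w_i = \epsilon z_i^{-N}w_i$ (at fixed $z_i$) the pairs $(z_i,\tilde w_i)$ are canonically conjugate, so the Liouville $1$-form is $\sum_i \tilde w_i\,\rmd z_i = \epsilon\sum_{i=1}^N z_i^{-N}w_i\,\rmd z_i$, which is \eqref{Liouv1Form}. The main obstacle I anticipate is the bookkeeping in the double sums for $\{z_i,w_j\}$ and $\{w_i,w_j\}$: the index shifts $2(m+n-N)+1$ run partly outside the range where the bracket relations \eqref{SinGPoiBra} are declared valid (they carry the side conditions $N-1\leqslant m+n$ and $N\leqslant m+n$), so one has to check that the would-be out-of-range terms either do not appear because $\gamma_{2N-1}=\rmb$ is central in the relevant brackets, or cancel against the $\A_{\text{c}}$-type quadratic-in-$\alpha$ corrections — essentially re-deriving the computation already sketched above via \eqref{cCoefs}–\eqref{fCoefs}, now at the level of Poisson brackets rather than identities of functions.
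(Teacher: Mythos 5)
Your proposal is correct and follows essentially the same route as the paper: claim 1 by direct substitution using $\gamma(z_i)=0$, claim 2 by differentiating the implicit definitions and collapsing the double sum $\sum z_i^m z_j^{n+1}\gamma_{2(m+n-N)+1}$ into the divided difference $z_j\bigl(z_i^{N-1}\gamma(z_i)-z_j^{N-1}\gamma(z_j)\bigr)/(z_i-z_j)$, which vanishes off the diagonal and yields $\epsilon z_i^N$ in the limit $z_j\to z_i$, and claim 3 as a formal consequence. The index-range worry you flag at the end is resolved exactly as you suspect: the bracket $\{\gamma_{2m-1},\alpha_{2n}\}$ vanishes for $m+n<N-1$ by the block structure of $\textsf{W}$, so the sum is automatically restricted to the declared range and no extra cancellation is needed.
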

\begin{proof}
Since $z_i$, $i=1$, \ldots, $N$, depend only on $\gamma_{2m-1}$, $m=0$, \ldots, $N-1$, and the latter commute,
we have $\{z_i, z_j\} = 0$. Next,
\begin{multline*}
 \{z_i, w_j\}  = \sum_{n+m=N-1}^{2N-2} \bigg(\frac{\partial z_i}{\gamma_{2m-1}}  \frac{\partial w_j}{\alpha_{2n}}
 - \frac{\partial z_i}{\alpha_{2n}}  \frac{\partial w_j}{\gamma_{2m-1}} \bigg) \{\gamma_{2m-1}, \alpha_{2n}\} \\
 = \frac{\epsilon}{\gamma'(z_i)} \sum_{m+n =N-1}^{2N-2}  z_i^m z_j^{n+1} \gamma_{2(m+n-N)+1}
 =  \frac{\epsilon z_j}{\gamma'(z_i)}  \frac{z_i^{N-1} \gamma(z_i) - z_j^{N-1} \gamma(z_j)}{z_i - z_j},
\end{multline*}
since from \eqref{PointsDef1} we have 
\begin{gather*}
\frac{\partial z_i}{\gamma_{2m-1}}  = - \frac{z_i^m}{\gamma'(z_i)},\qquad
 \frac{\partial z_i}{\alpha_{2n}} = 0,\qquad  
 \frac{\partial w_i}{\alpha_{2n}} = \epsilon z_i^{n+1},\\
 \frac{\partial w_i}{\gamma_{2m-1}} = - \frac{\epsilon z_i^m}{\gamma'(z_i)} \big(\alpha(z_i) + z_i \alpha'(z_i) \big).
\end{gather*}
As $i \neq j$, it is evident that $\{z_i, w_j\}  = 0$, due to $\gamma(z_i)=\gamma(z_j)=0$.
As $i=j$, we get
$$  \{z_i, w_i\}  = \lim_{z_j \to z_i}   \frac{\epsilon z_j}{\gamma'(z_i)}  
\frac{z_i^{N-1} \gamma(z_i) - z_j^{N-1} \gamma(z_j)}{z_i - z_j}
= \epsilon\, z_i^N. $$
Finally, we find
\begin{multline*}
 \{w_i, w_j\}  = \sum_{n+m=N-1}^{2N-2} \bigg(\frac{\partial w_i}{\gamma_{2m-1}}  \frac{\partial w_j}{\alpha_{2n}}
 -  \frac{\partial w_i}{\alpha_{2n}}  \frac{\partial w_j}{\gamma_{2m-1}} \bigg) \{\gamma_{2m-1}, \alpha_{2n}\} \\
 = \epsilon^2 \sum_{m+n = N-1}^{2N-2}  \bigg(z_i^m z_j^{n+1} \frac{ \alpha(z_i) + z_i \alpha'(z_i) }{\gamma'(z_i)}  
 - z_i^{n+1} z_j^m \frac{ \alpha(z_j) + z_j \alpha'(z_j) }{\gamma'(z_j)} \bigg) \gamma_{2(m+n-N)+1} \\
 =   \epsilon^2 \frac{z_i^{N-1} \gamma(z_i) - z_j^{N-1} \gamma(z_j)}{z_i - z_j} 
 \bigg(z_j \frac{ \alpha(z_i) + z_i \alpha'(z_i) }{\gamma'(z_i)} 
 - z_i \frac{ \alpha(z_j) + z_j \alpha'(z_j) }{\gamma'(z_j)}  \bigg).
\end{multline*}
Thus,  $\{w_i, w_j\}  = 0$,  due to $\gamma(z_i)=\gamma(z_j)=0$.

Then, \eqref{Liouv1Form} follows from the fact that pairs $(z_i, w_i)$, $i=1$, \ldots, $N$, 
 obey \eqref{CanonCoord}.
\end{proof}

In what follows we assign $\epsilon = 1$.

\section{Algebro-geometric integration}\label{s:AGI}
\subsection{Uniformization of the spectral curve}
Separation of variables provides a solution to the Jacobi inversion problem 
on the spectral curve \eqref{SpectrCurve},
which is a hyperelliptic curve of genus $N$:
\begin{multline}\label{CurveGN}
 0 = F(z,w) \equiv - w^2 + \sfb^2 z^{2N+1} + \rmh_{2N-2} z^{2N} + \cdots + \rmh_{N-1} z^{N+1} \\
 + \rmr_{N-2} z^{N} + \dots + \rmr_0 z^2 + \rmr_{-1} z.
\end{multline}

The spectral curve \eqref{CurveGN} is reduced to the canonical  form  \eqref{V22g1Eq},
denoted by $\mathcal{V}$, by  applying
the transformation ($N\equiv g$)
\begin{equation}\label{SectrCToCanonC}
\begin{split}
z &\mapsto x,\qquad 
w \mapsto  y = w/\sfb,\qquad F(x, \sfb y) = \sfb^2 f(x,y),\\
&\rmh_{n}\; (\text{or } \rmr_n)= \sfb^2 \lambda_{4g-2-2n},\ \ n=-1,\,\dots,\, 2N-2.
\end{split}
\end{equation}
Note, that one of the branch points of \eqref{CurveGN} is fixed at $0$,
and thus the corresponding canonical  form has $\lambda_{4g+2}=0$.

We will work with the $\wp$-functions associated with $\mathcal{V}$ and use 
the coordinates $u \,{=}\, (u_1,\dots,u_{2g-1})^t$ on $\Jac(\mathcal{V})$. 
The differentials of the first and second kinds,
defined by \eqref{K1K2DifsGen}, are expressed in terms of $z$-, $w$-coordinates of 
the spectral curve \eqref{CurveGN} as follows:
\begin{subequations}
\begin{align}
& \rmd u_{2n-1} =  \frac{\sfb z^{N-n} \rmd z}{\partial_w F(z,w)},\quad n=1,\dots, N,\label{K1Difs} \\
& \rmd r_{2n-1} =  \frac{ \rmd z}{\sfb \partial_w F(z,w)} \sum_{j=1}^{2n-1} (2n-j) \rmh_{2N-j} z^{N+n-j}, 
\quad n=1,\dots, N, \label{K2Difs}
\end{align}
\end{subequations}
where the notation $\rmh_n$ is used for $\rmh_n$ and $\rmr_n$, and $\rmh_{2N-1} \,{=}\, \sfb^2$.

\begin{theo}
Let  $\gamma_{2m-1}$, $\beta_{2m-1}$, $\alpha_{2m}$ for $m\,{=}\,0$, \ldots, $N\,{-}\,1$
be dynamic variables for $\M_N$, and $D \,{=}\, \sum_{i=1}^N (z_i,w_i)$ be a divisor defined by \eqref{PointsDef1}.
Let $u\,{=}\, \mathcal{A}(D)$, where $u \,{\in}\, \Jac(\mathcal{V}) \backslash \Sigma$.
Then
\begin{subequations}\label{WPAlphaGamma}
\begin{align}
&\gamma_{2(N-k)-1} = - \sfb \wp_{1,2k-1}(u), \quad k=1,\dots, N; \label{WPGamma}\\
&\alpha_{2(N-k)} = - \frac{\sfb \wp_{1,2k-1,2N-1}(u)}{2 \wp_{1,2N-1}(u)},
\quad k=1,\dots, N, \label{WPAlpha} \\
&\beta_{2(N-k)-1} = \sfb \frac{ \wp_{2k-1,2N-1}(u)}{\wp_{1,2N-1}(u)},
\ \   k=1,\dots, N-1,\quad
\beta_{-1} = \frac{- \rmr_{-1}}{\sfb \wp_{1,2N-1}(u)}. \label{WPBeta}
\end{align}
\end{subequations}
\end{theo}
\begin{proof}
Applying the transformation \eqref{SectrCToCanonC} to  \eqref{EnC22g1},
we find a solution to the Jacobi inversion problem on the spectral curve \eqref{CurveGN}.
That is, coordinates of the support of $D$ are expressed in terms of the $\wp$-functions
associated with the spectral curve. Taking into account \eqref{PointsDef1}, we obtain
\begin{align*}
&\gamma_{2(N-k)-1} = - \sfb \wp_{1,2k-1}(u), \quad k=1,\dots, N;\\
&\alpha_{2N-2} = - \frac{\sfb \wp_{1,1,2N-1}(u)}{2 \wp_{1,2N-1}(u)}, \\
&\alpha_{2(N-k)} = - \frac{\sfb}{2} \Big(\wp_{1,1,2k-3}(u) -  
\frac{\wp_{1,2k-3}(u) \wp_{1,1,2N-1}(u)}{\wp_{1,2N-1}(u)} \Big), 
\ \  k=2,\dots, N. 
\end{align*}
By substituting these expressions into  \eqref{GammaFromConstr}, we find also 
expressions for $\beta_{2m-1}$, $m\,{=}\,1$, \ldots, $N$, in terms of the $\wp$-functions.
Next, identities for $\wp$-functions (see \cite{BerWPFF2025} for more details)
are applied to simplify the obtained expressions for the dynamic variables, and thus \eqref{WPAlpha}
and \eqref{WPBeta} are obtained
\end{proof}

Finally, we have 
\begin{equation}\label{SinGSolGamma}
\gamma_{-1}  = - \sfb \wp_{1,2N-1}(u),
\end{equation}
 which yields the $N$-gap solution $\phi$ 
to equations \eqref{SinGEq} and \eqref{SinhGEq},
according to the chosen parametrization for for $\gamma_{-1}$.
This solution coincides with the one proposed in \cite[Theorem 4.13]{belHKF}.

\begin{rem}
The sine(sinh)-Gordon equations, derived from \eqref{SinGEq} and \eqref{SinhGEq},
are equivalent to the following identity for the $\wp$-functions
associated with $\mathcal{V}$ of the form \eqref{V22g1Eq}, provided that $\lambda_{4N+2}=0$:
\begin{gather}\label{SinGWPEq}
\frac{\rmd}{\rmd u_{2N-1}} \frac{\wp_{1,1,2N-1}(u)}{\wp_{1,2N-1}(u)} = 
2 \Big(\wp_{1,2N-1}(u) -  \frac{\lambda_{4 N}}{\wp_{1,2N-1}(u) }\Big).
\end{gather}
The identity coincides with the one obtained from \eqref{Alpha2NM2EqEv}
by applying \eqref{WPAlphaGamma}. 
\end{rem}

\begin{rem}
Parameters $\lambda_2$, \ldots, $\lambda_{2N}$ of \eqref{V22g1Eq}
are related to the values $\rmh_{2N-2}$, \ldots, $\rmh_{N-1}$  of the Hamiltonians, 
namely $\rmh_{n} \,{=}\, \sfb^2 \lambda_{4N-2-2n}$. 
Let $J_{2N+2+2n}$ be functions of $\wp_{1,2i-1}$, $\wp_{1,1,2i-1}$, $i=1$, \ldots, $N$,
and $\lambda_{2k}$, $k=1$, \ldots, $2N+1$,
such that the equations 
\begin{equation}\label{JacEqs}
J_{2N+2+2n} = 0,\qquad n=1, \dots, N, 
\end{equation}
serve as an algebraic model of $\Jac(\mathcal{V})$, 
see \cite[Sect.\,4]{BerWPFF2025}.
The Hamiltonians from \eqref{hExprs} expressed
 in terms of the $\wp$-functions by means of \eqref{WPAlphaGamma} 
 coincide with \eqref{JacEqs}, 
 provided that $\lambda_{4N+2}=0$.
\end{rem}

\subsection{Equations of motion in variables of separation}
From \eqref{PsiMatrEqs} we find
\begin{subequations}
\begin{align*}
&\frac{\rmd}{\rmd \sfx} \gamma(z) = - 2 \alpha_{2N-2} \gamma(z) + 2 \sfb z \alpha(z),\\
&\frac{\rmd}{\rmd \sft} \gamma(z) = - 2 \gamma_{-1} \alpha(z),
\end{align*}
\end{subequations}
where all dynamic variables are functions of $\sfx$ and $\sft$. Therefore, 
the zeros of $\gamma(z)$ are  functions of $\sfx$ and $\sft$ as well, namely
 $\gamma(z) \,{=}\, \sfb \prod_{i=1}^N (z-z_i(\sfx,\sft))$.
Then
\begin{subequations}
\begin{align*}
&\frac{\rmd}{\rmd \sfx} \log \gamma(z) = - \sum_{j=1}^N \frac{1}{z-z_j}\frac{\rmd z_j}{\rmd \sfx}
= - 2 \alpha_{2N-2} + 2 \sfb z \frac{\alpha(z)}{\gamma(z)},\\
&\frac{\rmd}{\rmd \sft} \log \gamma(z) = - \sum_{j=1}^N \frac{1}{z-z_j}\frac{\rmd z_j}{\rmd \sft}
= - 2 \gamma_{-1} \frac{\alpha(z)}{\gamma(z)}.
\end{align*}
\end{subequations}
Taking into account \eqref{PointsDef1}, we find as $z\to z_i$, $i=1$, \ldots, $N$,
\begin{gather}\label{DzDxEqs}
\frac{\rmd z_i}{\rmd \sfx}
= - \frac{2 w_i}{\prod_{j\neq i}^N (z_i - z_j)},\qquad
\frac{\rmd z_i}{\rmd \sft}
= - \frac{2 w_i \prod_{j\neq i} (-z_j)}{\prod_{j\neq i} (z_i - z_j) }.
\end{gather}

Let $D = \sum_{i=1}^N (z_i,w_i)$, and the $N$ points satisfy \eqref{PointsDef1}.
The Abel image of $D$
\begin{equation*}
u = \mathcal{A}(D) = \sum_{i=1}^N \int_{\infty}^{(z_i,w_i)} \rmd u = 
\sum_{i=1}^N \int_{\infty}^{(z_i,w_i)} 
\begin{pmatrix} z^{N-1} \\ \vdots \\ z \\ 1 \end{pmatrix} 
\frac{\sfb\, \rmd z}{-2w}
\end{equation*}
depends on $\sfx$ and $\sft$. Applying \eqref{DzDxEqs}, we find
\begin{subequations}
\begin{align*}
&\frac{\rmd u_{2n-1}}{\rmd \sfx} = \sum_{i=1}^N \frac{\sfb z_i^{N-n}}{-2w_i} \frac{\rmd z_i}{\rmd \sfx}
= \sum_{i=1}^N \frac{\sfb z_i^{N-n}}{\prod_{j\neq i}^N (z_i - z_j)} =  \sfb \delta_{n,1}, \\
& \frac{\rmd u_{2n-1}}{\rmd \sft} = \sum_{i=1}^N \frac{\sfb z_i^{N-n}}{-2w_i} \frac{\rmd z_i}{\rmd \sft}
=  \sum_{i=1}^N  \frac{\sfb z_i^{N-n}\prod_{j\neq i} (-z_j)}{\prod_{j\neq i} (z_i - z_j) } = \sfb \delta_{n,N}.
\end{align*}
\end{subequations}
And so, we obtain
\begin{gather}\label{uParam}
\begin{split}
u_{1} &= \sfb \sfx + C_1,\\
u_{2n-1} &= \sfb \mathsf{c}_{2n-1} + C_{2n-1} = \const,\quad n=2, \ldots, N-1, \\
u_{2N-1} &= \sfb \sft + C_{2N-1}.
\end{split}
\end{gather}
Here  $\mathsf{c}_{2n-1}$, $n=2, \ldots, N-1$, are arbitrary real constants, and
 $\bm{C} = (C_1$, $C_3$, \ldots, $C_{2N-1})$ is a constant vector chosen in such a way that
 $\gamma_{-1} \,{=}\, {-} \sfb \wp_{1,2N-1}(u)$ has the desired property, see subsection~\ref{ss:SineSinhParam}.
Thus, \eqref{SinGSolGamma} acquires the form 
\begin{equation}\label{KdVSolRealCond}
\gamma_{-1}(\sfx, \sft) 
= - \sfb \wp_{1,2N-1}\big(\sfb (\sfx, \mathsf{c}_{3}, \dots, \mathsf{c}_{2N-3}, \sft)^t  + \bm{C}\big),
\end{equation}
 where $\bm{C}$ serves as a shift.

Applying the transformation \eqref{tTrans}, we find
\begin{theo}\label{T:SineGSol}
A finite-gap solution to the sine-Gordon equation \eqref{SGEq}
associated with the genus $N$ spectral curve \eqref{CurveGN} is given by
\begin{subequations}\label{SinGSol}
\begin{align}
\begin{split}\label{sinGSolIm} 
&\phi(\x,\rmt) = - \imath \log \Big( \frac{\imath \sfb}{\sqrt{-\rmr_{-1}}}  
\wp_{1,2N-1}\big(u_{\ImN}(\rmx, \rmt)\big) \Big),
\quad  \sfb \in \imath \Real,\ \ \rmr_{-1}<0, \\
&\qquad\qquad\ \  u_{\ImN}(\rmx, \rmt) =  \sfb \Big(\x, \mathsf{c}_{3}, \dots, \mathsf{c}_{2N-3}, 
\frac{\rmt}{{-}\imath \sfb \sqrt{-\rmr_{-1}}} \Big)^t  + \bm{C}; 
\end{split} \\
\begin{split}\label{sinGSolRe}
&\phi(\x,\rmt) = - \imath \log \Big({-} \frac{\sfb}{\sqrt{\rmr_{-1}}}  
\wp_{1,2N-1} \big(u_{\ReN}(\rmx, \rmt)\big)  \Big), \quad
\sfb \in \Real, \ \ \rmr_{-1}>0, \\
&\qquad\qquad\ \  u_{\ReN}(\rmx, \rmt) =  \sfb \Big(\x, \mathsf{c}_{3}, \dots, \mathsf{c}_{2N-3}, 
\frac{\rmt}{{-} \sfb \sqrt{\rmr_{-1}}} \Big)^t  + \bm{C}.
\end{split}
\end{align}
\end{subequations}
\end{theo}

\begin{theo}\label{T:SinhGSol}
A finite-gap solution to the sinh-Gordon equation \eqref{ShGEq}
associated with the genus $N$ spectral curve \eqref{CurveGN} is given by
\begin{subequations}\label{SinhGSol}
\begin{align}
&\phi(\x,\rmt) =  \log \Big({-} \frac{\sfb}{\sqrt{\rmr_{-1}}}  
\wp_{1,2N-1} \big(u_{\ReN}(\rmx, \rmt)\big)  \Big), \quad
\sfb \in \Real, \ \ \rmr_{-1}>0; \label{sinhGSolRe} \\
&\phi(\x,\rmt) =  \log \Big( \frac{\imath \sfb}{\sqrt{-\rmr_{-1}}}  
\wp_{1,2N-1}\big(u_{\ImN}(\rmx, \rmt)\big) \Big),
\quad  \sfb \in \imath \Real, \ \ \rmr_{-1}<0. \label{sinhGSolIm} 
\end{align}
\end{subequations}
\end{theo}

\begin{rem}\label{R:bRealSol}
If the spectral curve \eqref{CurveGN} is reduced to the canonical 
form \eqref{V22g1Eq} with $\lambda_{4g+2}\,{=}\,0$,
$g\,{=}\,N$, then $\rmr_{-1} \,{=}\, \sfb^2 \lambda_{4g}$. 
The both conditions: $\rmr_{-1} \,{<}\, 0$ if $\sfb \,{\in}\, \imath \Real$
and $\rmr_{-1} \,{>}\, 0$ if $\sfb \,{\in}\, \Real$ are equivalent to the condition $\lambda_{4g} \,{>}\, 0$
for the both cases $\sfb \,{\in}\, \Real$ and $\sfb \,{\in}\, \imath \Real$.
The two expressions \eqref{SinGSol} acquire the form
\begin{equation}\label{sinGSolCanon} 
\phi(\x,\rmt) = - \imath \log \Big( {\pm} \lambda_{4g}^{-1/2}
\wp_{1,2N-1} \Big(\sfb \big(\rmx, \mathsf{c}_{3}, \dots, \mathsf{c}_{2N-3}, 
{\pm} \sfb^{-2} \lambda_{4g}^{-1/2}\rmt  \big)^t  + \bm{C}\Big) \Big),  \tag{\ref{SinGSol}'}
\end{equation}
and  the two expressions \eqref{SinhGSol} acquire the form
\begin{equation}\label{sinhGSolCanon} 
\phi(\x,\rmt) =  \log \Big({\pm} \lambda_{4g}^{-1/2}
\wp_{1,2N-1} \Big(\sfb \big(\rmx, \mathsf{c}_{3}, \dots, \mathsf{c}_{2N-3}, 
{\pm} \sfb^{-2} \lambda_{4g}^{-1/2}\rmt  \big)^t  + \bm{C}\Big) \Big),    \tag{\ref{SinhGSol}'}
\end{equation}
`$+$' for $\sfb\,{\in}\, \Real$, `$-$' for $\sfb\,{\in}\, \imath \Real$.
\end{rem}

\section{Reality conditions}\label{s:RealCond}
Reality conditions for the  sine(sinh)-Gordon equations single out real-valued and smooth solutions.
Real-valued solutions are obtained if the desired properties of the dynamic variables for the Hamiltonian systems from the corresponding hierarchy are satisfied (see subsection~\ref{ss:RealForms}).
By applying the parametrization \eqref{WPAlphaGamma} in terms 
of the $\wp$-functions, we come to the problem of finding particular paths 
in the Jacobian variety $\Jac(\mathcal{V})$ where these conditions are met.
Taking into account  that the argument $u$ 
of the $\wp$-functions is given by \eqref{uParam}, 
we consider affine subspaces in $\Jac(\mathcal{V})$
of the form 
\begin{equation}\label{AffSubSp}
u_{\ReN}\,{=}\, s \,{+}\, \bm{C},\quad \text{and} \quad
u_{\ImN}\,{=}\, \imath s \,{+}\, \bm{C},\quad  s\in\Real^g.
\end{equation}
In this paper, we focus on half-period shifts $\bm{C}$ and use
the addition law on $\mathcal{V}$ to find such $\bm{C}$ that 
the dynamic variables possess the desired properties.
Among real-valued solutions, we choose those without singularities, which we call smooth.

In what follows, we work with the canonical form $\mathcal{V}$ of the
 spectral curve \eqref{CurveGN}, associated with an $N$-gap
Hamiltonian system from the sine(sinh)-Gordon hierarchy.
The canonical form $\mathcal{V}$ is defined by equation
 \eqref{V22g1Eq} with $\lambda_{4g+2}\,{\equiv}\, 0$,
implying that the curve $\mathcal{V}$ has branch points at both zero and infinity.
 All parameters $\lambda_k$ of  \eqref{V22g1Eq}  are real, as the 
 integrals of motion $\rmh_n$ and $\rmr_n$ in \eqref{CurveGN} are required to be real.
 
For brevity, we consider equation \eqref{V22g1Eq} in the  form $y^2 \,{=}\, \Lambda(x)$. 
Let $\{e_k\}_{k=1}^{2g+1}$ be roots of $\Lambda$,
enumerated  in the ascending order 
of their real and imaginary parts, as shown in fig.\,\ref{cyclesOddCC}.  

The sinh-Gordon equation arises on the affine algebras of $\mathfrak{sl}(2,\Real)$ (where $\sfb\,{\in}\,\Real$)
and $\imath \mathfrak{sl}(2,\Real)$  (where $\sfb\,{\in}\,\imath \Real$).
The properties of $\wp$-functions  established in \cite[Proposition\;1]{BerKdV2024} guarantee that 
the dynamic variables for Hamiltonian systems in the sinh-Gordon hierarchy
are consistent with these affine algebras, see subsection~\ref{ss:RealForms}.
 
The sine-Gordon equation arises on the affine algebras of $\mathfrak{su}(2)$ (where $\sfb\,{\in}\,\imath \Real$)
and $\imath \mathfrak{su}(2)$  (where $\sfb\,{\in}\,\Real$).
In these cases, the conditions on the dynamic variables to be consistent with 
the affine algebras are more complicated. Specifically, as shown in subsection\;\ref{ss:SineSinhParam},
the constraint \eqref{hm1Rel} implies:
\begin{gather}\label{su2Cond}
\begin{aligned}
&\mathfrak{g} = \mathfrak{su}(2): & 
&|\gamma_{-1}|^2 = -\rmr_{-1} = -\sfb^2 \lambda_{4g}, \quad \sfb \in \imath \Real, \\
&\mathfrak{g} = \imath \mathfrak{su}(2): & & |\gamma_{-1}|^2 = \rmr_{-1} = \sfb^2 \lambda_{4g},
\quad \sfb \in  \Real.
\end{aligned}
\end{gather}
This condition is used to identify real-valued solutions to the sine-Gordon equation
and resembles the one presented in \cite[Lemma\;3]{DN1982}.

\subsection{Complex conjugate branch points}
We assume that all non-zero branch points of $\mathcal{V}$ come in complex conjugate pairs.
The enumeration of branch points implies that the zero branch point has an odd index, 
and its position can be arbitrary.

\begin{prop}\label{P:CCbp}
Suppose a curve $\mathcal{V}$ of the form  \eqref{V22g1Eq} 
possesses the branch points $\{(e_k,0)\}_{k=1}^{2g+1}$ such that
$e_{2n+1}\,{=}\,0$.
If there are $g$ pairs of complex conjugate branch points given by 
$\{e_{2i-1}, e_{2i}\}$ for $i\,{=}\,1$, \ldots, $n$ and 
$\{e_{2i}, e_{2i+1}\}$ for  $i\,{=}\,n+1$,\ldots, $g$, then: 
\begin{equation}\label{Lambda4g}
\lambda_{4g} = \prod_{i=1}^n e_{2i-1} e_{2i} \prod_{i=n+1}^{g} e_{2i} e_{2i+1}
= \prod_{i=1}^g |e_{2i}|^2.
\end{equation}
\end{prop}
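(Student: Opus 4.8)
The plan is to read off $\lambda_{4g}$ from the constant term of the defining polynomial of $\mathcal{V}$. From the canonical form \eqref{V22g1Eq}, comparing $f(x,y) = -y^2 + x^{2g+1} + \sum_{i=0}^{2g}\lambda_{2i+2}x^{2g-i}$ with the factored form \eqref{V22g1EqBP}, $f(x,y) = -y^2 + \prod_{i=1}^{2g+1}(x-e_i)$, the lowest-degree coefficient of $\prod_{i=1}^{2g+1}(x-e_i)$ (the $x^0$ term, if no $e_i$ vanishes) equals $\prod_{i=1}^{2g+1}(-e_i)$, while in the first form the $x^0$ coefficient is $\lambda_{4g+2}$, and the $x^1$ coefficient is $\lambda_{4g}$. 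Since one branch point, $e_{2n+1}$, is at $0$, the product $\prod_{i=1}^{2g+1}(x-e_i)$ is divisible by $x$, forcing $\lambda_{4g+2}=0$ (already noted in the text), and $\lambda_{4g}$ is then the constant term of $\prod_{i\neq 2n+1}(x-e_i)$, i.e. $\prod_{i\neq 2n+1}(-e_i) = (-1)^{2g}\prod_{i\neq 2n+1}e_i = \prod_{i\neq 2n+1}e_i$. This gives the first equality in \eqref{Lambda4g} once the remaining $2g$ indices are grouped into the two families of conjugate pairs $\{e_{2i-1},e_{2i}\}_{i=1}^n$ and $\{e_{2i},e_{2i+1}\}_{i=n+1}^g$ as in the hypothesis.

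First I would state the elementary-symmetric-function identity: for a monic polynomial $\prod_{k}(x-a_k)$, the coefficient of $x^0$ is $\prod_k(-a_k)$ and the coefficient of $x^1$ is $(-1)^{\deg-1}\sum_j\prod_{k\neq j}a_k$; but here it is cleaner to just factor out $x$ directly. So the first step is: write $\prod_{i=1}^{2g+1}(x-e_i) = x\cdot\prod_{i\neq 2n+1}(x-e_i)$ using $e_{2n+1}=0$, identify $\lambda_{4g+2}=0$ and $\lambda_{4g} = \prod_{i\neq 2n+1}(x-e_i)\big|_{x=0} = \prod_{i\neq 2n+1}(-e_i)$. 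The second step is the sign count: there are exactly $2g$ factors among $\{(-e_i)\}_{i\neq 2n+1}$, so $(-1)^{2g}=1$ and $\lambda_{4g}=\prod_{i\neq 2n+1}e_i$, which is precisely the first product displayed in \eqref{Lambda4g}.

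The final step is to evaluate this product using the conjugacy hypothesis: each pair contributes $e_{2i-1}e_{2i}=e_{2i}\overline{e_{2i}}=|e_{2i}|^2$ for $i=1,\dots,n$, and $e_{2i}e_{2i+1}=e_{2i}\overline{e_{2i}}=|e_{2i}|^2$ for $i=n+1,\dots,g$ (in each pair $\{e_{2i},e_{2i+1}\}$ exactly one member is the conjugate of the other, and the even-indexed one can be taken as the representative). Collecting, $\lambda_{4g}=\prod_{i=1}^g|e_{2i}|^2$, which is the second equality. I anticipate no real obstacle here: the only point requiring a word of care is the sign bookkeeping (that the $2g$ minus signs cancel) and the consistent identification of which index in each conjugate pair plays the role of $e_{2i}$ in the statement, so I would make that indexing explicit but would not belabor it. As a sanity check one can note $\lambda_{4g}>0$ automatically, consistent with the requirement $\lambda_{4g}>0$ recorded just above the proposition for the sine-Gordon case.
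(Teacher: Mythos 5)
Your proposal is correct and follows essentially the same route as the paper's proof, which simply invokes ``direct computation'' of the coefficient $\lambda_{4g}$ (the coefficient of $x^1$, equal to the product of the $2g$ non-zero roots once $e_{2n+1}=0$ is factored out) and then observes that each conjugate pair contributes $|e_{2i}|^2$. Your version merely makes explicit the sign bookkeeping and the identification of $\lambda_{4g}$ that the paper leaves implicit.
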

\begin{proof}
The equality is obtained by direct computation. 
Note that each pair of complex conjugate roots of $\Lambda$
contains a root with an even index, say $e_{2i}$.
The other root in each pair has an odd index; specifically, $e_{2i-1} \,{=}\, \bar{e}_{2i}$
for $i\,{=}\,1$, \ldots, $n$ and $e_{2i+1} \,{=}\, \bar{e}_{2i}$ for $i\,{=}\,n\,{+}\,1$, \ldots, $g$.
Thus, the product of the roots in the $i$-th pair equals $|e_{2i}|^2$. 
\end{proof}

Therefore, the spectral curves associated with the sine-Gordon hierarchy 
take the canonical form $\mathcal{V}$ with 
$\lambda_{4g} \,{>}\, 0$, satisfying the conditions on $\rmr_{-1}$ detailed in subsection
\ref{ss:RealForms}. It follows that
\begin{equation}\label{SqLambda4g}
\sqrt{\lambda_{4g}} = \prod_{i=1}^g |e_{2i}|.
\end{equation}
Each root $e_{2i}$ in the product on the right-hand side of \eqref{SqLambda4g}
may be replaced by its complex conjugate.
Consequently, we obtain $2^g$ distinct collections of $g$ non-zero roots that yield the same product.

\begin{prop}\label{P:Radius}
There exist $2^g$ ways to compose the product \eqref{SqLambda4g} 
from $g$ pairs of complex conjugate roots of $\Lambda$
by choosing one point from each pair. 
These  $2^g$ collections of $g$ branch points form $2^g$ non-special divisors.
\end{prop}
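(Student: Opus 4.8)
The plan is to establish the two assertions of the proposition in turn: that there are exactly $2^g$ distinct such collections, and that each of them is a non-special divisor.

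For the count I would start from Proposition~\ref{P:CCbp}: the $2g+1$ finite branch points of $\mathcal{V}$ split into the fixed point $e_{2n+1}=0$ and the $g$ pairwise disjoint complex conjugate pairs $\{e_{2i-1},e_{2i}\}$ with $i=1,\dots,n$, and $\{e_{2i},e_{2i+1}\}$ with $i=n+1,\dots,g$. Since $\mathcal{V}$ is non-degenerate all $e_k$ are distinct, and in particular the two members of every pair are distinct, being genuinely non-real (the single real finite branch point is $0$). Choosing one representative from each of the $g$ disjoint pairs therefore produces exactly $2^g$ collections of $g$ branch points; two different choice functions differ on at least one pair and so give different collections, whence these $2^g$ divisors are pairwise distinct. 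Each collection consists of $g$ distinct, finite, non-zero branch points --- none equal to $e_{2n+1}=0$ or to $e_0=\infty$ --- and, because $|e_{2i-1}|=|e_{2i}|$ for $i\le n$ and $|e_{2i}|=|e_{2i+1}|$ for $i>n$, every collection reproduces one and the same value of the product in~\eqref{SqLambda4g}.

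For the non-speciality I would fix one such collection and let $D=(e_{j_1},0)+\dots+(e_{j_g},0)$ be the associated positive divisor of degree $g$, the $e_{j_k}$ being pairwise distinct. By the definition of a non-special divisor recalled in the footnote in Section~\ref{s:SoV}, it suffices to check that $D$ contains no pair of points interchanged by the hyperelliptic involution $(x,y)\mapsto(x,-y)$. A non-branch point $P=(x_0,y_0)$ has involute $(x_0,-y_0)\neq P$, and neither of the two is a branch point, so $P+(x_0,-y_0)\not\le D$; a branch point $(e,0)$ is fixed by the involution, so the only divisor of ``involution type'' it can contribute is $2(e,0)$, and this is $\le D$ only if $e$ occurs among $e_{j_1},\dots,e_{j_g}$ with multiplicity at least $2$ --- impossible, those points being pairwise distinct. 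Hence $D$ is non-special. Equivalently, one may argue by Riemann--Roch --- no nonzero function with polar divisor bounded by $(2g-2)(\infty)$ vanishes at the $g$ distinct branch points of $D$, so $D$ is non-special --- and translate this into the language of Subsection~\ref{ss:CharPart}: $\mathcal{A}(D)$ does not lie on the theta divisor $\Sigma$, and $[\mathcal{A}(D)]+[K]$ is a non-singular even characteristic, of multiplicity $\mFr=0$.

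The argument carries essentially no analytic content, so the only point that genuinely needs attention is the combinatorial bookkeeping of the conjugate pairs furnished by Proposition~\ref{P:CCbp}; the non-speciality of a divisor of pairwise distinct branch points on a non-degenerate hyperelliptic curve is classical, cf.~\cite[Art.\,216]{bakerAF}.
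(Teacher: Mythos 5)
Your proof is correct. The paper in fact states Proposition~\ref{P:Radius} without any proof, treating the count as immediate from the preceding remark that each $e_{2i}$ in \eqref{SqLambda4g} may be replaced by its conjugate, and the non-speciality as classical; your writeup supplies exactly these omitted details — the $2^g$ count from $g$ disjoint conjugate pairs of distinct non-real points, and non-speciality of a divisor of $g$ pairwise distinct branch points because each is fixed by the hyperelliptic involution and occurs with multiplicity one (equivalently, no polynomial of degree $\le g-1$ vanishes at $g$ distinct points) — and both arguments are sound.
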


The branch point divisors singled out by Proposition~\ref{P:Radius} necessarily belong to 
the domain of $\wp_{1,2g-1}$ in \eqref{SinGSol}.
Indeed, a $g$-gap  Hamiltonian system, 
expressed in terms of variables of separation $(z_i,w_i)$, 
splits into $g$ independent systems. Each system is defined by
the coordinate $z_i$, the momentum $w_i$, 
and a Hamiltonian function given by the spectral curve equation \eqref{CurveGN}. 
Consequently, the trajectories of the Hamiltonian system contain branch points as turning points. 

\begin{rem}
The  $2^g$ collections $\I$ of $g$ branch points defined by Proposition~\ref{P:Radius}
correspond to the half-periods $\Omega_{\I}$ such that
\begin{equation}\label{HPwp}
|\wp_{1,2g-1}(\Omega_{\I})| = \sqrt{\lambda_{4g}}.
\end{equation}
The equality follows from the solution of the Jacobi inversion problem given by \eqref{EnC22g1}.
Taking into account that $\gamma_{-1} \,{=}\, {-}\sfb \wp_{1,2g-1}(u)$, we see that $\Omega_{\I}$
defined in \eqref{HPwp} comply with \eqref{su2Cond} and, therefore, belong to 
the required path in $\Jac(\mathcal{V})$.
Note that if $\mathcal{V}$ remains real but not all non-zero branch points are complex conjugates,
then \eqref{SqLambda4g} and also \eqref{HPwp}  do not hold; 
consequently, property \eqref{su2Cond} fails on the affine subspaces \eqref{AffSubSp}
with half-period shifts. 
\end{rem}

\subsection{Singularities of $\wp$-functions}
Each collection of $g$ branch points forms a divisor whose Abel image is an
even non-singular half-period. Odd  non-singular and singular half-periods are Abel images
of divisors of degrees less than $g$.
As explained in subsection~\ref{ss:CharPart}, 
each half-period is associated with a partition $\I\cup \J$ of the indices of all branch points,
where the cardinality of $\I$ is equal to or less than $g$, that is $|\I| \,{\leqslant}\, g$.
The $\wp$-functions have singularities at all half-periods with $|\I| \,{<}\, g$.
Therefore,  all half-periods in the domain of smooth solutions 
 correspond to non-special divisors, which consist of $g$ distinct branch points.
This guarantees that the $\wp$-functions have finite values.

\begin{theo}\label{P:ReImPeriods}
Let a curve $\mathcal{V}$ of the form  \eqref{V22g1Eq} 
 have branch points as defined in Proposition~\ref{P:CCbp}. 
For the choice of cycles shown in  fig.\,\ref{cyclesOddCC} 
and the standard non-normalized holomorphic differentials \eqref{K1DifsGen}, 
the period lattice generated by each pair of periods $\{\omega_{k}$,
$\omega'_{k}\}$ is rhombic, with generators:
\begin{enumerate}
\renewcommand{\labelenumi}{\arabic{enumi})}
\item $\omega'_{k}$ and
$\omega_{k} - \omega'_{k}$, $k=1$, \dots, $n$, where 
\begin{subequations}\label{PerRels}
\begin{equation}\label{ImRels}
\omega_k \in \Real,\qquad  \ReN \omega'_k = \tfrac{1}{2} \omega_k;
\end{equation}
\item  $\omega_{k}$ and $\omega_{k} - 2 \imath \ImN \omega_{k}$, $k=n+1$, \dots,~$g$, where 
\begin{equation}\label{ReRels}
\begin{split}
\omega'_k \in \imath \Real,\quad 
&\imath \ImN \omega_{n+1} =  - \tfrac{1}{2}  (\omega'_{n+1} - \omega'_{n+2}),\\
&\imath \ImN \omega_j = 
- \tfrac{1}{2} (\omega'_{j} - \omega'_{j+1})  
+ \tfrac{1}{2} (\omega'_{j-1} - \omega'_{j}),\  j=n+2,\, \dots,\, g-1,\\
&\imath \ImN \omega_g = - \tfrac{1}{2}  \omega'_g  + \tfrac{1}{2} (\omega'_{g-1} - \omega'_g ).
\end{split}
\end{equation}
\end{subequations}
\end{enumerate}
\end{theo}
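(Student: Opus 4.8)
The plan is to compute the periods $\omega_k = \oint_{\mathfrak a_k}\mathrm du$ and $\omega'_k = \oint_{\mathfrak b_k}\mathrm du$ directly, exploiting the symmetry of $\mathcal V$ under complex conjugation. Since all $\lambda_j$ are real, the curve $\mathcal V$ is invariant under the anti-holomorphic involution $\rho\colon (x,y)\mapsto(\bar x,\bar y)$, and this involution permutes the branch points: it fixes $e_{2n+1}=0$ and $\infty$, swaps the pair $\{e_{2i-1},e_{2i}\}$ for $i\le n$, and swaps the pair $\{e_{2i},e_{2i+1}\}$ for $i>n$ (this is exactly the content of Proposition~\ref{P:CCbp}). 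The first step is to record how $\rho$ acts on the chosen homology basis of fig.~\ref{cyclesOddCC}: each cut $(e_{2k-1},e_{2k})$, $k\le n$, is mapped to itself (with reversed orientation, since $\rho$ is orientation-reversing on the Riemann surface), so $\rho_*\mathfrak a_k = -\mathfrak a_k$ for $k\le n$; whereas for $k>n$ the cut through which $\mathfrak a_k$ loops gets moved around, and $\rho_*\mathfrak a_k$ is a combination of $\mathfrak a$-cycles that must be worked out from the explicit monodromy path. The $\mathfrak b$-cycles require more care because they all emanate from the cut $(e_{2g+1},\infty)$, and $\rho$ moves $e_{2g+1}=e_{2g}$ (its conjugate being $e_{2g+1}$... here one must use the labelling convention carefully), so $\rho_*\mathfrak b_k$ picks up contributions from several $\mathfrak a$- and $\mathfrak b$-cycles.

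The second step is to combine this with the effect of $\rho$ on the differentials. Because $\mathrm du_{2n-1} = x^{g-n}\mathrm dx/\partial_y f$ has real coefficients, one has $\rho^*\mathrm du = \overline{\mathrm du}$, hence $\oint_{\rho_*\gamma}\mathrm du = \overline{\oint_\gamma \mathrm du}$ for any cycle $\gamma$. Applying this to the relations $\rho_*\mathfrak a_k = -\mathfrak a_k$ (for $k\le n$) immediately gives $\omega_k = \overline{-\omega_k}$... no: $\oint_{\rho_*\mathfrak a_k}\mathrm du=\overline{\omega_k}$, and $\oint_{-\mathfrak a_k}\mathrm du=-\omega_k$, so one would get $\omega_k=-\overline{\omega_k}$, i.e. $\omega_k\in\imath\Real$. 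That contradicts \eqref{ImRels}, so the correct statement must be $\rho_*\mathfrak a_k=+\mathfrak a_k$ after accounting for orientation subtleties, or the cut for $k\le n$ is the one that is \emph{reversed as a path but traversed the same way} — the sign bookkeeping here is exactly where I expect to be most careful. Once the signs are pinned down, each claimed relation follows: $\omega_k\in\Real$ for $k\le n$ and $\omega'_k\in\imath\Real$ for $k>n$ come from the cycles fixed (up to orientation) by $\rho$; the relations $\ReN\omega'_k=\tfrac12\omega_k$ and the telescoping expressions for $\imath\ImN\omega_j$ in \eqref{ReRels} come from writing $\rho_*\mathfrak b_k$ (resp. $\rho_*\mathfrak a_k$) as an explicit integer combination of basis cycles and taking real/imaginary parts of the resulting period identities.

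The third step is to translate the period relations into the geometric statement about the lattice being rhombic. A lattice $\Integer\omega+\Integer\omega'$ is rhombic precisely when it admits a pair of generators of equal modulus, equivalently when it is invariant under a reflection; the relation $\ReN\omega'_k=\tfrac12\omega_k$ with $\omega_k$ real says the lattice $\{\omega_k,\omega'_k\}$ is symmetric under reflection across the real axis composed with the half-period shift, which is the rhombic condition with generators $\omega'_k$ and $\omega_k-\omega'_k$ (these two have the same modulus since $\overline{\omega'_k}=\omega_k-\omega'_k$). Symmetrically, $\omega'_k\in\imath\Real$ together with the $\ImN\omega_k$ relations gives the rhombic lattice with generators $\omega_k$ and $\omega_k-2\imath\ImN\omega_k=\overline{\omega_k}$ in case 2). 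I would present case 1) in full and remark that case 2) is obtained by the same argument after a rotation $z\mapsto \imath z$ type relabelling (or directly, since the branch points for $k>n$ are conjugate across a vertical-like configuration rather than a horizontal one).

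\textbf{Main obstacle.} The genuine difficulty is purely combinatorial-topological: tracking how the anti-holomorphic involution $\rho$ acts on the specific Baker homology basis when the distinguished cut $(e_{2g+1},\infty)$ and the branch point $e_{2g+1}$ are themselves moved by $\rho$. This forces $\rho_*\mathfrak b_k$ to be a nontrivial $\Integer$-linear combination of $\mathfrak a$'s and $\mathfrak b$'s, and getting all the coefficients — including the boundary cases $k=n+1$ and $k=g$, which is why \eqref{ReRels} has three separate formulas — right is the crux. Everything else (the reality of the differentials, extracting real and imaginary parts, recognizing the rhombic normal form) is routine once that monodromy computation is in hand; I would do it by carefully following the orange monodromy path of fig.~\ref{cyclesOddCC} under $x\mapsto\bar x$ and reading off the new cycles, perhaps checking the answer in the genus-one and genus-two cases against Section~7.
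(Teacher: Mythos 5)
There is a genuine gap: what you have written is a plan, not a proof, and the one concrete computation you attempted inside the plan came out with the wrong sign. The entire content of the theorem is the specific list of linear relations \eqref{ImRels}--\eqref{ReRels}; your argument defers all of them to ``writing $\rho_*\mathfrak a_k$ and $\rho_*\mathfrak b_k$ as explicit integer combinations of basis cycles,'' which you never do, and which you yourself identify as the crux. Worse, your trial computation for $k\leqslant n$ yields $\omega_k=-\overline{\omega_k}$, i.e.\ $\omega_k\in\imath\Real$, contradicting \eqref{ImRels}, and you resolve this only by asserting that ``the correct statement must be $\rho_*\mathfrak a_k=+\mathfrak a_k$ after accounting for orientation subtleties.'' That assertion cannot be waved through: the sign depends on whether the anti-holomorphic involution $(x,y)\mapsto(\bar x,\bar y)$ preserves or exchanges the two sheets near the cut in question, and this differs on the two sides of $e_{2n+1}=0$ because $\Lambda(x)<0$ (so $y$ is purely imaginary and $\bar y=-y$, exchanging sheets) for real $x<0$, while $\Lambda(x)>0$ (so $\bar y=y$, fixing sheets) for real $x>0$. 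Until that dichotomy is tracked through every $\mathfrak a_k$ and every $\mathfrak b_k$ — including the boundary cases $k=n+1$ and $k=g$ that produce the three separate formulas in \eqref{ReRels} — nothing in the statement has been established.

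For comparison, the paper avoids the homology bookkeeping entirely: it writes each half-period as a sum of Abel integrals between branch points, splits each integral into a piece along the real axis (whose value is real or purely imaginary according to the sign of $\Lambda$ there) and a pair of conjugate pieces joining $\ReN e_{2k}$ to $e_{2k}$ and to $\bar e_{2k}$ (whose real, resp.\ imaginary, parts are then identified with quarter- and half-periods already computed), and assembles \eqref{ImRels} and \eqref{ReRels} by telescoping these contributions. Your involution-on-homology strategy is a legitimate alternative route and, if carried out, would package the same information more structurally; but as it stands the decisive monodromy computation is missing and the partial evidence you do present points the wrong way.
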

\begin{proof}
Since the only real branch point is $e_{2n+1}\,{=}\,0$,
we have $\Lambda(x)\,{<}\,0$ for $x\,{<}\,e_{2n+1}$
and $\Lambda(x)\,{>}\,0$ for $x\,{>}\,e_{2n+1}$.

(I) For $k\,{=}\,1$, \ldots, $n$,  we have $\tfrac{1}{2} \omega_k \,{=}\, \mathcal{A} (e_{2k}) \,{-}\,  \mathcal{A} (\bar{e}_{2k})$,
where $\mathcal{A}$ is computed along the monodromy path 
that goes below and counter-clockwise around all cuts, as explained in \cite[Sect.\,3]{BerCompWP2024}.
Expanding $\mathcal{A} (e_{2k})$ and $\mathcal{A} (\bar{e}_{2k})$  in a Taylor series about $\ReN e_{2k}$, 
and taking into account that the first kind integral along the real axis from infinity to $\ReN e_{2k}$  
is purely imaginary (since $\Lambda(x)\,{<}\,0$ to the left of $\ReN e_{2k}$), 
we find that $\ImN \mathcal{A} (\bar{e}_{2k}) \,{=}\, \ImN \mathcal{A} (e_{2k})$.
Thus,  $\omega_k$ is real.   

(II) Next, we consider
$\tfrac{1}{2}  \omega'_k \,{=}\, {-}\sum_{i=k}^g \big(\mathcal{A}(\bar{e}_{2i}) \,{-}\,  \mathcal{A} (e_{2i})\big)$
for $k\,{=}\,n\,{+}\,1$, \ldots, $g$. Since $\Lambda(x) \,{>}\, 0$ to the right of $\ReN e_{2k}$, 
we find that  $\ReN \mathcal{A} (\bar{e}_{2k}) \,{=}\,  \ReN \mathcal{A} (e_{2k})$, 
which implies $\omega'_k \,{\in}\, \imath \Real$.

(III) For $k\,{=}\,1$, \ldots, $n$, we have
\begin{equation}\label{FKprim}
 \tfrac{1}{2}  \omega'_k = - \sum_{j=k}^n \big(\mathcal{A}(e_{2j+1}) -  \mathcal{A}( e_{2j}) \big)
- \sum_{j=n+1}^g \big( \mathcal{A}(\bar{e}_{2j}) -  \mathcal{A} (e_{2j})\big).
\end{equation}
The second sum  is purely imaginary, as  shown above. 
Each term of the first sum is computed as follows:
\begin{multline}\label{eeInt}
\mathcal{A}(e_{2j+1}) -  \mathcal{A} (e_{2j}) =
 \big(\mathcal{A} ( e_{2j+1}) -  \mathcal{A} (\ReN e_{2j+1})  \big) \\
 + \big( \mathcal{A} (\ReN e_{2j+1})  - \mathcal{A} (\ReN e_{2j}) \big)
 + \big(\mathcal{A} (\ReN e_{2j}) -  \mathcal{A} (e_{2j}) \big).
 \end{multline}
The middle term on the right hand side is purely imaginary. 
Taking into account (I), we find
\begin{align*}
&\ReN \big(\mathcal{A} ( e_{2j+1}) -  \mathcal{A} (\ReN e_{2j+1}) \big) = - \tfrac{1}{4}  \omega_{j+1},& 
& j=k,\dots, n-1,\\
&\ReN \big(\mathcal{A} (\ReN e_{2j}) -  \mathcal{A} (e_{2j}) \big) = - \tfrac{1}{4}  \omega_{j},& & j=k,\dots, n,
\end{align*}
and $\mathcal{A} ( e_{2n+1}) \,{-}\,  \mathcal{A} (\ReN e_{2n+1})  \,{=}\, 0$ due to $e_{2n+1}\,{=}\,0$. 
Thus,  $\ReN \omega'_k \,{=}\, \frac{1}{2} \ReN \omega_k$, cf.\;\eqref{ImRels}.

(IV) Finally, for $k\,{=}\,n\,{+}\,1$, \ldots, $g$, we have 
$\tfrac{1}{2} \omega_k \,{=}\,  \mathcal{A} (e_{2k}) \,{-}\, \mathcal{A} (e_{2k-1})$,
which is computed similarly to \eqref{eeInt}. Namely,
\begin{multline*}
 \mathcal{A} (e_{2k}) - \mathcal{A} (e_{2k-1}) =
 \big(\mathcal{A} ( e_{2k}) -  \mathcal{A} (\ReN e_{2k})  \big) 
 + \big( \mathcal{A} (\ReN e_{2k})  - \mathcal{A} (\ReN e_{2k-1}) \big) \\
 + \big(\mathcal{A} (\ReN e_{2k-1}) -  \mathcal{A} (e_{2k-1}) \big).
 \end{multline*}
 The middle term on the right hand side is real,
since $\Lambda(x)\,{>}\,0$ on the interval $[\ReN e_{2k-1},\ReN e_{2k}]$.
The first kind integral  $\mathcal{A} (\ReN e_{2k-1}) \,{-}\,  \mathcal{A} (e_{2k-1})$ vanishes at $k\,{=}\,n\,{+}\,1$;
and for $k\,{>}\,n\,{+}\,1$, the integral is taken along the right edge of the cut $[e_{2k-2},e_{2k-1}]$, 
that is, in the negative direction.
Then,  $\mathcal{A} ( e_{2k}) -  \mathcal{A} (\ReN e_{2k})$ 
is taken along the left edge of the cut $[e_{2k},e_{2k+1}]$,
 which coincides with the positive direction of the monodromy path.
 Taking into account (II), we find
\begin{align*}
&\ImN \big(\mathcal{A} ( e_{2k}) -  \mathcal{A} (\ReN e_{2k}) \big)  = 
 -  \tfrac{1}{2} \ImN \big(\omega'_k -  \omega'_{k+1}\big),\quad j=n+1,\dots, g-1,\\
&\ImN \big(\mathcal{A} ( e_{2g}) -  \mathcal{A} (\ReN e_{2g}) \big)  
=  - \tfrac{1}{2} \ImN \omega'_g,\\
&\ImN \big( \mathcal{A} (\ReN e_{2k-1})  - \mathcal{A} ( e_{2k-1}) \big) 
= \tfrac{1}{2} \ImN \big(\omega'_{k-1} -  \omega'_{k}\big),\quad j=n+2,\dots, g.
\end{align*}
This implies  \eqref{ReRels}.

The relations \eqref{PerRels} guarantee that the period lattice formed by each pair of periods $\omega_{k}$,
$\omega'_{k}$ is rhombic, with the indicated generators.
\end{proof}

Let `$\sim$' denote the congruence relation on $\Jac(\mathcal{V})$.
\begin{cor}
Under the conditions of Theorem~\ref{P:ReImPeriods}, we have
\begin{subequations}\label{ReImOmega0}
\begin{align}
& \ReN \Omega \sim 0,\quad \text{if} \quad \Omega = \tfrac{1}{2}\omega_k,\quad k=1,\,\dots,\,n,  \label{ReOmega0} \\
& \ImN \Omega \sim 0,\quad \text{if} \quad \Omega = \tfrac{1}{2}\omega'_k, \quad k=n+1,\,\dots,\,g.  \label{ImOmega0} 
\end{align}
\end{subequations}
\end{cor}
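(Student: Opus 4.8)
The plan is to read off the corollary directly from the period relations established in Theorem~\ref{P:ReImPeriods}, by noting that the two generators of each rhombic lattice, added together, produce a purely real or purely imaginary vector which is therefore congruent to $0$ modulo the appropriate lattice direction. Concretely, for $k=1$, \ldots, $n$ the lattice $\{\omega_k,\omega_k'\}$ has generators $\omega_k'$ and $\omega_k-\omega_k'$ by part~(1) of the theorem; their sum is $\omega_k$, which is real by \eqref{ImRels}. Hence $\tfrac12\omega_k$ differs from a real number by a half-lattice vector — but more to the point, I want to show $\ReN\bigl(\tfrac12\omega_k\bigr)\sim 0$. Since $\ReN(\tfrac12\omega_k')=\tfrac14\omega_k$ and $\omega_k$ itself is real, the vector $\tfrac12\omega_k$ is already real, so $\ImN(\tfrac12\omega_k)=0$; I then need to observe that a real half-period is congruent, under the lattice generated by the \emph{real} generator $\omega_k$ (equivalently $2\cdot\tfrac12\omega_k$) together with the imaginary part of the lattice, to something whose real part is $0$.

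Let me restate the mechanism cleanly, since that is really the whole content. Fix index $k\le n$. By Theorem~\ref{P:ReImPeriods}(1), $\omega_k\in\Real$ and $\ReN\omega_k'=\tfrac12\omega_k$, so $\omega_k'=\tfrac12\omega_k+\imath\,\ImN\omega_k'$ with $\ImN\omega_k'\in\Real$. Now take the half-period $\Omega=\tfrac12\omega_k\in\Real$. Subtracting the full period $\omega_k'$ (a lattice vector) and adding back nothing, we get $\Omega-\omega_k'=-\imath\,\ImN\omega_k'\in\imath\Real$, i.e. $\Omega\sim -\imath\,\ImN\omega_k'$, a \emph{purely imaginary} vector. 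Therefore $\ReN\Omega\sim 0$, which is exactly \eqref{ReOmega0}. The dual computation handles \eqref{ImOmega0}: for $k=n+1$, \ldots, $g$, Theorem~\ref{P:ReImPeriods}(2) gives $\omega_k'\in\imath\Real$, so $\Omega=\tfrac12\omega_k'$ is purely imaginary, hence $\ReN\Omega=0\sim 0$ trivially — but the claimed statement is $\ImN\Omega\sim 0$, so I instead subtract the appropriate real period: since $\omega_k\in\Real$ by the same part of the theorem (the generators being $\omega_k$ and $\omega_k-2\imath\ImN\omega_k$, whose average $\omega_k-\imath\ImN\omega_k$ need not be real, but $\omega_k$ is), we have $\Omega-\omega_k\in$ a translate whose imaginary part equals $\ImN\Omega$; one checks from \eqref{ReRels} that $\tfrac12\omega_k'$ minus a suitable integer combination of the $\omega_j$, $j\ge n+1$, lands on a real vector, giving $\ImN\Omega\sim 0$.

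So the proof I would write is short: invoke Theorem~\ref{P:ReImPeriods}, and in each of the two cases exhibit the explicit lattice vector that, subtracted from the half-period in question, leaves a vector lying entirely along the other coordinate axis (imaginary for case~(1), real for case~(2)). The main — and really only — obstacle is bookkeeping in case~(2): the relations \eqref{ReRels} are telescoping, expressing each $\imath\ImN\omega_j$ as a combination of consecutive differences $\omega'_{j-1}-\omega'_j$, so to see that $\tfrac12\omega_k'$ is congruent to a real vector one must sum these relations over $j=k$, \ldots, $g$ and watch the telescoping collapse; this is a routine but slightly fiddly finite sum, and I would present it as a one-line displayed identity rather than grinding through indices. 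Case~(1) is immediate. I expect the whole corollary proof to be four or five lines once the summation in case~(2) is organized.

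\begin{proof}
Both congruences follow at once from Theorem~\ref{P:ReImPeriods}.

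For \eqref{ReOmega0}, fix $k\in\{1,\dots,n\}$ and set $\Omega=\tfrac12\omega_k$. By \eqref{ImRels} we have $\omega_k\in\Real$ and $\ReN\omega'_k=\tfrac12\omega_k$, hence $\omega'_k-\Omega=\imath\,\ImN\omega'_k\in\imath\Real$. Since $\omega'_k$ is a period, $\Omega\sim\Omega-\omega'_k=-\imath\,\ImN\omega'_k$, which is purely imaginary; therefore $\ReN\Omega\sim 0$.

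For \eqref{ImOmega0}, fix $k\in\{n+1,\dots,g\}$ and set $\Omega=\tfrac12\omega'_k$. Summing the relations \eqref{ReRels} over $j=k$, \ldots, $g$ the consecutive differences telescope, giving
\begin{equation*}
\sum_{j=k}^{g}\imath\,\ImN\omega_j = -\tfrac12\,\omega'_k,
\end{equation*}
so that $\Omega=\tfrac12\omega'_k=-\sum_{j=k}^{g}\imath\,\ImN\omega_j$. On the other hand, each $\omega_j$ with $j\ge n+1$ is real by part~(2) of Theorem~\ref{P:ReImPeriods}, hence a real period, and
\begin{equation*}
\Omega-\sum_{j=k}^{g}\omega_j=-\sum_{j=k}^{g}\bigl(\omega_j+\imath\,\ImN\omega_j\bigr)\in\imath\Real
\end{equation*}
is purely imaginary; wait, rather: subtracting instead $\sum_{j=k}^{g}\ReN\omega_j=\sum_{j=k}^{g}\omega_j$ (a real lattice vector) from $2\Omega$ and halving, one sees $\Omega$ is congruent to a vector whose imaginary part vanishes. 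Concretely, $\Omega+\sum_{j=k}^{g}\omega_j=\sum_{j=k}^{g}\bigl(\omega_j-\imath\,\ImN\omega_j\bigr)$ has the same imaginary part as $\Omega$ and real part $\sum_{j=k}^{g}\omega_j\in\Real$; since the $\omega_j$ are periods, $\ImN\Omega\sim 0$.
\end{proof}
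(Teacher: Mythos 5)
Your treatment of \eqref{ReOmega0} is correct and is exactly the paper's one-line argument: $\tfrac12\omega_k=\omega'_k-\imath\,\ImN\omega'_k\sim-\imath\,\ImN\omega'_k$. The proof of \eqref{ImOmega0}, however, rests on a false identity. You claim that summing \eqref{ReRels} over $j=k,\dots,g$ telescopes to $\sum_{j=k}^{g}\imath\,\ImN\omega_j=-\tfrac12\omega'_k$. It does not: each relation in \eqref{ReRels} involves \emph{three} consecutive primed periods ($\imath\,\ImN\omega_j=\tfrac12\omega'_{j-1}-\omega'_j+\tfrac12\omega'_{j+1}$ for $j\geqslant n+2$), so a consecutive sum leaves two boundary half-periods, not one. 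Concretely, for $k\geqslant n+2$ the sum equals $\tfrac12(\omega'_{k-1}-\omega'_k)-\tfrac12\omega'_g$, and for $k=n+1$ it equals $-\tfrac12\omega'_g$ (not $-\tfrac12\omega'_{n+1}$). Already in genus $2$ with $n=0$ one has $\imath\,\ImN\omega_2=\tfrac12\omega'_1-\omega'_2\neq-\tfrac12\omega'_2$. A second error compounds this: you assert that "each $\omega_j$ with $j\geqslant n+1$ is real by part (2) of the theorem." Part (2) says $\omega'_j\in\imath\Real$; the unprimed $\omega_j$ for $j\geqslant n+1$ has a nonzero imaginary part — indeed \eqref{ReRels} is precisely a formula for that imaginary part, and would be vacuous if $\omega_j$ were real. (This error is not fatal on its own, since for the congruence you only need $\omega_j$ to be a lattice vector, but it signals the misreading.)

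The repair requires the recursive structure that the paper actually uses. Rewriting \eqref{ReRels} gives $\tfrac12\omega'_{g-1}=\omega'_g+\imath\,\ImN\omega_g$ and, for $j=n+1,\dots,g-2$, $\tfrac12\omega'_j=\omega'_{j+1}+\imath\,\ImN\omega_{j+1}-\tfrac12\omega'_{j+2}$, together with a relation fixing $\tfrac12\omega'_g$ as an integer combination of the $\imath\,\ImN\omega_j$ modulo full periods. Since $\imath\,\ImN\omega_j=\omega_j-\ReN\omega_j\sim-\ReN\omega_j$ is congruent to a real vector, downward induction on $k$ (in steps of $2$, because the recurrence is second order) shows each $\tfrac12\omega'_k$ is congruent to an integer combination of real vectors, i.e.\ $\ImN\bigl(\tfrac12\omega'_k\bigr)\sim0$. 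Your final "subtract a lattice vector and read off the imaginary part" step is fine once a correct expression for $\tfrac12\omega'_k$ is in hand; it is the single displayed identity feeding it that fails.
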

\begin{proof}
Evidently, \eqref{ReOmega0} follows from \eqref{ImRels}.
Next, recurrence relations for $\tfrac{1}{2} \omega'_k$, $k=n+1$, \ldots, $g$
are obtained from \eqref{ReRels}, namely
\begin{gather}
\begin{split}
& \tfrac{1}{2} \omega'_g = \imath \textstyle \sum_{j=n+1}^g \ImN \omega_j,\\
& \tfrac{1}{2} \omega'_{g-1} = \omega'_g + \imath \ImN \omega'_g,\\
& \tfrac{1}{2} \omega'_{j} = \omega'_{j+1} + \imath \ImN \omega_{j+1} - \tfrac{1}{2}  \omega'_{j+2},
\quad j=g-2,\, \dots,\, n+2,\\
& \tfrac{1}{2} \omega'_{n+1} = \tfrac{1}{2} \omega'_{n+2} - \imath \ImN \omega_{n+1} .
\end{split}
\end{gather}
which implies \eqref{ImOmega0}.
\end{proof}

Let $\JFr = \Complex^g$ be the vector space 
where $\Jac(\mathcal{V})$ is embedded.
Let $\JFr_{\ReN} \sim \Real^g$ be the span of the real axes of $\JFr$ over $\Real$,
and $\JFr_{\ImN} \sim \Real^g$ be the span of the imaginary axes of $\JFr$ over~$\Real$.
Then $\JFr = \JFr_{\ReN} \oplus  \JFr_{\ImN}$.  As seen from \eqref{SinGSol}, 
the domain of $\wp_{1,2N-1}$  is an affine subspace of one of the forms \eqref{AffSubSp}.
For brevity, we denote $\bm{C}\,{+}\,\JFr_{\ReN} = \{u_{\ReN}\,{=}\, s\,{+}\,\bm{C}$, 
$s\,{\in}\,\Real^g\}$ and $\bm{C}\,{+}\,\JFr_{\ImN} = \{u_{\ImN}\,{=}\, \imath s\,{+}\,\bm{C}$, 
$s\,{\in}\,\Real^g\}$. The affine subspaces $\bm{C}\,{+}\,\JFr_{\ReN}$ are parallel to $\JFr_{\ReN}$
and are used if $\sfb\,{\in}\, \Real$. The affine subspaces $\bm{C}\,{+}\,\JFr_{\ImN}$ are parallel to $\JFr_{\ImN}$
and are used if $\sfb\,{\in}\,\imath \Real$. Note that $\bm{C}$ is restricted to be a half-period.
Also, $\wp_{1,2g-1}$ is required to be  bounded. 

\begin{theo}\label{P:Cshift}
The $2^g$ half-periods obtained from the  divisors identified in Proposition~\ref{P:Radius}
lie within the affine subspace $\bm{K} \,{+}\, \JFr_{\ImN}$,
where
\begin{equation}\label{KDef}
\bm{K} = \sum_{i=0}^{[(g-1)/2]} \tfrac{1}{2}  \omega_{g-2i}  + \sum_{k=1}^g \tfrac{1}{2}  \omega'_k
= \mathcal{A} \Big(\sum_{i=1}^g e_{2i} \Big).
\end{equation}
Simultaneously, these $2^g$ half-periods are contained within the affine subspace $\bm{K} \,{+}\,  \JFr_{\ReN}$.
\end{theo}
\begin{proof}
With the cycles chosen as shown in fig.\,\ref{cyclesOddCC}, we have
the following correspondence between sets $\I$ of cardinality $1$ and half-periods:
\begin{gather}\label{HPchar}
\begin{split}
&\mathcal{A}(\{2k-1\}) \sim \tfrac{1}{2} \omega'_{k} + \sum_{i=1}^{k-1} \tfrac{1}{2}  \omega_i,\quad
\mathcal{A}(\{2k\}) \sim  \tfrac{1}{2} \omega'_k + \sum_{i=1}^k \tfrac{1}{2}  \omega_i,\quad k=1,\dots, g, \\
&\mathcal{A}(\{2g+1\}) \sim  \sum_{i=1}^g \tfrac{1}{2}  \omega_i.
\end{split}
\end{gather}
Eqs.\,\eqref{HPchar} imply
$$\{2,4,\dots,2g\} \sim  \sum_{i=0}^{[(g-1)/2]} \tfrac{1}{2}  \omega_{g-2i}  
+ \sum_{k=1}^g \tfrac{1}{2}  \omega'_k,$$
which proves \eqref{KDef}.
In the expressions above, $\{n\}$ represents the divisor $D_1\,{=}\,e_n$, 
which is composed of a single brach point $e_n$,
while $\{2,4,\dots,2g\}$ represents the divisor $D_K=\sum_{i=1}^g e_{2i}$.

Analyzing \eqref{HPchar} and taking into account \eqref{PerRels}, we find the following:
\begin{itemize}
\item if $k=1$, \ldots $n$, then
\begin{align*}
&\ReN \mathcal{A}(\{2k\}) \sim \ReN \mathcal{A}(\{2k-1\}) \sim \tfrac{1}{2} \ReN \omega'_k,& \\
&\ImN \mathcal{A}(\{2k\}) \sim \ImN \mathcal{A}(\{2k-1\}) \sim \tfrac{1}{2} \ImN \omega'_k,& 
\end{align*}
since $\tfrac{1}{2} \ReN \omega_k \sim \ReN \omega'_k \sim 0$,
and $\tfrac{1}{2} \ImN \omega_k =0$ for $k=1$, \ldots, $n$;

\smallskip
\item if $k=n+1$, \ldots $g$, then
\begin{align*}
&\ReN \mathcal{A}(\{2k\}) \sim \ReN \mathcal{A}(\{2k+1\}) \sim   
\tfrac{1}{2} \textstyle  \sum_{i=n+1}^{k} \ReN \omega_i,& \\
&\ImN \mathcal{A}(\{2k\}) \sim \ImN \mathcal{A}(\{2k+1\}) \sim   
\tfrac{1}{2} \textstyle  \sum_{i=n+1}^{k} \ImN \omega_i,&
\end{align*}
which follows from \eqref{PerRels}, and \eqref{ReImOmega0}.
\end{itemize}
This implies that any even number in the set $\I = \{2,4,\dots,2g\}$ 
can be replaced with its odd counterpart---namely $2i-1$ for $i=1$, \ldots $n$, 
or $2i+1$ for $i=n+1$, \ldots, $g$---and
 $\mathcal{A}(\I) $ remains unchanged and congruent to $\bm{K}$.
\end{proof}

\begin{rem}
Assuming that $\mathcal{V}$ is defined as in Proposition\;\ref{P:CCbp},
the $2^{2g}$ half-periods are distributed among $2^g$ affine subspaces $\bm{C} \,{+}\,\JFr_{\ImN}$.
The half-period shifts $\bm{C}$ are generated from
$\{\frac{1}{2}\omega_k \mid k\,{=}\,1,\, \dots,\,n\}\cup
\{\frac{1}{2}\ReN \omega_k \mid k\,{=}\,n\,{+}\,1,\, \dots,\, g\}$, and
each subspace contains $2^{g}$ half-periods. 
According to Theorem~\ref{P:Cshift},
the half-periods located within $\bm{K} \,{+}\, \JFr_{\ImN}$  are identified by Proposition~\ref{P:Radius}.
In contrast, the remaining subspaces $\bm{C} \,{+}\, \JFr_{\ImN}$
contain half-periods belonging to $\Sigma$, where the $\wp$-functions have singularities.

Similarly, the $2^{2g}$ half-periods also partition into $2^g$ 
affine subspaces $\bm{C} \,{+}\,\JFr_{\ReN}$.
In this case, the shifts $\bm{C}$ are generated from 
$\{\frac{1}{2} \ImN \omega'_k \mid k\,{=}\,1,\, \dots,\,n\}\cup 
\{\frac{1}{2} \omega'_k \mid k\,{=}\,n\,{+}\,1$, \ldots, $g\}$.
Each such subspace contains $2^{g}$ half-periods. As before, the half-periods within 
 $\bm{K} \,{+}\, \JFr_{\ReN}$  are identified by Proposition~\ref{P:Radius}, 
(see Theorem~\ref{P:Cshift}), while the
other subspaces $\bm{C} \,{+}\, \JFr_{\ReN}$ contain half-periods that belong to $\Sigma$.
\end{rem}
\begin{cor}
The $\wp$-functions associated with $\mathcal{V}$ are bounded on
the affine subspaces $\bm{K} + \JFr_{\ImN}$
and  $\bm{K} + \JFr_{\ReN}$, where $\bm{K}$ is defined as in \eqref{KDef}.
\end{cor}

\begin{rem}
Note that $\bm{K}$ serves as the vector of Riemann constants in not normalized coordinates
on $\Jac(\mathcal{V})$, cf.\;\eqref{Kchar}.
\end{rem}

In the sinh-Gordon hierarchy, $\wp_{1,3}(u)\,{\in}\,\Real$ and is bounded on $\bm{K} \,{+}\, \JFr_{\ReN}$
and $\bm{K} \,{+}\, \JFr_{\ImN}$ (see \cite[Propositions\;2 and 3]{BerKdV2024}).
At the same time, the finite-gap solution \eqref{SinhGSol} to the sinh-Gordon equation
has further restrictions: a singularity arises when the argument of the logarithm reaches $0$.
Therefore, we exclude curves $\mathcal{V}$ for which the associated $\wp_{1,2g-1}$
acquires both positive and negative values.

\begin{theo}\label{P:KshiftRe}
Assume $\mathcal{V}$ has only real branch points, with the zero
branch point $e_n$ at an arbitrary position, and let $\bm{K}$ be defined by \eqref{KDef}. 
The $2^g$ half-periods located (up to congruence)
within the affine subspace $\bm{K} \,{+}\, \JFr_{\ReN}$  
correspond to divisors consisting of one branch point from each of the $g$ pairs $\{e_{2i-1},e_{2i}\}_{i=1}^g$.
Similarly, the $2^g$ half-periods within
the affine subspace $\bm{K} \,{+}\, \JFr_{\ImN}$ correspond to divisors consisting of
one branch point from each of the $g$ pairs $\{e_{2i}, e_{2i+1}\}_{i=1}^g$.
\end{theo}
\begin{proof}
If all branch points of $\mathcal{V}$ are real, then $\omega_k \,{\in}\, \JFr_{\ReN}$ and 
$\omega'_k \,{\in}\, \JFr_{\ImN}$ for  $k\,{=}\, 1$, \ldots, $g$.
The affine subspace $\bm{K} +  \JFr_{\ReN}$ is characterized by 
$\ImN u_{\ReN} \,{=}\, \ImN \bm{K}\,{=}\,  \sum_{k=1}^g \tfrac{1}{2} \ImN  \omega'_k$.
In view of \eqref{HPchar}, we see that $\ImN \mathcal{A}(\{2k-1\}) \,{=}\, \ImN \mathcal{A}(\{2k\})$;
thus, one point from each pair $\{e_{2k-1},e_{2k}\}$ 
must be chosen to form a divisor that maps into $\bm{K} +  \JFr_{\ReN}$. 
Similarly, the subspace $\bm{K} +  \JFr_{\ImN}$ is characterized by
$\ReN u_{\ImN} \,{=}\, \ReN \bm{K}\,{=}\,  \sum_{i=0}^{[(g-1)/2]} \tfrac{1}{2}  \omega_{g-2i}$.
From \eqref{HPchar}, it follows that $\ReN \mathcal{A}(\{2k\}) \,{=}\, \ReN \mathcal{A}(\{2k+1\})$,
and therefore, one point from each pair $\{e_{2k}, e_{2k+1}\}$ is selected to compose
a  divisor mapping into $\bm{K} \,{+}\,  \JFr_{\ImN}$.
\end{proof}

\begin{cor}\label{C:ShGReal}
Let all branch points of $\mathcal{V}$ be real, and let $e_{n}\,{=}\,0$ for some
$n\,{\in}\,\{i\}_{i=1}^{2g+1}$. Then  $\wp_{1,2g-1}(u)\,{<}\,0$ for all
$u \,{\in}\, \bm{K} \,{+}\, \JFr_{\ReN}$ if and only if $n\,{=}\,2g\,{+}\,1$,
which means all non-zero branch points are negative.

Similarly, $\wp_{1,2g-1}(u)\,{>}\,0$
or $\wp_{1,2g-1}(u)\,{<}\,0$ for all
$u \,{\in}\, \bm{K} \,{+}\, \JFr_{\ImN}$ if and only if $n\,{=}\,0$,
which means all non-zero branch points are positive.
\end{cor}
\begin{proof}
Recall that $\wp_{1,2g-1}(u)\,{=}\, (-1)^{g-1}  \prod_{i=1}^{g} x_i$,
provided $u\,{=}\,\mathcal{A}(D)$ and $D$ is 
a divisor on $\mathcal{V}$ of the form $D=\sum_{i=1}^g (x_i,y_i)$.
Theorem\;\ref{P:KshiftRe} describes divisors corresponding to the half-periods from 
$\bm{K} \,{+}\, \JFr_{\ReN}$ as those containing
one branch point from each of the $g$ pairs $\{e_{2i-1},e_{2i}\}_{i=1}^g$.
Thus, if any branch point $e_k$ ($k=1$, \ldots, $2g$)
involved in constructing these divisors is zero,
then $\wp_{1,2g-1}$ vanishes on $\bm{K} \,{+}\, \JFr_{\ReN}$,
violating the required condition on $\wp_{1,2g-1}$.
Further, if the zero branch point is $e_{2g+1}$, this implies $e_k\,{<}\, e_{2g+1}$ for
$k\,{=}\,1$, \ldots, $2g$,  based on the ordering of branch points.
Consequently, $\wp_{1,2g-1}$ at all half-periods within  $\bm{K} \,{+}\, \JFr_{\ReN}$ has the same sign,
depending on the parity of $g$. The entire evolution of $\wp_{1,2g-1}$ within  $\bm{K} \,{+}\, \JFr_{\ReN}$
lies between the values of $\wp_{1,2g-1}$ at half periods, which serve as turn points.
Therefore, for all $u\,{\in}\,\bm{K} \,{+}\, \JFr_{\ReN}$ we have $\wp_{1,2g-1}(u)\,{<}\,0$.

Similar considerations show that for all $u\,{\in}\,\bm{K} \,{+}\, \JFr_{\ImN}$
$\wp_{1,2g-1}(u)\,{>}\,0$ if $g$ is odd, and $\wp_{1,2g-1}(u)\,{<}\,0$ if $g$ is even.

\end{proof}

\subsection{Real-valued $\wp$-functions}
Recall that the function \eqref{KdVSolRealCond}
satisfies the constraint \eqref{hm1Rel}, which  in
 the sine-Gordon hierarchy is equivalent to
\begin{equation}\label{AbsWPLambda4g}
|\wp_{1,2g-1}\big(\sfb (\x, \mathrm{c}_{3}, \dots, \mathrm{c}_{2g-3}, \rmt)^t  + \bm{C}\big)|^2 = \lambda_{4g},
\end{equation}
where $g\,{\equiv}\, N$.
Below, we prove that  \eqref{AbsWPLambda4g} holds if $\mathcal{V}$ is
as  described in Proposition~\ref{P:CCbp} and
 $\bm{C}\,{=}\,\bm{K}$ is defined by \eqref{KDef}. 
Recall that $\wp_{i,j}(s)$,  $\wp_{i,j,k}(s)$, and $\wp_{i,j}(\imath s)$ are real-valued as $s \,{\in}\, \Real^g$, and
$\wp_{i,j,k}(\imath s)$ have purely imaginary values.
Values of $\wp_{1,2i-1}(\bm{K})$ are symmetric functions in $\{e_{2i}\}_{i=1}^g$
 computed from \eqref{R2g},
and $\wp_{1,1,2i-1}(\bm{K})\,{=}\,0$, as follows from \eqref{R2g1}.
Let $u_\text{I} \equiv  \bm{K}$, and 
$u_\text{II} \,{=}\,  s$ if $\sfb\,{\in}\,\Real$ or
$u_\text{II} \,{=}\,  \imath s$ if $\sfb\,{\in}\,\imath \Real$. 

In this subsection, we use the addition law as presented in Appendix\;\ref{A:AddLaw}.
In particular, quantities $\{\nu_i\}$ are introduced by \eqref{NuIntro}.

\begin{lemma}\label{P:EtaEta}
Let  $\bm{K}$ be defined by \eqref{KDef} and $s \in \Real^g$. Then
\begin{equation}\label{etaSqRe}
\nu_{3g} \bar{\nu}_{3g} = (-1)^{g-1} \lambda_{4g} \wp_{1,2g-1}(s).
\end{equation}
\end{lemma}

\begin{proof}
By Cramer's rule, from the addition law \eqref{AddLawEqs},  we find
\begin{gather}\label{eta3gGen}
\nu_{3g} = \frac{\begin{vmatrix} \Cp_g(\bm{K}) & \Qp(\bm{K})  \\ \Cp_g (s) & \Qp(s) 
\end{vmatrix}}
{\begin{vmatrix} 1_g & \Qp(\bm{K}) \\
1_g & \Qp(s) \end{vmatrix}}
=  \frac{\begin{vmatrix} \Cp_g(\bm{K}) & \Pp (\bm{K}) & 0   \\ \Cp_g (s) &  \Pp(s) & \Rp(s)
\end{vmatrix}}
{\begin{vmatrix} 1_g & \Pp (\bm{K}) & 0  \\
1_g &  \Pp(s) & \Rp(s) \end{vmatrix}} = \frac{|\Np|}{|\Dp|}.
\end{gather}
where $\Cp_g(u)$ is the identity matrix with its last column replaced by ${-} \mathbf{q}(u)$,
while $\Np$ and $\Dp$~denote the matrices that arise in the numerator and denominator, respectively.
The matrix $\Qp(u)$  contains $[\frac{1}{2}g]$ columns linear in 
the functions $\wp_{1,1,2i-1}(u)$ (we denote this block by $\Rp(u)$),
and $g-[\frac{1}{2}g]$ columns containing only the functions $\wp_{1,2i-1}(u)$ (denoted by $\Pp(u)$).
Entries of $\Pp (\bm{K})$ and $\Cp_g(\bm{K})$ are expressed in terms
of elementary symmetric functions in $\{e_{2i}\}_{i=1}^g$, and
 $\Rp(\bm{K}) \,{=}\, 0$, as $\wp_{1,1,2i-1}$ vanish at half-periods. 

Note that the numerator and the denominator in \eqref{eta3gGen}
 contain the same block $\Rp(s)$ of size $g\times [\frac{1}{2}g]$.
 Let $[\Rp]_{I}$ denote a principal minor of $\Rp(s)$ of order $[\frac{1}{2}g]$, 
 where $I$ shows the selected rows and runs over all combinations of $[\frac{1}{2}g]$ integers not greater than $g$. 
 Each block $[\Rp]_{I}$ contains  full columns of $\Rp$.
 Let  $(\adj \Np)_{I}$  and $(\adj \Dp)_{I}$ be the cofactors of $\Np$ and $\Dp$, 
 respectively,
 corresponding to the submatrix $[\Rp]_{I}$.
Both the numerator and  denominator are computed by means of the Cauchy---Binet formula:
 \begin{gather}\label{NDexpans}
 |\Np| = \sum_{I} \big(\adj \Np\big)_{I}[\Rp]_{I}, \qquad
 |\Dp| = \sum_{I} \big(\adj \Dp\big)_{I}[\Rp]_{I}.
 \end{gather}

Next, we compute
\begin{gather}
\nu_{3g} \bar{\nu}_{3g}  = \frac{|\Np| |\bar{\Np}|}{|\Dp| |\bar{\Dp}|},
\end{gather}
where the bar denotes the complex conjugate, which affects only entries of 
$\Pp (\bm{K})$ and $\Cp_g(\bm{K})$ by replacing values $e_i$  with their complex conjugates $\bar{e}_i$. 
Note that $\Pp (s)$, $\Cp_g(s)$, and $\Rp(s)$ are real-valued.

Note that $|\Np| |\bar{\Np}|$ and $|\Dp| |\bar{\Dp}|$ are
homogeneous in $\wp_{1,1,2i-1}$ of degrees $2[\frac{1}{2}(g+1)]$ and $2[\frac{1}{2}g]$, respectively.
Next, we apply the fundamental cubic relations, see \cite[Theorem 3.2]{belHKF},  
(the argument $s$ of $\wp$-functions is omitted for simplicity):
\begin{multline}\label{WP3IndSqRel}
\wp_{1,1,2i-1} \wp_{1,1,2j-1} = 4(\wp_{1,1}+\lambda_2) \wp_{1,2i-1} \wp_{1,2j-1} \\
+ 4 (\wp_{1,2i-1} \wp_{1,2j+1} + \wp_{1,2i+1} \wp_{1,2j-1}) + 4 \wp_{2i+1,2j+1} \\
- 2 (\wp_{3,2i-1} \wp_{1,2j-1} + \wp_{1,2i-1} \wp_{3,2j-1}) - 2 (\wp_{2i-1,2j+3} + \wp_{2i+3,2j-1}) \\
+ \lambda_4 (\wp_{1,2j-1} \delta_{1,i}  + \wp_{1,2i-1}\delta_{1,j} )
+ 4 \lambda_{2+4i} \delta_{i,j} + 2 (\lambda_{4i} \delta_{i-1,j} + \lambda_{4j} \delta_{i,j-1}).
\end{multline}
This eliminates $\wp_{1,1,2i-1}$ from the expression for $\nu_{3g} \bar{\nu}_{3g}$
and leads to the identity
\begin{equation}\label{NNDDRel}
 |\Np| |\bar{\Np}| + (-1)^g \lambda_{4g} \wp_{1,2g-1}(s) |\Dp| |\bar{\Dp}| = 0,
\end{equation}
which proves \eqref{eta3gGen}. The direct computation of \eqref{NNDDRel} for genera greater than $2$ uses
the identities that define the Kummer variety of $\mathcal{V}$, 
see \cite[\S\,4.1]{belHKF} and \cite[Sect.\,5]{BerWPFF2025}.
Computations are done in genera $1$,
$2$, $3$, see Appendix~\ref{A1}. 
The identity \eqref{NNDDRel} holds in higher genera as well.
\end{proof}

\begin{lemma}\label{P:EtaEtaIm}
Let  $\bm{K}$ be defined by \eqref{KDef}, and let $s \in \Real^g$. Then
\begin{equation}\label{etaSqIm}
\nu_{3g} \bar{\nu}_{3g} = (-1)^{g-1} \lambda_{4g}  \wp_{1,2g-1}(\imath s).
\end{equation}
\end{lemma}
\begin{proof}
The statement of the lemma follows immediately from the identity
\begin{equation}\label{NNDDRelIm}
 |\Np| |\bar{\Np}| + (-1)^g \lambda_{4g} \wp_{1,2g-1}(\imath s) |\Dp| |\bar{\Dp}| = 0,
\end{equation}
obtained in the same way as  \eqref{NNDDRel}, but with the argument $\imath s$ instead of $s$.
Recall that $\wp_{i,j}(\imath s)$ are even functions, and thus real-valued;
 $\wp_{i,j,k}(\imath s)$ are odd functions and are purely imaginary.
All terms in the fundamental cubic relations \eqref{WP3IndSqRel} remain real-valued
after the substitution $s \mapsto \imath s$, as do the identities defining the Kummer variety of $\mathcal{V}$.
The proof of Lemma~\ref{P:EtaEta} applies to Lemma~\ref{P:EtaEtaIm}.
\end{proof}

\begin{theo}
Let  $\bm{K}$ be defined by \eqref{KDef}, and let $s \in \Real^g$. Then
\begin{gather}\label{Abswp12gm1Sq}
\begin{split}
&| \wp_{1,2g-1}\big(s + \bm{K}\big)  |^2  = \lambda_{4g},\\
&| \wp_{1,2g-1}\big(\imath s + \bm{K}\big)  |^2  = \lambda_{4g}.
\end{split}
\end{gather}
\end{theo}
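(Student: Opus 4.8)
The plan is to read \eqref{Abswp12gm1Sq} off the hyperelliptic addition law \eqref{wp12gm1} once Lemmas~\ref{P:EtaEta} and~\ref{P:EtaEtaIm} are in hand; those lemmas carry the genuine content, and what is left is a short computation with moduli. For the first equality take $u_{\text{I}}=\bm{C}$ and $u_{\text{II}}=s$, so $u_{\text{III}}=-\bm{C}-s$; since every $\wp$-function is even, $\wp_{1,2g-1}(s+\bm{C})=\wp_{1,2g-1}(u_{\text{III}})$. Because $\lambda_{4g+2}=0$ in the sine(sinh)-Gordon hierarchy, \eqref{wp12gm1} reads
\begin{equation*}
\wp_{1,2g-1}(s+\bm{C})=(-1)^{g-1}\frac{\nu_{3g}^{2}}{\wp_{1,2g-1}(\bm{C})\,\wp_{1,2g-1}(s)},
\end{equation*}
with $\nu_{3g}$ the coefficient produced by Cramer's rule \eqref{eta3gGen} for this same addition; for the second equality one repeats everything with $u_{\text{II}}=\imath s$, $u_{\text{III}}=-\bm{C}-\imath s$.

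Before squaring the modulus I would fix three points. First, $\bar\nu_{3g}$ as used in the lemmas --- the value of $\nu_{3g}$ after $e_i\mapsto\bar e_i$ --- is literally the complex conjugate of the number $\nu_{3g}$, since the remaining data in \eqref{eta3gGen}, i.e.\ $\Pp(s)$, $\Cp_g(s)$, $\Rp(s)$, are real and $\nu_{3g}$ is a rational function of the entries; hence $\nu_{3g}\bar\nu_{3g}=|\nu_{3g}|^{2}$. Second, $\wp_{1,2g-1}(s)$ is real for $s\in\Real^{g}$ (and so is $\wp_{1,2g-1}(\imath s)$, being the value of an even function), so $|\wp_{1,2g-1}(s)|^{2}=\wp_{1,2g-1}(s)^{2}$. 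Third, by \eqref{R2g} the polynomial $\mathcal{R}_{2g}(x;\bm{C})$ factors over the divisor $\{e_{2i}\}_{i=1}^{g}$ whose Abel image is $\bm{C}$ by \eqref{CDef}, so $\wp_{1,2g-1}(\bm{C})=(-1)^{g-1}\prod_{i=1}^{g}e_{2i}$ and, by Proposition~\ref{P:CCbp} and \eqref{SqLambda4g}, $|\wp_{1,2g-1}(\bm{C})|^{2}=\prod_{i=1}^{g}|e_{2i}|^{2}=\lambda_{4g}$ --- this is exactly \eqref{HPwp}.

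Assembling: $|\wp_{1,2g-1}(s+\bm{C})|^{2}=|\nu_{3g}|^{4}\big/\big(|\wp_{1,2g-1}(\bm{C})|^{2}\,\wp_{1,2g-1}(s)^{2}\big)$, and Lemma~\ref{P:EtaEta} rewrites the numerator as $(\nu_{3g}\bar\nu_{3g})^{2}=\lambda_{4g}^{2}\,\wp_{1,2g-1}(s)^{2}$, so the quotient collapses to $\lambda_{4g}^{2}/\lambda_{4g}=\lambda_{4g}$; the $\imath s$ case is word for word the same, now invoking Lemma~\ref{P:EtaEtaIm} and the reality of $\wp_{1,2g-1}(\imath s)$. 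The single delicate point is the zero locus of $\wp_{1,2g-1}(s)$ (resp.\ $\wp_{1,2g-1}(\imath s)$), where the right-hand side of \eqref{wp12gm1} is $0/0$: there $\nu_{3g}$ vanishes too by the lemmas, and one finishes by passing to the limit, or simply by observing that \eqref{Abswp12gm1Sq} is an identity of functions real-analytic on a dense subset of $\Real^{g}$. Thus the real difficulty was already absorbed into Lemmas~\ref{P:EtaEta}--\ref{P:EtaEtaIm} (the elimination of the third-index $\wp$'s via the fundamental cubic relations \eqref{WP3IndSqRel} and the Kummer-variety identities), and the present step reduces to bookkeeping, the only subtleties being the identification $\bar\nu_{3g}=\overline{\nu_{3g}}$ and the normalization $|\wp_{1,2g-1}(\bm{C})|^{2}=\lambda_{4g}$.
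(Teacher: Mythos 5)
Your proposal is correct and follows essentially the same route as the paper: apply the addition law \eqref{wp12gm1} with $u_{\text{I}}=\bm{C}$, $u_{\text{II}}=s$ (resp.\ $\imath s$) and $\lambda_{4g+2}=0$, use $\wp_{1,2g-1}(\bm{C})\wp_{1,2g-1}(\bar{\bm{C}})=\lambda_{4g}$, and then invoke Lemmas~\ref{P:EtaEta} and~\ref{P:EtaEtaIm} to collapse $(\nu_{3g}\bar\nu_{3g})^2/\big(\lambda_{4g}\wp_{1,2g-1}^2\big)$ to $\lambda_{4g}$. Your explicit remarks on $\bar\nu_{3g}=\overline{\nu_{3g}}$ and the continuity argument at zeros of $\wp_{1,2g-1}(s)$ are sensible additions that the paper leaves implicit.
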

\begin{proof}
Using \eqref{wp12gm1} with $- u_{\text{III}} \,{=}\, s \,{+}\, \bm{K}$,  $u_{\text{I}} \,{=}\, \bm{K}$,
$u_{\text{II}} \,{=}\, s$, and $\lambda_{4g+2} \,{\equiv}\, 0$, and
taking into account that $\wp_{1,2g-1}(\bm{K}) \,{=}\, \prod_{i=1}^g e_{2i}$,  
which implies  $\wp_{1,2g-1}(\bm{K}) \wp_{1,2g-1}(\bar{\bm{K}}) \,{=}\, \lambda_{4g}$, cf.\;\eqref{Lambda4g}, we find
\begin{gather*}
| \wp_{1,2g-1}\big(s + \bm{K}\big)  |^2 = 
\frac{(\nu_{3g} \bar{\nu}_{3g} )^2 }{\lambda_{4g} \wp_{1,2g-1}^2(s)}.
\end{gather*}
Similarly, with $- u_{\text{III}} = \imath s + \bm{K}$,  $u_{\text{I}} = \bm{K}$, and
$u_{\text{II}} = \imath s$, we get
\begin{gather*}
| \wp_{1,2g-1}\big(\imath s + \bm{K}\big)  |^2 = 
\frac{(\nu_{3g} \bar{\nu}_{3g} )^2 }{\lambda_{4g} \wp_{1,2g-1}^2(\imath s)}. 
\end{gather*}
Then, applying  \eqref{etaSqRe} and \eqref{etaSqIm}, respectively, we immediately obtain \eqref{Abswp12gm1Sq}.
\end{proof}

\subsection{Reality conditions summarized}\label{ss:RCsummary}

\begin{theo}\label{T:RealCondSinhG}
A real-valued and smooth finite-gap solution to the sinh-Gordon equation exists
if the corresponding spectral curve  has the following properties:
two  branch points are fixed at $(0,0)$ and infinity, while all other branch points are either all real
and  negative (for the case $\sfb\,{\in}\,\Real$) or all real and positive (for the case $\sfb\,{\in}\,\imath \Real$).
The finite-gap solution is given by \eqref{SinhGSol} with $\bm{C}\,{=}\,\bm{K}$ (defined by \eqref{KDef}).
\end{theo}
 
Based on the above analysis of half-period shifts, we have
\begin{theo}\label{T:RealCondSinG}
Among the half-period shifts $\bm{C}$, the two finite-gap solutions \eqref{SinGSol} 
 to the sine-Gordon equation exist if $\bm{C}\,{=}\,\bm{K}$ (defined by \eqref{KDef}) and
 the corresponding spectral curve  has the following properties:
two  branch points are fixed at $(0,0)$ and infinity, 
with all other branch points occurring in complex conjugate pairs.
\end{theo}
At the same time, 
in the presence of real non-zero branch points,  solutions
with quarter-period shifts $\bm{C}$ exist in the sine-Gordon hierarchy,
as follows from \cite{DN1982, EF1985}. 
Each pair of real branch points gives rise to two solutions with quarter-period shifts,
see \cite[Eq.\,(II.15b)]{EF1985}. Real-valued solutions exist only if 
all real branch points are  negative (the case $\sfb\,{\in}\,\imath \Real$), 
or all are positive (the case $\sfb\,{\in}\,\Real$). 
Note that such solutions are derived from one of the formulas in \eqref{SinGSol}.
On the contrary, in the case of $g$ pairs of complex conjugate branch points,
the both formulas in \eqref{SinGSol} produce solutions 
with the half-period shift $\bm{C}\,{=}\,\bm{K}$.

 \begin{rem}\label{R:Invol}
 By taking into account the involution $\sfb \,{\mapsto}\, \imath \sfb$, $e_i \,{\mapsto}\,{-}e_i$ 
 on the spaces of solutions
 to the sine(sinh)-Gordon equations, we can restrict our consideration to the 
 $\mathfrak{sl}(2,\Real)$ and $\mathfrak{su}(2)$ cases,  
 aligning with the requirements in the literature \cite{ForMcL1982, DN1982, EF1985}.
Under this restriction, the two solutions in the sine-Gordon hierarchy---which arise 
when the spectral curve possesses the maximum number of 
complex conjugate branch point pairs---belong to two different spectral curves connected by the involution.
 \end{rem}

\subsection{Numerical computation of non-linear waves}\label{s:NLW}
Using $\wp$-functions, we achieve an effective computation of 
quasi-periodic finite-gap solutions to integrable systems.

The computations presented below were performed in  Wolfram Mathematica 12.
Integrals between branch points are computed using \texttt{NIntegrate}
with a \texttt{WorkingPrecision} of at least $30$.
The canonical curve  \eqref{V22g1Eq} is defined through its parameters $\lambda_k$,
and the branch points are computed via \texttt{NSolve},
using the same \texttt{WorkingPrecision}.

The $\wp$-functions are computed by the formulas:
\begin{gather}\label{WPdefComp}
\begin{split}
&\wp_{i,j}(u)  = \varkappa_{i,j} - \frac{\partial^2}{\partial u_i \partial u_j } 
\log \theta[K](\omega^{-1} u; \omega^{-1} \omega'),\\
& \wp_{i,j,k}(u)  =- \frac{\partial^3}{\partial u_i \partial u_j \partial u_k} 
\log \theta[K](\omega^{-1} u; \omega^{-1} \omega'),
\end{split}
\end{gather}
where $\varkappa_{i,j}$ are entries of the symmetric matrix $\varkappa = \eta \omega^{-1}$.
The period matrices $\omega$,  $\omega'$,  and $\eta$ are obtained by integrating 
the differentials \eqref{K1DifsGen} and \eqref{K2DifsGen} along the canonical cycles (see
fig.~\ref{cyclesOddCC}). Specifically, the columns of matrices $\omega$,  $\omega'$, and $\eta$
are computed as follows:
\begin{gather*}
\begin{split}
  &\omega_k = 2\int_{e_{2k-1}}^{e_{2k}} \rmd u,\qquad\quad \eta_k = 2\int_{e_{2k-1}}^{e_{2k}} \rmd r,\\
  &\omega'_k = - 2 \sum_{i=1}^k \int_{e_{2i-2}}^{e_{2i-1}} \rmd u  
  = 2 \sum_{i=k}^g  \int_{e_{2i}}^{e_{2i+1}} \rmd u.
  \end{split}
\end{gather*}
To compute the theta function in \eqref{WPdefComp}, we use
a partial sum of the series \eqref{ThetaDef} restricted to
$|n_i|\,{\leqslant}\,6$, where $n_i$ represents a component of the integer vector $n$.

In the following sections, we present an analysis of one- and two-gap Hamiltonian systems. 
The finite-gap solutions, computed via formulas \eqref{sinGSolCanon} and \eqref{sinhGSolCanon}, 
are illustrated graphically.

In genus one, replacing $\rmx$ with $\rmx + \rmt \sfb^{-2} \lambda_{4}^{-1/2}$,
yields elliptic traveling-wave solutions.
In genus two, we present two-periodic solutions, including the two-phase waves \cite{bb1982}.
In genus three,  the argument of $\wp_{1,2g-1}$ is given by  $u\,{=}\,\sfb (\rmx, \mathsf{c}_3, -\sfb^{-2} \lambda_{12}^{-1/2} \rmt)^t\,{+}\,\bm{C}$, 
where the real constant $\mathsf{c}_3$  serves as a parameter that 
modifies the shape of three-periodic wave solutions. In genus $N$, 
there exist $N\,{-}\,2$ such parameters.

\section{Finite-gap solutions in genus $1$}
A Hamiltonian system of the sine(sinh)-Gordon equation in $\M_1$ 
with variables $\gamma_{-1}$, $\beta_{-1}$, $\alpha_0$ 
lives on the submanifold defined by the constraint
$$ \beta_{-1} \gamma_{-1} =  \rmr_{-1}, $$
which we call an orbit.
The dynamics of the system are governed by the Hamiltonian
$$ h_0 = \alpha_0^2 + \sfb (\gamma_{-1} + \beta_{-1}).$$
The  spectral curve of the system is
\begin{equation*}
-w^2 + \sfb^2 z^3 + \rmh_0 z^2 + \rmr_{-1} z = 0,
 \end{equation*}
 which we replace with its  canonical form
\begin{equation}\label{G1HamCanon} 
-y^2 + x^3 + \lambda_2 x^2 + \lambda_4 x = 0,
 \end{equation}
 using the transformation: $x \,{=}\, z$, $y \,{=}\, \sfb w$, 
 $\rmh_0 \,{=}\, \sfb^2 \lambda_2$, $\rmr_{-1} \,{=}\, \sfb^2 \lambda_4$.
 Evidently, all curves which differ in a choice of $\sfb$ are bi-rationally equivalent.
 Let $\Delta \,{=}\, \lambda_2^2 - 4 \lambda_4$ be the discriminant of the curve.

We analyze the behaviour of solutions to the sine(sinh)-Gordon equation
depending on the choice of $\lambda_4$ and $\lambda_2$. 
In fig.\;\ref{f:G1RootDiag}, three regions on the $\lambda_2 \lambda_4$-plane are marked in different colors;
these regions correspond to different regimes of dynamics. 
Region\;I (blue) and Region\;II (orange) are separated by the contour $\Delta \,{=}\, 0$,
where two non-zero branch points coincide at ${-}\tfrac{1}{2}\lambda_2$,
and the spectral curve degenerates to an elementary one.
In Region\;II, we have $\Delta \,{<}\, 0$ and the two non-zero branch points are complex conjugates.
In the remaining part of the $\lambda_2 \lambda_4$-plane, $\Delta \,{>}\, 0$, and all branch points are real.
Another line where the spectral curve degenerates to genus zero is $\lambda_4\,{=}\,0$,
which separates Region\;I  and Region\;III (green).
On this line,  the spectral curve possesses a double branch point at zero, 
and the remaining branch point is ${-}\lambda_2$.
In Region\;I, the two non-zero branch points are located on the same side of the origin;
they are positive in the left part of Region\;I and negative in the right part.
In Region\;III, the two non-zero branch points are located on the different sides of the origin.
\begin{figure}[h]
\caption{Regions on  $\lambda_2 \lambda_4$-plane.} \label{f:G1RootDiag}
\medskip
\parbox[c]{0.4\textwidth}{\includegraphics[width=0.4\textwidth]{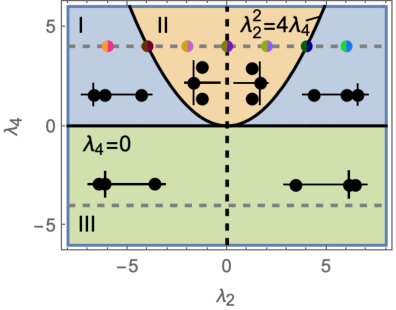}} 
\end{figure}

We analyze orbits by presenting phase portraits of the corresponding
Hamiltonian systems. Below, we focus on the orbit with $\lambda_4 \,{=}\,4$
and $\lambda_2$ running from $-\infty$ to $\infty$, see fig.\;\ref{f:G1RootDiag}. 
The two-colored dots along this line 
indicate points where trajectories on the phase portrait are calculated.
At the values $\lambda_2$ such that $\lambda_2^2\,{=}\,4\lambda_4$,
curve \eqref{G1HamCanon}  degenerates to an elementary one, and  we obtain
separatrix trajectories, which correspond to soliton solutions.
On the orbit fixed by $\lambda_4\,{=}\,0$, curve \eqref{G1HamCanon} 
is elementary for all values of $\lambda_2$, and the solutions are solitons; 
these reduce to a rational soliton at $\lambda_2\,{=}\,0$.
Orbits with \(\lambda _{4}\,{<}\,0\) 
are omitted hereafter due to the absence of smooth solutions 
to the sine(sinh)-Gordon equations in the corresponding Hamiltonian systems.

According to \eqref{WPAlphaGamma}, each trajectory is parametrized as follows
\begin{gather}\label{UniformG1}
\gamma_{-1} = - \sfb \wp_{1,1}(u),\qquad 
\beta_{-1} = \frac{- \sfb \lambda_4}{ \wp_{1,1}(u)}, \qquad
\alpha_0 = - \frac{\sfb}{2}  \frac{\wp_{1,1,1}(u)}{\wp_{1,1}(u)},
\end{gather}
with $u\,{=}\,\sfb \rmx + \bm{C}$,
where the parameter $\rmx\,{\in}\, \Real$ serves as the spatial coordinate, 
the independent variable of the sine(sinh)-Gordon equations.

If curve \eqref{G1HamCanon} degenerates, formulas \eqref{UniformG1} 
simplify as follows.
Along the contour $\Delta\,{=}\,0$ the $\sigma$- and $\wp$-functions degenerate to
\begin{subequations}\label{G1Deg1}
\begin{align}
&\sigma(u) =
 \left\{ \begin{array}{ll}
({-}\tfrac{1}{2}\lambda_2)^{-1/2} \exp\big(\tfrac{1}{4} \lambda_2 u^2\big)
\sinh\big(({-}\tfrac{1}{2}\lambda_2)^{1/2} u \big), & \lambda_2<0,\vphantom{{y}_{\int_A}} \\
(\tfrac{1}{2}\lambda_2)^{-1/2} \exp\big(\tfrac{1}{4} \lambda_2 u^2\big)
\sin \big((\tfrac{1}{2}\lambda_2)^{1/2} u \big), & \lambda_2 > 0;
\end{array} \right. \\
&\wp_{1,1}(u) = 
 \left\{ \begin{array}{ll}
 - \tfrac{1}{2}\lambda_2\tanh^{-2}\big(({-}\tfrac{1}{2}\lambda_2)^{1/2}  u\big),
& \lambda_2<0,\vphantom{{y}_{\int_A}} \\
 \tfrac{1}{2}\lambda_2 \tan^{-2}\big((\tfrac{1}{2}\lambda_2)^{1/2}  u\big),
 & \lambda_2 > 0.
 \end{array} \right.
\end{align}
The periods turn into
\begin{equation}
\begin{array}{lll}
\omega \to \infty, & \omega' = \imath \pi \big({-}\tfrac{1}{2}\lambda_2\big)^{-1/2} &\text{if }\lambda_2<0, \\
\omega = \pi \big(\tfrac{1}{2}\lambda_2\big)^{-1/2}, &\omega' \to \imath \infty &\text{if } \lambda_2>0.
\end{array}
\end{equation}
\end{subequations}
Along the line $\lambda_4 \,{=}\,0$
the $\sigma$- and $\wp$-functions degenerate to 
\begin{subequations}\label{G1Deg2}
\begin{align}
&\sigma(u) =
 \left\{ \begin{array}{ll}
(-\lambda_2)^{-1/2} \sin \big(\sqrt{-\lambda_2}\, u \big), & \lambda_2 < 0,\vphantom{{y}_{\int_A}} \\
\lambda_2^{-1/2} \sinh\big(\sqrt{\lambda_2}\, u \big), & \lambda_2 > 0;
\end{array} \right. \\
&\wp_{1,1}(u) = 
 \left\{ \begin{array}{ll}
 - \lambda_2\sin^{-2}\big(\sqrt{-\lambda_2}\, u\big),
& \lambda_2<0,\vphantom{{y}_{\int_A}} \\
 \lambda_2 \sinh^{-2}\big(\sqrt{\lambda_2}\, u\big),
 & \lambda_2 > 0,
 \end{array} \right.
\end{align}
and the periods turn into
\begin{equation}
\begin{array}{lll}
\omega = \pi \big({-}\lambda_2 \big)^{-1/2}, &\omega' \to \imath \infty,& \text{if }\lambda_2<0, \\
\omega \to  \infty, &\omega' = \pi \lambda_2^{-1/2}  &\text{if }\lambda_2>0.
\end{array}
\end{equation}
\end{subequations}

Each trajectory on a phase portrait corresponds to a real-valued solution. 
A closed trajectory corresponds to a smooth solution, and an open trajectory to 
a solution with singularities.
In the sinh-Gordon hierarchy, real-valued solutions arise when $\bm{C}$ is a half-period,
with one smooth solution existing at $\bm{C}\,{=}\,\bm{K}\,{=}\,\frac{1}{2}\omega \,{+}\,\frac{1}{2}\omega'$.  
In the sine-Gordon hierarchy,
real-valued solutions are obtained at $\bm{C}\,{=}\,\bm{K}$ 
if $\lambda_2^2\,{<}\,4\lambda_4$, or at one of two appropriately chosen quarter periods for
$\bm{C}$  if $\lambda_2^2\,{\geqslant}\,4\lambda_4$ and $\lambda_4\,{\geqslant}\,0$.

Recall that the sinh-Gordon hierarchy arises when 
$\mathfrak{g}\,{=}\, \mathfrak{sl}(2,\Real)$ (the case $\sfb\,{\in}\,\Real$)
or $\mathfrak{g}\,{=}\, \imath\mathfrak{sl}(2,\Real)$ (the case $\sfb\,{\in}\,\imath \Real$), whereas  
the sine-Gordon hierarchy arises when $\mathfrak{g}\,{=}\, \mathfrak{su}(2)$  (the case $\sfb\,{\in}\,\imath \Real$)
or $\mathfrak{g}\,{=}\, \imath \mathfrak{su}(2)$ (the case $\sfb\,{\in}\, \Real$).
As shown in Theorems\;\ref{T:SineGSol} and \ref{T:SinhGSol} and Remark\;\ref{R:bRealSol},
a one-gap solution to the sine-Gordon equation is
\begin{equation}\label{SinGSolN1}
\phi(\x) =  - \imath \log \Big({\pm} \lambda_{4}^{-1/2} \wp_{1,1}(\sfb \rmx + \bm{C}) \Big),
\end{equation}
and a one-gap solution to the sinh-Gordon equation is
\begin{equation}\label{SinhGSolN1}
\phi(\x) =  \log \Big({\pm} \lambda_{4}^{-1/2} \wp_{1,1}(\sfb \rmx + \bm{C}) \Big).
\end{equation}

Smooth solutions to the sine(sinh)-Gordon hierarchies exist if and only if 
$\lambda_4 \,{\geqslant}\,0$. Hamiltonian systems within the sinh-Gordon hierarchy also
possess open trajectories when $\lambda_4 \,{<}\,0$; these trajectories correspond 
to solutions with singularities at finite values of $\rmx$. 
Conversely, Hamiltonian systems within 
the sine-Gordon hierarchy do not exist for $\lambda_4\,{<}\,0$.
Consequently, Region\;III contains no points which produce smooth solutions.

Below, phase portraits of  one-gap Hamiltonian systems associated with 
the sine- and sinh-Gordon hierarchies are analysed, and a choice of $\bm{C}$
for each trajectory is specified.

\subsection{One-gap sinh-Gordon system}
When dealing with a Hamiltonian system,
it is convenient to factor out the constant $\sfb$, which serves as  a scaling factor.
Thus, we replace the variables $\gamma_{-1}$, $\beta_{-1}$, $\alpha_0$ with their 
 normalized counterparts 
 \begin{equation}
 \alpha = \sfb^{-1} \alpha_0,\qquad  \beta = \sfb^{-1} \beta_{-1},\qquad
  \gamma = \sfb^{-1} \gamma_{-1}.
 \end{equation}
 Then, the constraint and the Hamiltonian are expressed in terms of the
 parameters $\lambda_4$, $\lambda_2$ of the canonical curve, namely
 \begin{subequations}\label{slCnstrEq}
 \begin{align}
 &\beta \gamma = \lambda_4, \label{slCnstr} \\
 &\alpha^2 +  \gamma + \beta = \lambda_2. \label{slEq}
 \end{align}
 \end{subequations}
 
In the case of $\mathfrak{sl}(2,\Real)$ algebra, we have $\alpha$, $\beta$, $\gamma\,{\in}\,\Real$.
Equations \eqref{slCnstrEq} define the trajectories of a Hamiltonian system, provided $\lambda_4$ is fixed.
Figure 3a shows the phase portrait of the system at $\lambda_4 \,{=}\,4$
in the  $\beta \gamma \alpha$-space (above)
and on the  $\gamma \alpha$-plane (below). 
\begin{figure}[h]
\caption{Phase portraits of the 1-gap Hamiltonian system 
associated with the sinh-Gordon hierarchy} \label{f:G1PP3Dsl}
\medskip
3a. $\mathfrak{sl}(2,\Real)$, $\lambda_4\,{=}\,4$ $\phantom{mmmmmmm}$   
3b. $\imath \mathfrak{sl}(2,\Real)$, $\lambda_4\,{=}\,4$
$\phantom{mmmmmmmm}$ \\
\parbox[c]{0.3\textwidth}{\includegraphics[width=0.29\textwidth]{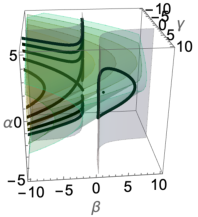}} $\phantom{mmmm}$ 
\parbox[c]{0.45\textwidth}{\includegraphics[width=0.43\textwidth]{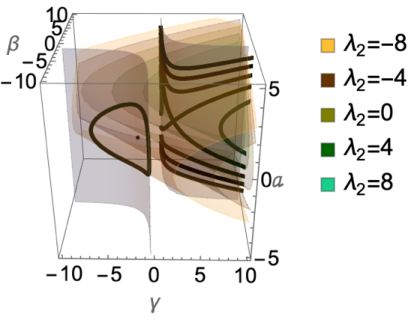}} $\phantom{m}$ \\
$\phantom{mmmmmmmmm}$ \\
\parbox[c]{0.3\textwidth}{\includegraphics[width=0.3\textwidth]{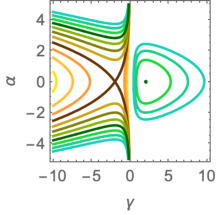}} $\phantom{mmm}$ 
\parbox[c]{0.56\textwidth}{\includegraphics[width=0.56\textwidth]{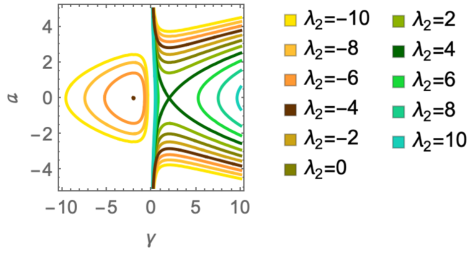}} 
\end{figure}
3D trajectories are plotted using equations \eqref{UniformG1}.

For $\lambda_2\,{<}\,{-}2 \sqrt{\lambda_4}$, there are  two open trajectories
corresponding to $\bm{C} \,{=}\, 0$ and $\tfrac{1}{2} \omega'$.
Indeed, $u\,{=}\,\sfb \rmx$ and $u\,{=}\,\sfb \rmx \,{+}\, \tfrac{1}{2} \omega'$, $\sfb\,{\in}\,\Real$, 
are the two lines  in the Jacobian variety where $\wp_{1,1}$ and $\wp_{1,1,1}$ are real-valued.
At $\lambda_2\,{=}\,{-}2 \sqrt{\lambda_4}$, the vertices of the two trajectories 
touch each other.
For ${-}2 \sqrt{\lambda_4} \,{<}\, \lambda_2\,{<}\, 0$, the relation  
$\tfrac{1}{2} \ImN \omega' \,{=}\,{-} \ImN \omega$ holds, and so there exists
a single trajectory with $\bm{C} \,{=}\, 0\,{\sim}\,\tfrac{1}{2} \omega'$.
This trajectory consists of two branches.
For $0 \,{<}\, \lambda_2 \,{<}\, 2 \sqrt{\lambda_4}$, we have
$\tfrac{1}{2} \ReN \omega \,{=}\, \ReN \omega'$, which implies $0\,{\sim}\,\tfrac{1}{2} \omega$.
In this case, there is also only one trajectory with $\bm{C} \,{=}\, 0$. 
At $\bm{C} \,{=}\, \tfrac{1}{2} \omega'$,
the variables $\gamma$, $\beta$, and $\alpha$ are not real-valued.
At $\lambda_2\,{=}\,2 \sqrt{\lambda_4}$, the level surface \eqref{slEq}
touches the positive wing of the constraint surface \eqref{slCnstr} at a single point,
specifically $(\tfrac{1}{2} \lambda_2,\tfrac{1}{2} \lambda_2,0)$.
This point represents the trajectory obtained with $\bm{C} \,{=}\, \tfrac{1}{2}\omega'$,
alongside the open trajectory with $\bm{C} \,{=}\, 0$.
\begin{table}[h]
\caption{Open and closed trajectories with $\sfb \rmx + \bm{C}$, \\
$\phantom{mmmmma}$$\sfb\in\Real$ in
the case of $\mathfrak{sl}(2,\Real)$ algebra, and\\
$\phantom{mmmmma}$$\sfb \in\imath \Real$ in
the case of $\imath\mathfrak{sl}(2,\Real)$ algebra.} \label{TbSinhG}
\begin{tabular}{llrl}
&Periods&  Trajectories: open & closed \\
$\lambda_2\,{<}\,{-}2 \sqrt{\lambda_4}$
& $\omega \,{\in}\, \Real$, $\omega' \,{\in}\,\imath \Real$ 
& $\sfb\,{\in}\,\Real$, \hfill $\bm{C} = 0$, $\tfrac{1}{2} \omega'$, & none\\
&& $\sfb\,{\in}\,\imath\Real$, \hfill $\bm{C} = 0$,  & $\tfrac{1}{2} \omega$\\
$\lambda_2\,{=}\,{-}2 \sqrt{\lambda_4}$ 
& $\omega \to \infty$, $\omega' \,{\in}\,\imath \Real$ 
& $\sfb\,{\in}\,\Real$, \hfill $\bm{C} = 0$, $\tfrac{1}{2} \omega'$, & none\\
&& $\sfb\,{\in}\,\imath\Real$, \hfill $\bm{C} = 0$,  & $\tfrac{1}{2} \omega$\\
$\left\{ 
\begin{array}{l} {-}2 \sqrt{\lambda_4} \,{<}\, \lambda_2\,{<}\,0\\
\lambda_2\,{=}\, 0 \vphantom{\dfrac{A}{A}} \\ 
0\,{<}\,\lambda_2\,{<}\, 2 \sqrt{\lambda_4}
\end{array} \right. $  
& $\begin{array}{l} \ImN \omega \,{=}\, {-}\frac{1}{2} \ImN \omega',\  \omega' \,{\in}\,\imath \Real\\
\left\{ \begin{array}{l} \ReN \omega \,{=}\, \ReN \omega',\\
 \ImN \omega \,{=}\, {-}\ImN \omega'  
 \end{array} \right.\\
\omega \,{\in}\, \Real,\ \ReN \omega' = \frac{1}{2} \ReN \omega
\end{array} $ 
& $\begin{array}{r} \sfb\,{\in}\,\Real,\quad \bm{C} = 0,\\
\sfb\,{\in}\,\imath\Real,\hfill  \bm{C} = 0,
\end{array} $ & none \\
$ \lambda_2\,{=}\, 2 \sqrt{\lambda_4}\vphantom{A^{\int^A}}$ 
& $\omega \in \Real$, $\omega' \to \imath \infty$  
& $\sfb\,{\in}\,\Real$, \hfill $\bm{C} = 0$,  & $\tfrac{1}{2} \omega'$ \\
&& $\sfb\,{\in}\,\imath\Real$, \hfill $\bm{C} = 0$, $\tfrac{1}{2} \omega$  & none\\
$ \lambda_2\,{>}\, 2 \sqrt{\lambda_4}$ 
& $\omega \in \Real$, $\omega' \,{\in}\, \imath \Real$
& $\sfb\,{\in}\,\Real$, \hfill $\bm{C} = 0$,  & $\tfrac{1}{2} \omega'$ \\
&& $\sfb\,{\in}\,\imath\Real$, \hfill $\bm{C} = 0$, $\tfrac{1}{2} \omega$  & none
\end{tabular}
\end{table}
\noindent
Finally, for $\lambda_2\,{>}\,2 \sqrt{\lambda_4}$, a closed trajectory emerges at the intersection 
with the positive wing of the constraint surface.
This closed trajectory corresponds to $\bm{C} \,{=}\, \tfrac{1}{2}\omega'$,
while an open trajectory is obtained with $\bm{C} \,{=}\, 0$. 
These results are summarized in Table\;\ref{TbSinhG}.
The smooth solutions arising at $\lambda_2\,{>}\,2 \sqrt{\lambda_4}$ 
are illustrated in fig.\;\ref{f:G1slSols}a.
\begin{figure}[h]
\caption{Solution to the sinh-Gordon equation in genus 1} \label{f:G1slSols}
\medskip
4a. $\mathfrak{sl}(2,\Real)$, $\lambda_4\,{=}\,4$ $\phantom{mmmmmmmmm}$   
4b. $\imath \mathfrak{sl}(2,\Real)$, $\lambda_4\,{=}\,4$
$\phantom{mmmmmmmm}$ \\ 
\parbox[c]{0.9\textwidth}{\includegraphics[width=0.9\textwidth]{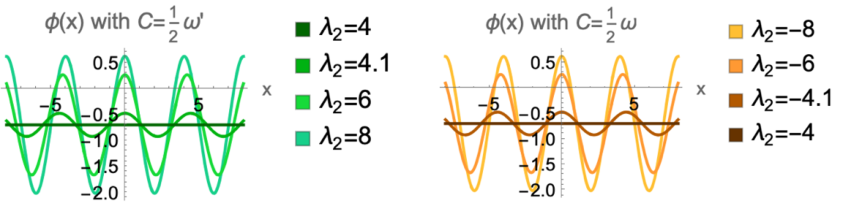}}
\end{figure}

In the case of $\imath \mathfrak{sl}(2,\Real)$ algebra, we define 
$a \,{=}\, {-}\imath \alpha$ (instead of $\alpha \,{\in}\,\imath \Real$), 
allowing \eqref{slEq} to take the form
 \begin{equation}\label{islEq}
- a^2 +  \gamma + \beta = \lambda_2. \tag{\ref{slEq}'}
 \end{equation}
The phase portrait produced by \eqref{slCnstr} and \eqref{islEq}
for $\lambda_4\,{=}\,4$ and various values of $\lambda_2$ are shown on fig\;\ref{f:G1PP3Dsl}b.
See Table\;\ref{TbSinhG} for the analysis of trajectories and fig.\;\ref{f:G1slSols}b
for the smooth solutions that arise when $\lambda_2\,{<}\,{-}2 \sqrt{\lambda_4}$.
Every solution ($\sfb\,{\in}\,\imath \Real$) associated with a spectral curve 
possessing only non-negative branch points $\{e_i\}$ coincides with 
the solution ($\sfb\,{\in}\, \Real$)  associated with the spectral curve
possessing  branch points $\{-e_i\}$, see  Remark\;\ref{R:Invol}.

To summarize,
smooth solutions to the sinh-Gordon equation
exist (i)  for $\lambda_2\,{>}\,2 \sqrt{\lambda_4}$ if $\sfb\,{\in}\,\Real$,
or (ii)  for $\lambda_2\,{<}\,{-}2 \sqrt{\lambda_4}$ if $\sfb\,{\in}\,\imath\Real$.
Case (i) corresponds to  points $(\lambda_2, \lambda_4)$ located in the right part of Region\;I
 in fig.\;\ref{f:G1RootDiag}, where the two non-zero branch points are negative.
Case (ii) corresponds to points $(\lambda_2, \lambda_4)$ in the left part of Region\;I,
where the two non-zero branch points are positive.

\subsection{One-gap sine-Gordon system}
In the cases of $\mathfrak{su}(2)$ and $\imath \mathfrak{su}(2)$ algebras,
instead of $\beta$ and $\gamma$ (where $\beta \,{=}\, \bar{\gamma}$), 
we employ the real-valued coordinates
$\gamma_{\ReN} \,{=}\, \ReN \gamma$ and  $\gamma_{\ImN} \,{=}\, \ImN \gamma$.
If $\mathfrak{g}\,{=}\, \mathfrak{su}(2)$, then $\alpha \,{\in}\,\Real$ 
and \eqref{slCnstrEq} turns into 
\begin{subequations}\label{suCnstrEq}
 \begin{align}
 &\gamma_{\ReN}^2 + \gamma_{\ImN}^2 = \lambda_4, \\
 &\alpha^2 +  2 \gamma_{\ReN}  = \lambda_2. \label{suEq}
 \end{align}
 \end{subequations}
If $\mathfrak{g}\,{=}\, \imath \mathfrak{su}(2)$,  we replace $\alpha$ with $a \,{=}\, {-}\imath \alpha$,
 and \eqref{suEq} takes the form
 \begin{equation}
- a^2 + 2 \gamma_{\ReN}  = \lambda_2. \tag{\ref{suEq}'}
 \end{equation}
 
 \begin{figure}[h]
\caption{Phase portraits of the $1$-gap Hamiltonian system 
at $\lambda_4\,{=}\,4$ associated with the sine-Gordon hierarchy} \label{f:G1PP3Dsu}
\medskip
$\phantom{mm}$
5a. $\mathfrak{su}(2)$ $\phantom{mmmmmmmmmmmm}$   
5b. $\imath \mathfrak{su}(2)$
$\phantom{mmmmmmmmm}$ \\
\parbox[c]{0.35\textwidth}{\includegraphics[width=0.33\textwidth]{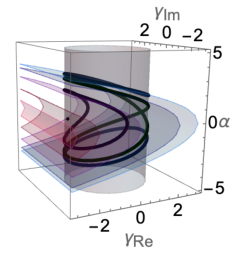}} $\phantom{mmm}$ 
\parbox[c]{0.5\textwidth}{\includegraphics[width=0.44\textwidth]{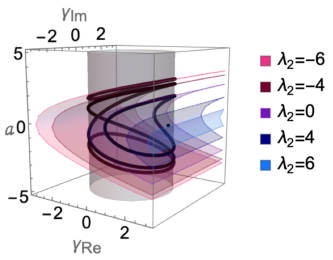}} $\phantom{m}$ \\
\parbox[c]{0.29\textwidth}{\includegraphics[width=0.29\textwidth]{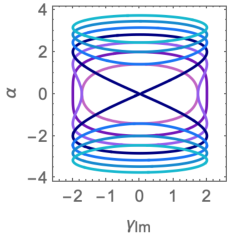}} $\phantom{mmmmm}$ 
\parbox[c]{0.55\textwidth}{\includegraphics[width=0.55\textwidth]{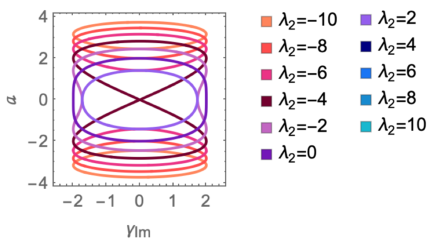}} \\
\parbox[c]{0.31\textwidth}{\includegraphics[width=0.31\textwidth]{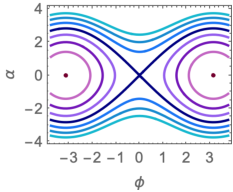}} $\phantom{mmmmm}$ 
\parbox[c]{0.55\textwidth}{\includegraphics[width=0.32\textwidth]{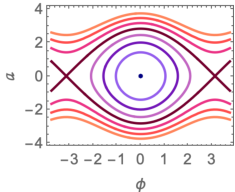}}
\end{figure}
Phase portraits of the Hamiltonian systems for $\lambda_4\,{=}\,4$ associated with 
the $\mathfrak{su}(2)$ and $\imath \mathfrak{su}(2)$ algebras
are shown in fig.\;\ref{f:G1PP3Dsu}. Specifically,
the figures display the phase portraits 
in the  $\gamma_{\ReN} \gamma_{\ImN} \alpha$-space (top),
the $\gamma_{\ImN} \alpha$-plane (middle),
and the $\phi \alpha$-plane (bottom). Here,  $\phi$  denotes the unknown variable
of the sine-Gordon equation, see subsection\;\ref{ss:SineSinhParam} for more details. 
The Hamiltonian expressed in terms of $\phi$ and  $\alpha$ 
(the $\mathfrak{su}(2)$ case) takes the form
$$ h_0 = \sfb^2 \big(\alpha^2 \pm 2 \sqrt{\lambda_4} \cos \phi\big),$$
or, in terms of $\phi$ and  $a$ (the $\imath\mathfrak{su}(2)$ case), the form
$$ h_0 = \sfb^2 \big({-}a^2 \pm 2 \sqrt{\lambda_4} \cos \phi\big).$$

The 3D trajectories are plotted using the parametric equations for 
$\gamma_{\ReN}$, $\gamma_{\ImN}$, and $\alpha$ (or $a$)
obtained from \eqref{UniformG1}. When $\lambda_2 \,{=}\, 2 \sqrt{\lambda_4}$,
we apply the formulas \eqref{G1Deg1}, since $e_1\,{=}\, e_2 \,{=}\,{-}\tfrac{1}{2}\lambda_2$, $\lambda_2 \,{>}\,0$.
Thus, the separatrix trajectory is parameterized by
\begin{gather}\label{NormUniformG1Deg1}
\gamma =  \frac{ -\tfrac{1}{2}\lambda_2}{\tan^{2}\big((\tfrac{1}{2}\lambda_2)^{1/2}  (\sfb \rmx + C)\big)},\quad 
\alpha =  \frac{\pm(2\lambda_2)^{1/2}}{\cos \big((2\lambda_2)^{1/2}  (\sfb \rmx + C ) \big)},\quad
C = \tfrac{1}{4} \pi,\, \tfrac{3}{4} \pi.
\end{gather}

In the case of $\mathfrak{su}(2)$ algebra, trajectories
exist if $\lambda_2\,{\geqslant}\,{-}2 \sqrt{\lambda_4}$, and  all of them are closed.
At $\lambda_2\,{=}{-}\,2 \sqrt{\lambda_4}$, the trajectory collapses into  the single point $(\tfrac{1}{2}\lambda_2,0,0)$.
\begin{table}[h]
\caption{Trajectories (all closed) with $\sfb \rmx + \bm{C}$.} \label{TbSineG}
\begin{tabular}{lll}
&  $\sfb\,{\in}\,\Real$ ($\imath \mathfrak{su}(2)$ case) & $\sfb\,{\in}\,\imath\Real$ ($\mathfrak{su}(2)$ case) \\
$\lambda_2\,{<}\,{-}2 \sqrt{\lambda_4}$
& $\bm{C} = \tfrac{1}{4} \omega'$, $\tfrac{3}{4} \omega'$, &  none \\
$\lambda_2\,{=}\,{-}2 \sqrt{\lambda_4}$ 
& $\bm{C} = \tfrac{1}{4} \omega'$, $\tfrac{3}{4} \omega'$
& $\bm{C} = \infty$\\
$ {-}2 \sqrt{\lambda_4} \,{<}\, \lambda_2\,{\leqslant}\,0$  
& $\bm{C} = \tfrac{1}{2} \omega \sim \tfrac{1}{4} \omega' $ 
& $\bm{C} = \tfrac{1}{2} \omega \sim \tfrac{1}{4} \omega'$ \\
$0\,{\leqslant}\,\lambda_2\,{<}\, 2 \sqrt{\lambda_4}$   &
$\bm{C} = \tfrac{1}{2} \omega' \sim  \tfrac{1}{4} \omega$ 
& $\bm{C} = \tfrac{1}{2} \omega' \sim  \tfrac{1}{4} \omega$ \\
$ \lambda_2\,{=}\, 2 \sqrt{\lambda_4}\vphantom{A^{\int^A}}$ 
&  $\bm{C} = \imath \infty$, &  $\bm{C} = \tfrac{1}{4} \omega$, $\tfrac{3}{4} \omega$ \\
$ \lambda_2\,{>}\, 2 \sqrt{\lambda_4}$ 
& none & $\bm{C} = \tfrac{1}{4} \omega$, $\tfrac{3}{4} \omega$ 
\end{tabular}
\end{table}
For $|\lambda_2|\,{<}\,2 \sqrt{\lambda_4}$, the trajectory consists of a single loop.
This is obtained with $\bm{C}\,{=}\,\tfrac{1}{2}\ReN \omega$ if 
${-}\,2 \sqrt{\lambda_4} \,{<}\,\lambda_2 \,{\leqslant}\,0$, or with $\bm{C}\,{=}\,\tfrac{1}{4} \omega$
(or $\bm{C}\,{=}\,\tfrac{3}{4} \omega$)
if  $0\,{<}\,\lambda_2 \,{\leqslant}\,2 \sqrt{\lambda_4}$.
When $\lambda_2 \,{=}\, 2 \sqrt{\lambda_4}$, we obtain 
the separatrix trajectory, where the single loop splits into two. 
For $\lambda_2\,{\geqslant}\,2 \sqrt{\lambda_4}$, two different trajectories are
obtained with $\bm{C}\,{=}\,\tfrac{1}{4}\omega$
and $\bm{C}\,{=}\,\tfrac{3}{4}\omega$. These results are summarized in Table\;\ref{TbSineG}.
\begin{figure}[h]
\caption{Solution to the sine-Gordon equation in genus 1} \label{f:G1suSols}
\medskip
6a. $\mathfrak{su}(2)$, $\lambda_4\,{=}\,4$: Waves and 
parallelograms of periods $\phantom{mmmmmmmmmmmmm}$    \\ 
\parbox[c]{0.9\textwidth}{\includegraphics[width=0.9\textwidth]{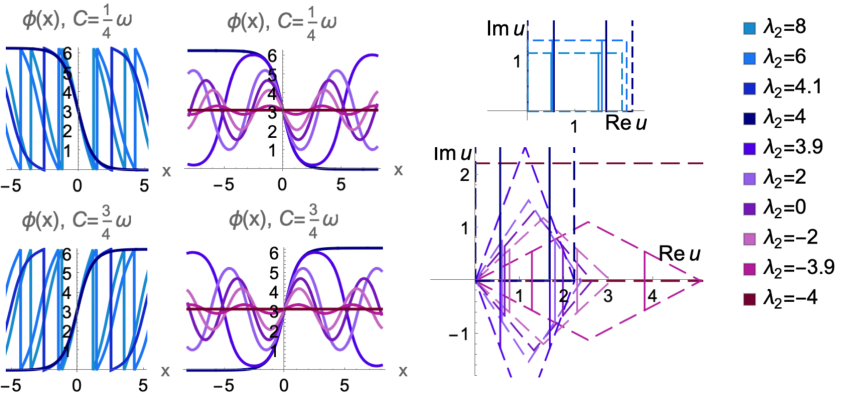}}\\
6b. $\imath \mathfrak{su}(2)$, $\lambda_4\,{=}\,4$: Waves and 
parallelograms of periods $\phantom{mmmmmmmmmmmma}$  \\
\parbox[c]{0.91\textwidth}{\includegraphics[width=0.91\textwidth]{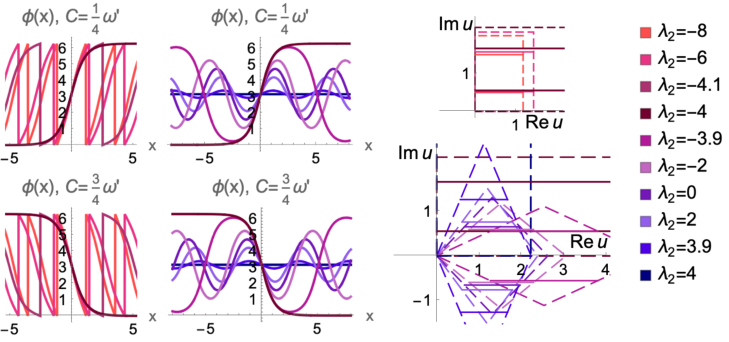}}
\end{figure}

In the case of $\mathfrak{g}\,{=}\, \imath \mathfrak{su}(2)$, trajectories exist
if $\lambda_2\,{\leqslant}\,2 \sqrt{\lambda_4}$.  All such trajectories are closed, see
 Table\;\ref{TbSineG} for more details.

One-gap solutions to the sine-Gordon equation are presented in fig.\;\ref{f:G1suSols} for
the values $\sfb\,{=}\,1/2$ and $\sfb\,{=}\,\imath/2$.
The equivalence  between solutions in the $\mathfrak{su}(2)$  and  $\imath \mathfrak{su}(2)$ cases 
mentioned in Remark\;\ref{R:Invol} is clearly evident.

Solutions corresponding to the separatrix trajectory take the form
\begin{equation}
\begin{split}
\phi(\rmx) &= \imath \log \Big({-}\tanh^{-2}\big(({-}\tfrac{1}{2}\lambda_2)^{1/2} (\sfb \rmx + C)\big)\Big)\\
&= \arctan \Big( 
\frac{\pm 4 \sinh \big(\sqrt{{-}2\lambda_2}\,\sfb \rmx\big)}{3-\cosh \big(2\sqrt{{-}2\lambda_2}\,\sfb \rmx\big)} \Big),
\quad \lambda_2<0,\ \ \sfb\in \Real.
\end{split}
\end{equation}
These are known as kink and anti-kink solutions. 

Therefore, smooth solutions to the sine-Gordon equation
exist if (i) $\lambda_2\,{\leqslant}\,2 \sqrt{\lambda_4}$ when $\sfb\,{\in}\,\Real$,
or (ii) $\lambda_2\,{\geqslant}\,{-}2 \sqrt{\lambda_4}$ when $\sfb\,{\in}\,\imath\Real$.
Case (i) represents  points $(\lambda_2, \lambda_4)$ located within Region\;II and the left part of Region\;I
(with positive non-zero branch points)  in fig.\;\ref{f:G1RootDiag}.
Case (ii) represents  points $(\lambda_2, \lambda_4)$ located within Region\;II and the right part of Region\;I
(with negative non-zero branch points).

\section{Finite-gap solutions in genus $2$}
Two-gap Hamiltonian systems within the sine(sinh)-Gordon hierarchy  arise in $\M_2$ with
dynamic variables $\{\gamma_{-1}$, $\beta_{-1}$, $\alpha_0$, $\gamma_{1}$, $\beta_{1}$, $\alpha_2\}$. 
Each system is restricted to a submanifold defined by two constraints
\begin{gather}\label{CnstrG2}
\begin{split}
&\gamma_{-1} \beta_{-1}  = \rmr_{-1},\\
&\alpha_0^2 + \gamma_{-1} \beta_{1} + \beta_{-1} \gamma_{1} = \rmr_0,
\end{split}
\end{gather}
with fixed values for $\rmr_{-1}$ and $\rmr_{0}$.
We call this submanifold an orbit.
The dynamics of the system are governed by the Hamiltonians
\begin{gather}\label{HamG2}
\begin{split}
&h_1 = 2 \alpha_0 \alpha_2 + \gamma_{1} \beta_{1} + \sfb (\beta_{-1} + \gamma_{-1}),\\
&h_2 = \alpha_2^2 + \sfb (\beta_{1} + \gamma_{1}).
\end{split}
\end{gather}
By solving  constraints \eqref{CnstrG2}  for $\beta_{-1}$ and $\beta_{1}$ 
and eliminating these variables from \eqref{HamG2}, we obtain 
Hamiltonians expressed in terms of 
$\{\gamma_{-1}$, $\alpha_0$, $\gamma_{1}$, $\alpha_2\}$:
\begin{align*}
&h_1 = 2 \alpha_0 \alpha_2  + \sfb \gamma_{-1} + (\rmr_0 - \alpha_0^2) \frac{\gamma_1}{\gamma_{-1}}
 + \rmr_{-1} \Big( \frac{\sfb}{\gamma_{-1}}  -  \frac{\gamma_1^2}{\gamma_{-1}^2} \Big),\\
&h_2 =  \alpha_2^2 + \sfb \gamma_{1} + (\rmr_0 - \alpha_0^2) \frac{\sfb}{\gamma_{-1}}
- \rmr_{-1} \frac{\sfb \gamma_1}{\gamma_{-1}^2} .
\end{align*}

\subsection{Hamiltonian system in separated variables}
After the substitution 
\begin{align*}
&\gamma_{-1} = \sfb z_1 z_2,& &\gamma_1 = - \sfb (z_1+z_2),& \\
&\alpha_0 = - \frac{w_1 z_2^2 - w_2 z_1^2}{z_1 z_2 (z_1 z_2)},&
&\alpha_2 = \frac{w_1 z_2 - w_2 z_1}{z_1 z_2 (z_1 z_2)},&
\end{align*}
induced by \eqref{PointsDef1}, the Hamiltonians take the form
\begin{align*}
&h_1 = - \frac{z_2}{z_1^3(z_1-z_2)}\big(w_1^2 - \sfb^2 z_1^5 - \rmr_0 z_1^2 - \rmr_{-1} z_1 \big) \\
&\qquad + \frac{z_1}{z_2^3(z_1-z_2)}\big(w_2^2 - \sfb^2 z_2^5 - \rmr_0 z_2^2 - \rmr_{-1} z_2 \big),\\
&h_2 =  \frac{1}{z_1^3(z_1-z_2)}\big(w_1^2 - \sfb^2 z_1^5 - \rmr_0 z_1^2 - \rmr_{-1} z_1 \big) \\
&\qquad - \frac{1}{z_2^3(z_1-z_2)}\big(w_2^2 - \sfb^2 z_2^5 - \rmr_0 z_2^2 - \rmr_{-1} z_2 \big).
\end{align*}
This implies that each pair $(z_i,w_i)$ is  a point of the spectral curve
\begin{gather*}
- w_i^2 + \sfb^2 z_i^5 + \rmh_2 z_i^4 + \rmh_1 z_i^3 + \rmr_0 z_i^2 + \rmr_{-1} z_i = 0,\qquad i=1,\,2.
\end{gather*}
Each pair consists of a canonical coordinate $z_i$ and the corresponding momentum $w_i$.
The evolution of such a pair represents motion on the spectral curve.

\subsection{Hamiltonian system in normalized coordinates}
It is convenient to replace the spectral curve
\begin{gather*}
- w^2 + \sfb^2 z^5 + \rmh_2 z^4 + \rmh_1 z^3 + \rmr_0 z^2 + \rmr_{-1} z = 0
\end{gather*}
with its canonical form
\begin{gather}\label{CcanonG2}
- y^2 + x^5 + \lambda_2 x^4 + \lambda_4 x^3 + \lambda_6 x^2 + \lambda_8 x = 0,
\end{gather}
using  substitution \eqref{SectrCToCanonC}. 
We then transform the dynamic variables into their normalized counterparts: 
$\tilde{\gamma}_{2i-1}\,{=}\, \sfb^{-1} \gamma_{2i-1}$, 
$\tilde{\beta}_{2i-1}\,{=}\, \sfb^{-1} \beta_{2i-1}$, $\tilde{\alpha}_{2i}\,{=}\, \sfb^{-1} \alpha_{2i}$.
Then,  constraints \eqref{CnstrG2} and  Hamiltonians \eqref{HamG2} 
take the form
\begin{align}\label{CnstrNormG2}
\begin{split}
&\tilde{\gamma}_{-1} \tilde{\beta}_{-1}  =  \lambda_8,\\
&\tilde{\alpha}_0^2 + \tilde{\gamma}_{-1} \tilde{\beta}_{1} + \tilde{\beta}_{-1} \tilde{\gamma}_{1} = \lambda_6,
\end{split}\\
\begin{split}\label{HamNormG2}
&\lambda_4 = 2 \tilde{\alpha}_0 \tilde{\alpha}_2 + \tilde{\gamma}_{1} \tilde{\beta}_{1} 
+ \tilde{\beta}_{-1} + \tilde{\gamma}_{-1},\\
&\lambda_2 = \tilde{\alpha}_2^2 + \tilde{\beta}_{1} + \tilde{\gamma}_{1},
\end{split}
\end{align}
and, therefore, depend only on  the parameters $\lambda$ of the curve \eqref{CcanonG2}.
An orbit supporting  a two-gap Hamiltonian system  is defined by fixing the parameters 
$ \lambda_8$ and~$\lambda_6$. 

As an example, we consider the orbit with $\lambda_8\,{=}\,16$ and
$\lambda_6\,{=}\,{-}240$. See fig.\;\ref{f:G2RootDiag} 
for regions on the $\lambda_2 \lambda_4$-plane; the regions 
represent different combinations of real and complex conjugate branch points, and thus
different dynamical regimes.
\begin{figure}[h]
\caption{Regions on  $\lambda_2 \lambda_4$-plane with $\lambda_8\,{=}\,16$,
$\lambda_6\,{=}\,{-}240$.} \label{f:G2RootDiag}
\medskip
\parbox[c]{0.38\textwidth}{\includegraphics[width=0.38\textwidth]{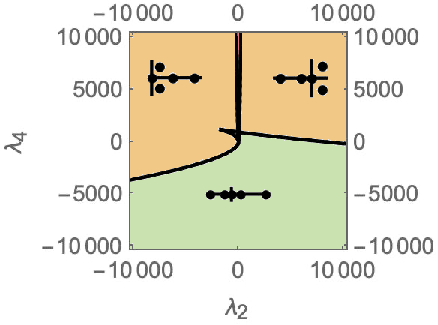}} 
\parbox[c]{0.6\textwidth}{\includegraphics[width=0.6\textwidth]{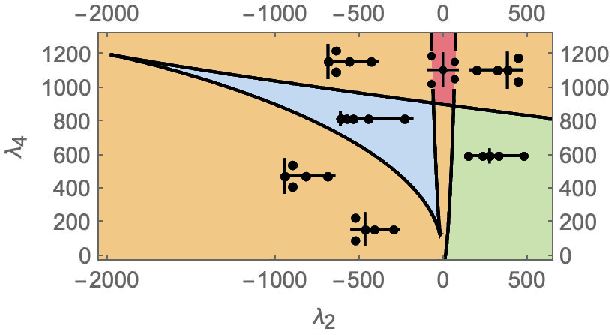}} 
\end{figure}

The discriminant of the curve \eqref{CcanonG2} is
\begin{equation}
\Delta = \lambda_8^2 \big( (72 \lambda_8 \lambda_4 - 27 \lambda_8 \lambda_2^2
- 27\lambda_6^2 + 9 \lambda_6 \lambda_4 \lambda_2 - 2 \lambda_4^3)^2 
- 4 (12 \lambda_8 - 3 \lambda_6 \lambda_2 + \lambda_4^2)^3 \big).
\end{equation}
For $\Delta \,{\neq}\,0$, the curve has genus $2$.
When $\Delta \,{=}\,0$, the genus decreases, and the curve degenerates 
into an elliptic or rational curve. 
In particular, if $\lambda_8\,{=}\,0$, the curve has genus at most $1$.
The black contours in fig.\;\ref{f:G2RootDiag} represent solutions of $\Delta \,{=}\,0$
for a particular choice of $\lambda_8$ and $\lambda_6$.

\subsection{Two-gap sinh-Gordon system}\label{ss:SinhGSols}
Taking into account the reality conditions, we know, that smooth solutions to the sinh-Gordon equation
exist if all branch points are either real and negative or all real and positive. 
Thus, in the diagram in fig.\;\ref{f:G2RootDiag} 
we choose pairs $(\lambda_2,\lambda_4)$ from the blue region.
Due to the involution mentioned in Remark\;\ref{R:Invol},
a complementary region representing curves with all negative branch points exists
in the orbit fixed by $\lambda_8\,{=}\,16$, $\lambda_6\,{=}\, 240$. 

Equations \eqref{CnstrNormG2}, \eqref{HamNormG2} accommodate the case $\mathfrak{sl}(2,\Real)$,
and the normalized coordinates are parametrized as 
\begin{gather}\label{ParamNormG2}
\begin{aligned}
&\tilde{\gamma}_{-1} = - \wp_{1,3}(u),& 
&\tilde{\beta}_{-1} = \frac{- \lambda_8}{\wp_{1,3}(u)},& 
&\tilde{\alpha}_0 = - \frac{1}{2}  \frac{\wp_{1,3,3}(u)}{\wp_{1,3}(u)},& \\
&\tilde{\gamma}_{1} = - \wp_{1,1}(u),&
&\tilde{\beta}_{1} = \frac{\wp_{3,3}(u)}{\wp_{1,3}(u)},& 
&\tilde{\alpha}_2 = - \frac{1}{2}  \frac{\wp_{1,1,3}(u)}{\wp_{1,3}(u)},&
\end{aligned}
\end{gather}
where $u\,{=}\, \sfb \big(\rmx,{\pm} \sfb^{-2} \lambda_{8}^{-1/2}\rmt \big)\,{+}\,\bm{C}$
and $\bm{C}$ represents half-period shifts. There are four half-period shifts $\bm{C}$
with distinct values of $\ImN \bm{C}$; they produce distinct 
trajectories of a 2-gap Hamiltonian system. The variables $\tilde{\gamma}_{-1}$
$\tilde{\gamma}_{1}$, $\tilde{\beta}_{-1}$, $\tilde{\beta}_{1}$, $\tilde{\alpha}_{0}$,
and $\tilde{\alpha}_{2}$ are real-valued along these trajectories.
All trajectories are open, except for
the trajectory defined by $\bm{C} \,{=}\, \bm{K} \,{\equiv}\, \tfrac{1}{2}\omega_2 + \tfrac{1}{2}\omega'_1
+ \tfrac{1}{2}\omega'_2$, which is closed if and only if
all non-zero branch points are negative.
There is no pair $(\lambda_2,\lambda_4)$ that represents a curve with this property on
the chosen orbit ($\lambda_8\,{=}\,16$, $\lambda_6\,{=}\,{-}240$).

In the case of $\imath \mathfrak{sl}(2,\Real)$ algebra,  
$\tilde{\alpha}_{0}$, $\tilde{\alpha}_{2} \,{\in}\,\imath\Real$
and $\tilde{\gamma}_{-1}$
$\tilde{\gamma}_{1}$, $\tilde{\beta}_{-1}$, $\tilde{\beta}_{1}\,{\in}\,\Real$.
In the phase space of a Hamiltonian system we replace $\tilde{\alpha}_{2i}$ with 
real-valued coordinates $\tilde{a}_{2i}\,{=}\,{-}\imath\tilde{\alpha}_{2i}$
and use the same formulas \eqref{ParamNormG2} to parametrize trajectories.
Four half-period shifts $\bm{C}$
with distinct values of $\ReN \bm{C}$ produce distinct trajectories. 
According to Theorem\;\ref{T:RealCondSinhG},
the trajectory with $\bm{C} \,{=}\, \bm{K}$ is closed if and only if 
all non-zero branch points are real and positive.
For the orbit fixed by $\lambda_8\,{=}\,16$ and $\lambda_6\,{=}\,{-}240$,
points $(\lambda_2,\lambda_4)$ in the blue region of the diagram in fig.\;\ref{f:G2RootDiag}
represent such curves.
We compute this smooth finite-gap solution to the sinh-Gordon equation using the formula
\begin{equation}\label{sinhGSolG2} 
\phi(\rmx,\rmt) =  \log \Big({-}\lambda_{8}^{-1/2}
\wp_{1,2N-1} \Big(\sfb \big(\rmx,
{-}\sfb^{-2} \lambda_{8}^{-1/2}\rmt  \big)^t  + \bm{K}\Big) \Big).
\end{equation}
The result is illustrated in fig.\;\ref{f:G2slSols}.
\begin{figure}[h]
\caption{Solution to the sinh-Gordon equation in genus 2: \\
$\lambda_8\,{=}\,16$, $\lambda_6\,{=}\,{-}240$, $\lambda_4\,{=}\,640$, $\lambda_2\,{=}\,{-}320$, $\sfb\,{=}\,\imath/2$;\\
$e_1 = 0$,   $e_2 \approx 0.0852$, $e_3 \approx 0.3830$, $e_4 \approx 1.5421$,  $e_5 \approx 317.99$.} 
\label{f:G2slSols}
\parbox[c]{0.24\textwidth}{\includegraphics[width=0.24\textwidth]{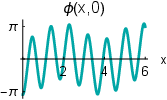}}
\parbox[c]{0.24\textwidth}{\includegraphics[width=0.24\textwidth]{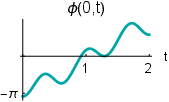}}
\parbox[c]{0.35\textwidth}{\includegraphics[width=0.4\textwidth]{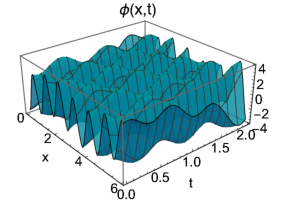}}
\end{figure}

\begin{figure}[h]
\caption{Solutions to the sine-Gordon equation in genus 2:\\
two complex conjugate pairs of branch points}\label{f:G2suSolsCC}
\medskip
\begin{tabular}{lcc}
$\ \ \lambda_8=16$, $\lambda_6=-240$ &$\sfb\,{=}\, 1/2$ & $\sfb\,{=}\,\imath/2$ \\
\parbox[c]{0.3\textwidth}{$\begin{array}{l}
\lambda_4 = 1180,\ 
\lambda_2 = 64 \\
e_1 \approx -32.101-12.7467\imath\\
e_2 \approx -32.101+12.7467\imath\\
e_3 = 0\\
e_4 \approx 0.1010 - 0.0568 \imath \\
e_5 \approx 0.1010 + 0.0568 \imath
\end{array}$} &
\parbox[c]{0.33\textwidth}{\includegraphics[width=0.33\textwidth]{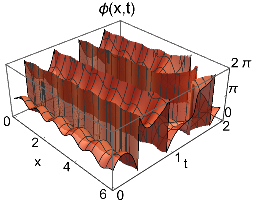}} &
\parbox[c]{0.33\textwidth}{\includegraphics[width=0.33\textwidth]{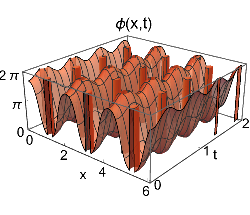}}\\
\hline
\parbox[c]{0.3\textwidth}{$\begin{array}{l}
\lambda_4 = 1180,\ 
\lambda_2 = 0 \\
e_1 \approx -0.1017 - 34.3514 \imath \\
e_2 \approx -0.1017 + 34.3514 \imath \\
e_3 = 0\\
e_4 \approx 0.1017 - 0.0567 \imath\\
e_5 \approx 0.1017 + 0.0567 \imath
\end{array}$} &
\parbox[c]{0.33\textwidth}{\includegraphics[width=0.33\textwidth]{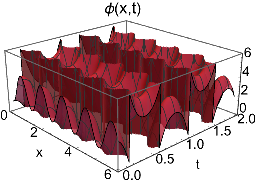}} &
\parbox[c]{0.33\textwidth}{\includegraphics[width=0.33\textwidth]{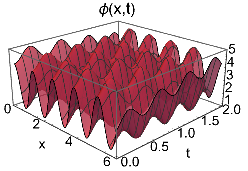}} \\
\hline
\parbox[c]{0.3\textwidth}{$\begin{array}{l}
\lambda_4 = 1180,\ 
\lambda_2 = -64 \\
e_1 = 0 \\
e_2 \approx 0.1025 - 0.0567 \imath\\
e_3 \approx 0.1025 + 0.0567 \imath\\
e_4 \approx 31.8975 - 12.225 \imath\\
e_5  \approx 31.8975 + 12.225 \imath 
\end{array}$} &
\parbox[c]{0.33\textwidth}{\includegraphics[width=0.33\textwidth]{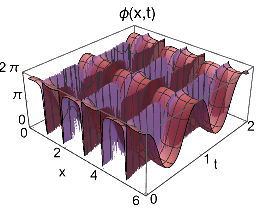}} &
\parbox[c]{0.33\textwidth}{\includegraphics[width=0.32\textwidth]{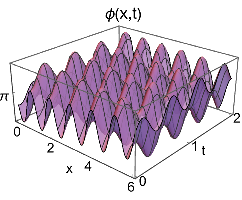}}
\end{tabular}
\end{figure}

\subsection{Two-gap sine-Gordon system}
The most interesting and physically meaningful solutions to the sine-Gordon equation
arise if the corresponding spectral curve possesses the maximal number of complex conjugate pairs of 
branch points. On the orbit with $\lambda_8\,{=}\,16$ and $\lambda_6\,{=}\,{-}240$,
such curves have values of $(\lambda_2, \lambda_4)$ within the red region.
Two solutions to the sine-Gordon equation are be obtained by
\begin{equation}\label{sinGSolG2} 
\phi(\rmx,\rmt) =  -\imath \log \Big({\pm} \lambda_{8}^{-1/2}
\wp_{1,2N-1} \Big(\sfb \big(\rmx,
{\pm}\sfb^{-2} \lambda_{8}^{-1/2}\rmt  \big)^t  + \bm{C}\Big) \Big),
\end{equation}
where $\bm{C}\,{=}\, \bm{K}$.
We use `$+$' for $\sfb\,{\in}\,\Real$ (the case of $\imath \mathfrak{su}(2)$ algebra), 
and `$-$' for $\sfb\,{\in}\,\imath \Real$ (the case of $\mathfrak{su}(2)$ algebra).
Examples of these solutions are illustrated in fig.\;\ref{f:G2suSolsCC}.

Solutions to the sine-Gordon equation also exist if the corresponding curve
possesses more than one real branch point.
For the orbit fixed by $\lambda_8\,{=}\,16$ and $\lambda_6\,{=}\,{-}240$, 
we consider curves with $(\lambda_2,\lambda_4)$ from the orange region
(see fig.\;\ref{f:G2suSolsCR}) and  the blue region (see fig.\;\ref{f:G2suSolsRR}).
In  formula  \eqref{sinGSolG2}, we take
$\sfb\,{\in}\,\Real$ if non-zero branch points are positive, 
and $\sfb\,{\in}\,\imath\Real$ if they are negative. 

\begin{figure}[h]
\caption{Solutions to the sine-Gordon equation in genus 2: \\
a complex conjugate pair and three real branch points}\label{f:G2suSolsCR}
\medskip
\begin{tabular}{llcc}
\parbox[c]{0.17\textwidth}{$\begin{array}{l}
\lambda_4 = 280\\
\lambda_2 = -1480 \\
\sfb\,{=}\,1/2 \\
\bm{C} = \tfrac{1}{4} \omega'_1
\end{array}$} &
\parbox[c]{0.27\textwidth}{$\begin{array}{l}
e_1  =0 \\
e_2 \approx 0.0594 - 0.3877 \imath \\
e_3 \approx 0.0594 + 0.3877 \imath\\
e_4 \approx 0.07029  \\
e_5\approx  1479.81
\end{array}$} &
\parbox[c]{0.24\textwidth}{\includegraphics[width=0.24\textwidth]{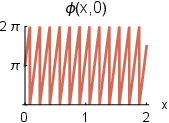}} &
\parbox[c]{0.24\textwidth}{\includegraphics[width=0.24\textwidth]{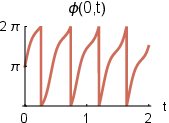}} \\
\hline
\parbox[c]{0.13\textwidth}{$\begin{array}{l}
\lambda_4 = 280\\
\lambda_2 = -320 \\
\sfb\,{=}\,1/2 \\
\bm{C} = \tfrac{1}{4} \omega'_1
\end{array}$} &
\parbox[c]{0.25\textwidth}{$\begin{array}{l}
e_1  =0 \\
e_2 \approx 0.0723 \\
e_3 \approx 0.4014 - 0.7299 \imath\\
e_4 \approx 0.4014 + 0.7299 \imath\\
e_5 \approx 319.125
\end{array}$} &
\parbox[c]{0.24\textwidth}{\includegraphics[width=0.24\textwidth]{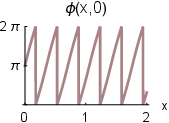}} &
\parbox[c]{0.24\textwidth}{\includegraphics[width=0.24\textwidth]{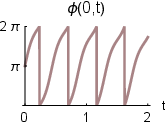}}\\
\hline
\parbox[c]{0.13\textwidth}{$\begin{array}{l}
\lambda_4 = 1180\\
\lambda_2 = -320 \\
\sfb\,{=}\,1/2 \\
\bm{C} = \tfrac{1}{4} \omega'_1
\end{array}$} &
\parbox[c]{0.25\textwidth}{$\begin{array}{l}
e_1  =0 \\
e_2 \approx 0.1058 - 0.0565 \imath \\
e_3 \approx 0.1058 + 0.0565 \imath\\
e_4 \approx 3.5169  \\
e_5\approx 316.271
\end{array}$} &
\parbox[c]{0.24\textwidth}{\includegraphics[width=0.24\textwidth]{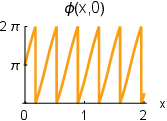}} &
\parbox[c]{0.24\textwidth}{\includegraphics[width=0.24\textwidth]{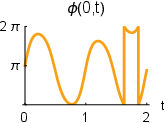}} \\
\hline
\parbox[c]{0.13\textwidth}{$\begin{array}{l}
\lambda_4 = 1180\\
\lambda_2 = 320 \\
\sfb\,{=}\,\imath/2\\
\bm{C} \,{=}\, \tfrac{1}{4} \omega_1 \,{+}\, \tfrac{1}{2} \omega_2
\end{array}$} &
\parbox[c]{0.25\textwidth}{$\begin{array}{l}
e_1 \approx -316.267 \\
e_2 \approx -3.9299 \\
e_3 = 0\\
e_4 \approx 0.0982-0.0568 \imath \\
e_5 \approx 0.0982+0.0568 \imath
\end{array}$} &
\parbox[c]{0.24\textwidth}{\includegraphics[width=0.24\textwidth]{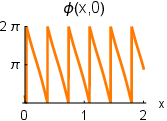}} &
\parbox[c]{0.24\textwidth}{\includegraphics[width=0.24\textwidth]{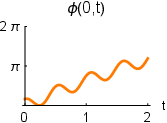}}
\end{tabular}
\end{figure}

In the case of three real branch points, there exist two solutions corresponding to 
quarter-period shifts $\bm{C}$ with multiples of $\tfrac{1}{4}$ and $\frac{3}{4}$.
One solution is increasing, while the other is decreasing
and can be obtained from the first via the transformation $\phi\,{\mapsto}\,{-}\phi$. 
Note that $\phi$ represents an angle and therefore
has the range $[0,2\pi)$. One of these solutions 
is shown in fig.\;\ref{f:G2suSolsCR} for each case.

In the case of all five real branch points, there exist four solutions corresponding to
four quarter-period shifts $\bm{C}$ (see fig.\;\ref{f:G2suSolsRR}).
\begin{figure}[h]
\caption{Solutions to the sine-Gordon equation in genus 2: \\
$\lambda_8\,{=}\,16$, $\lambda_6\,{=}\,{-}240$, $\lambda_4\,{=}\,640$, $\lambda_2\,{=}\,{-}320$,
$\sfb\,{=}\,1/2$;\\
$e_1 = 0$,   $e_2 \approx 0.0852$, $e_3 \approx 0.3830$, $e_4 \approx 1.5421$,  $e_5 \approx 317.99$.}\label{f:G2suSolsRR}
\medskip
6a. $\bm{C}=\tfrac{1}{4}\omega'_1$ $\phantom{mmmmmmmmmmmmmt}$ 
6b. $\bm{C}=\tfrac{3}{4}\omega'_1$  $\phantom{mmmmmmmmmaa}$  \\ 
\parbox[c]{0.24\textwidth}{\includegraphics[width=0.24\textwidth]{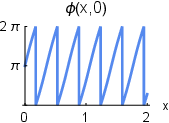}}
\parbox[c]{0.24\textwidth}{\includegraphics[width=0.24\textwidth]{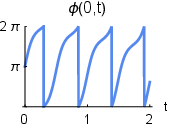}}
\parbox[c]{0.24\textwidth}{\includegraphics[width=0.24\textwidth]{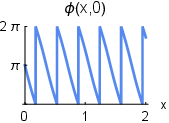}}
\parbox[c]{0.24\textwidth}{\includegraphics[width=0.24\textwidth]{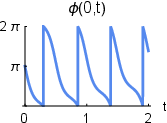}}\\
6c. $\bm{C}=\tfrac{1}{4}\omega'_1+\tfrac{1}{2}\omega'_2\vphantom{\displaystyle A^{\int^A}}$ $\phantom{mmmmmmmmmm}$ 
6d. $\bm{C}=\tfrac{3}{4}\omega'_1+\tfrac{1}{2}\omega'_2$  $\phantom{mmmmmma}$  \\ 
\parbox[c]{0.24\textwidth}{\includegraphics[width=0.24\textwidth]{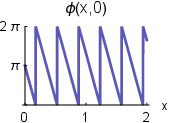}}
\parbox[c]{0.24\textwidth}{\includegraphics[width=0.24\textwidth]{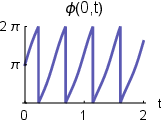}}
\parbox[c]{0.24\textwidth}{\includegraphics[width=0.24\textwidth]{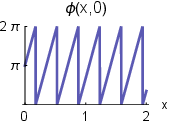}}
\parbox[c]{0.24\textwidth}{\includegraphics[width=0.24\textwidth]{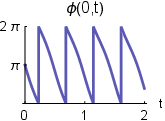}}
\end{figure}

\appendix

\section{Hyperelliptic addition law}\label{A:AddLaw}
The addition laws on hyperelliptic curves from \cite{bl2005} is employed.

The addition operation on the Jacobian variety of a plane algebraic curve is defined by the polynomial 
 function $\mathcal{R}_{3g}$ of weight $3g$. On a hyperelliptic curve $\mathcal{V}$ of genus $g$, 
 it has the following form:
\begin{gather}\label{NuIntro}
\begin{split}
&\mathcal{R}_{3g}(x,y) \equiv y \nu_y(x) + \nu_x(x), \\
&\nu_y(x) \equiv \sum_{i=1}^{\kFr} \nu_{2i-2} x^{\kFr-i},\quad 
\nu_x(x) \equiv \sum_{i=1}^{3\kFr-1} \nu_{2i-1} x^{3\kFr - 1 - i},\quad \text{if } g = 2\kFr -1,\ \ \text{or} \\
&\nu_y(x) \equiv \sum_{i=1}^{\kFr}  \nu_{2i-1} x^{\kFr-i},\quad
\nu_x(x) \equiv \sum_{i=1}^{3\kFr+1} \nu_{2i-2} x^{3\kFr+1-i},\quad \text{if } g = 2\kFr,
\end{split}
\end{gather}
where $\nu_0\,{=}\,1$ and  $\kFr \in \Natural$.
The function $\mathcal{R}_{3g}$ has $2g$ arbitrary coefficients; therefore, 
it is uniquely determined by specifying $2g$ points in its divisor of zeros. 
Let $2g$ points on $\mathcal{V}$, containing no pairs in involution, 
form two non-special divisors $D_\text{I}$ and $D_\text{II}$.  
Let $D_\text{III}$ be such a divisor that $(\mathcal{R}_{3g})_0 \,{=}\, D_\text{I} \,{+}\, D_\text{II} \,{+}\, D_\text{III}$.
This implies  $u_\text{I}\,{+}\,u_\text{II}\,{+}\,u_\text{III}\,{=}\,0$, where
$\mathcal{A}(D_\text{I}) \,{=}\, u_\text{I}$, $\mathcal{A}(D_\text{II}) \,{=}\, u_\text{II}$,
and $\mathcal{A}(D_\text{III}) \,{=}\, u_\text{III}$.

Equations  \eqref{EnC22g1}  enable us to reduce $\mathcal{R}_{3g}$ to a degree $g-1$ polynomial in $x$ 
with coefficients expressed in terms of $\nu_k$ and basis $\wp$-functions, namely 
$\wp_{1,2i-1}$ and $\wp_{1,1,2i-1}$, $i\,{=}\,1$, \ldots, $g$.
Since the pre-images of $u_\text{I}$, $u_\text{II}$, $u_\text{III}$  form 
the divisor of zeros of $\mathcal{R}_{3g}$, we have
\begin{gather}\label{AddLawEqs}
\begin{pmatrix}
1_g & \Qp(u_\text{I}) \\
1_g & \Qp(u_\text{II}) \\
1_g & \Qp(u_\text{III}) 
\end{pmatrix} 
\begin{pmatrix}  \bar{\nu} \\ \nu \end{pmatrix} = -
\begin{pmatrix}  \mathbf{q}(u_\text{I}) \\ \mathbf{q}(u_\text{II}) \\ \mathbf{q}(u_\text{III}) \end{pmatrix},
\end{gather}
where $\bar{\nu} \,{=}\, (\nu_{g+2}, \nu_{g+4},  \dots, \nu_{3g} )^t$, 
 $\nu \,{=}\, (\nu_g, \dots, \nu_2, \nu_1)^t$, and $\Qp$, $\mathbf{q}$ are defined as in \cite[Sect.\,6.2]{BerKdV2024}.

Let $u_\text{I}$ and $u_\text{II}$ be given; then $u_\text{III} \,{=}\, {-} u_\text{I} \,{-}\, u_\text{II}$. 
Thus, the equations for $u_\text{I}$ and $u_\text{II}$ in \eqref{AddLawEqs} are used to find $\nu$ and 
$\bar{\nu}$ by means of Cramer's rule. Values  $\wp_{1,2i-1}(u_\text{III})$  are obtained from the
equality
$$\nu_y^2 f(x,- \nu_x / \nu_y) = 
\mathcal{R}_{2g}(x,y;u_\text{I}) \mathcal{R}_{2g}(x,y;u_\text{II}) \mathcal{R}_{2g}(x,y;u_\text{III}).$$
Explicit expressions for $\wp_{1,2i-1}(u_\text{III})$ 
can be found in \cite[Sect.\,6.2]{BerKdV2024}.
Next, values of $\wp_{1,1,2i-1}(u_\text{III})$ are computed from the equations for $u_\text{III}$ in \eqref{AddLawEqs}.
A more general approach to the addition laws can be found in \cite{bl2005},
with hyperelliptic addition laws as an illustration.

For our purposes, we need an expression for the $\wp_{1,2g-1}$-function of the form
\begin{equation}\label{wp12gm1}
\wp_{1,2g-1}(u_\text{III}) = (-1)^{g-1} \frac{\nu_{3g}^2 - \lambda_{4g+2}  \nu_{g-1}^2 }
{\wp_{1,2g-1}(u_\text{I})\wp_{1,2g-1}(u_\text{II})}.
\end{equation}
Note that $\lambda_{4g+2} \,{\equiv}\, 0$ in the  sine(sinh)-Gordon hierarchy.

\section{Proof of Lemma~\ref{P:EtaEta} }\label{A1}  

\textbf{Genus 1.}
Let $\mathcal{V}$ be an elliptic curve of the form \eqref{V22g1Eq} with branch points located at
 $e_2$, $\bar{e}_2$, and $0$. Let $\wp$ be the doubly periodic function associated with this curve;
 therefore,  $\wp_{1,1}(u) \equiv \wp(u)$, $\wp_{1,1,1}(u) \equiv \wp'(u)$ satisfy the cubic relation
 $$ \forall u\in \Complex\backslash\{0\} \qquad  
\tfrac{1}{4} \wp_{1,1,1}(u)^2 = \wp_{1,1}(u)^3 + \lambda_2 \wp_{1,1}(u)^2 + \lambda_4 \wp_{1,1}(u).$$
Let $\nu_3$ be defined by \eqref{eta3gGen} with
\begin{gather}\label{NDG1}
|\Dp| =  \wp_{1,1}(s) - e_2,\qquad
|\Np| = -\tfrac{1}{2} e_2 \wp_{1,1,1}(s), \quad  s\in\Real.
\end{gather}
Substituting \eqref{NDG1} into \eqref{NNDDRel} and considering that 
$\lambda_4 = e_2 \bar{e}_2$ and $\lambda_2 = - (e_2+\bar{e}_2)$,
we obtain
\begin{multline}\label{nuRelG1}
\tfrac{1}{4} e_2 \bar{e}_2 \wp_{1,1,1}(s)^2
- \lambda_{4} \wp_{1,1}(s) (\wp_{1,1}(s) - e_2)(\wp_{1,1}(s) - \bar{e}_2) \\
= (e_2 \bar{e}_2 - \lambda_{4}) \wp_{1,1}(s)^3 
+ \big(e_2 \bar{e}_2 \lambda_2 +  \lambda_{4}(e_2+\bar{e}_2) \big) \wp_{1,1}(s)^2 \equiv 0,
\end{multline}
by applying the cubic relation.

\bigskip
\textbf{Genus 2.}
Let a genus two curve $\mathcal{V}$, defined as in \eqref{V22g1Eq}, have 
 branch points  located at $\{e_2$, $\bar{e}_2$, $e_4$, $\bar{e}_4$, $0\}$.
The functions  $\wp_{1,2i-1}$, $\wp_{1,1,2i-1}$, $i=1$, $2$,  associated with this curve
satisfy  the fundamental cubic relations: 
  $\forall u\in \Jac(\mathcal{V}) \backslash \Sigma$
\begin{align}
&\tfrac{1}{4} \wp_{1,1,1}(u)^2 = 
\wp_{1,1}(u)^2 (\wp_{1,1}(u) +\lambda_2) + \wp_{1,1}(u) \wp_{1,3}(u)  \notag \\
&\phantom{mmmmmmmmmmmmma} 
+ \wp_{3,3}(u) + \lambda_4 \wp_{1,1}(u) + \lambda_6, \notag\\
&\tfrac{1}{4} \wp_{1,1,1}(u) \wp_{1,1,3}(u) =
\wp_{1,1}(u) \wp_{1,3}(u) (\wp_{1,1}(u) +\lambda_2) \label{FundRelsG2} \\
&\phantom{mmmm}  + \tfrac{1}{2} \big(\wp_{1,3}(u)^2  -  \wp_{1,1}(u) \wp_{3,3}(u)
+ \lambda_4 \wp_{1,3}(u) + \lambda_8\big), \notag\\
&\tfrac{1}{4} \wp_{1,1,3}(u)^2 
= \wp_{1,3}(u)^2 (\wp_{1,1}(u) +\lambda_2)  -  \wp_{1,3}(u) \wp_{3,3}(u). \notag
\end{align}

Recalling that $\nu_6$ is defined by \eqref{eta3gGen} with 
\begin{align*}
&|\Dp| =   \tfrac{1}{2} \wp_{1,1,1}(s) (\wp_{1,3}(s) + e_2 e_4)
- \tfrac{1}{2} \wp_{1,1,3}(s) (\wp_{1,1}(s) - e_2 - e_4),&\\
&|\Np| = - \tfrac{1}{2} \wp_{1,1,1}(s) e_2 e_4 \wp_{1,3}(s)  (\wp_{1,1}(s) - e_2 - e_4)\\
&\qquad\qquad\quad + \tfrac{1}{2}  \wp_{1,1,3}(s) e_2 e_4 \big( \wp_{1,1}(s)(\wp_{1,1}(s) - e_2 - e_4)+
\wp_{1,3}(s) + e_2 e_4 \big), &
\end{align*}
and substituting these into \eqref{NNDDRel}, 
we obtain
\begin{multline}
 \lambda_{8}^{-1} |\Np| |\bar{\Np}| + \wp_{1,3} |\Dp| |\bar{\Dp}|  
 = \tfrac{1}{4}  \wp_{1,1,1}^2 
\big( 2 \wp_{1,3} \rma_{1,3} - \wp_{1,1} \rma_{3,3} \big)  \\
- \tfrac{1}{2}  \wp_{1,1,1} \wp_{1,1,3} 
\big( \wp_{1,3} \rma_{1,1}  +  \rma_{3,3}  \big) 
+ \tfrac{1}{4}  \wp_{1,1,3}^2 
\big(  \wp_{1,1} \rma_{1,1}  + 2  \rma_{1,3} \big) = 0,
\end{multline}
where $\rma_{i,j}(s) \equiv \tfrac{1}{4} \wp_{1,1,2i-1}(s) \wp_{1,1,2j-1}(s)$
and the argument $s$  is omitted for brevity.

\bigskip
\textbf{Genus 3.}
Let $\mathcal{V}$ be a genus three curve  of the form  \eqref{V22g1Eq}
with branch points $e_2$, $\bar{e}_2$, $e_4$,  $\bar{e}_4$, $e_6$, $\bar{e}_6$, and $0$.
Let the $\wp$-functions be associated with this curve. Then, the fundamental cubic relations hold:
  $\forall u\in \Jac(\mathcal{V}) \backslash \Sigma$
\begin{align}
&\tfrac{1}{4} \wp_{1,1,1}(u)^2 = 
\wp_{1,1}(u)^2 (\wp_{1,1}(u) +\lambda_2)  + \wp_{1,1}(u) \wp_{1,3}(u)  \notag \\
&\phantom{mmmmmmmmmmmmm} 
+ \wp_{3,3}(u)  -  \wp_{1,5}(u) + \lambda_4 \wp_{1,1}(u) + \lambda_6, \notag\\
&\tfrac{1}{4} \wp_{1,1,1}(u) \wp_{1,1,3}(u) =
\wp_{1,1}(u) \wp_{1,3}(u) (\wp_{1,1}(u) +\lambda_2) + \wp_{3,5}(u)  \notag \\
&\phantom{mmmm}  
+ \wp_{1,1}(u) \wp_{1,5}(u)   + \tfrac{1}{2} \big( \wp_{1,3}(u)^2  
- \wp_{1,1}(u) \wp_{3,3}(u)
+ \lambda_4 \wp_{1,3}(u) + \lambda_8\big), \notag\\
&\tfrac{1}{4} \wp_{1,1,1}(u) \wp_{1,1,5}(u) =
\wp_{1,1}(u) \wp_{1,5}(u)  (\wp_{1,1}(u) +\lambda_2) \notag \\
&\phantom{mmmm} 
+ \tfrac{1}{2} \big( \wp_{1,3}(u)  \wp_{1,5}(u) -  \wp_{1,1}(u) \wp_{3,5}(u) - \wp_{5,5}(u)
+ \lambda_4 \wp_{1,5}(u)\big), \label{FundRelsG3}  \\
&\tfrac{1}{4} \wp_{1,1,3}(u)^2 =  \wp_{1,3}(u)^2   (\wp_{1,1}(u) +\lambda_2)
+ 2 \wp_{1,3}(u) \wp_{1,5}(u) \notag \\
&\phantom{mmmmmmmmmmmmmm}  
-  \wp_{1,3}(u) \wp_{3,3}(u) + \wp_{5,5}(u) + \lambda_{10}, \notag \\
&\tfrac{1}{4} \wp_{1,1,3}(u) \wp_{1,1,5}(u) =
\wp_{1,3}(u) \wp_{1,5}(u) (\wp_{1,1}(u) +\lambda_2) \notag\\
&\phantom{mmmm}  + \wp_{1,5}(u)^2 + \tfrac{1}{2} \big( {-}\wp_{1,5}(u) \wp_{3,3}(u)
-  \wp_{1,3}(u) \wp_{3,5}(u) + \lambda_{12}\big), \notag \\
&\tfrac{1}{4} \wp_{1,1,5}(u)^2 =  \wp_{1,5}(u)^2   (\wp_{1,1}(u) +\lambda_2)
- \wp_{1,5}(u) \wp_{3,5}(u). \notag
\end{align}

In the definition \eqref{eta3gGen} of
$\nu_9$ we have
\begin{align*}
&\Pp(s) = \begin{pmatrix} 
\wp_{1,1}(s) & \wp_{1,1}(s)^2 + \wp_{1,3}(s) \\
\wp_{1,3}(s) & \wp_{1,1}(s)\wp_{1,3}(s)  + \wp_{1,5}(s) \\
\wp_{1,5}(s) & \wp_{1,1}(s) \wp_{1,5}(s) 
\end{pmatrix},\quad 
 \Rp(s) = \begin{pmatrix}  -\tfrac{1}{2} \wp_{1,1,1}(s)\\
 -\tfrac{1}{2} \wp_{1,1,3}(s)  \\  -\tfrac{1}{2} \wp_{1,1,5}(s)
 \end{pmatrix} \\
&{-}\bm{q}(s) = \begin{pmatrix} 
\tfrac{1}{2} \wp_{1,1,3}(s) + \tfrac{1}{2} \wp_{1,1,1}(s)  \wp_{1,1}(s) \\
\tfrac{1}{2} \wp_{1,1,5}(s) + \tfrac{1}{2} \wp_{1,1,1}(s)  \wp_{1,3}(s) \\
 \tfrac{1}{2} \wp_{1,1,1}(s)  \wp_{1,5}(s) 
\end{pmatrix}, \quad
\begin{array}{l}
\wp_{1,1}(\bm{C}) = e_2 + e_4 + e_6,\\
\wp_{1,3}(\bm{C}) = - e_2 e_4 - e_2 e_6 - e_4 e_6, \\
\wp_{1,5}(\bm{C}) = e_2 e_4 e_6,
\end{array} 
\end{align*}
and $\wp_{1,1,2i-1}(\bm{C}) \,{=}\, 0$ for $i\,{=}\,1$, $2$, $3$.
Introducing
$\rma_{i,j}(s) \equiv \tfrac{1}{4} \wp_{1,1,2i-1}(s) \wp_{1,1,2j-1}(s)$ allows
us to eliminate all three-index $\wp$-functions in \eqref{NNDDRel}
and express the resulting formula in terms of $\rma_{i,j} \rma_{k,l} - \rma_{i,k} \rma_{j,l}$, 
which vanish identically.
Actually,
\begin{multline}
 \lambda_{12}^{-1} |\Np| |\bar{\Np}| - \wp_{1,5} |\Dp| |\bar{\Dp}|  
 =  C_{1,1;3,3}(\rma_{1,1} \rma_{3,3} - \rma_{1,3}^2 )  \\
+ C_{1,1;3,5}(\rma_{1,1} \rma_{3,5} - \rma_{1,3} \rma_{1,5}) 
+ C_{1,1;5,5}(\rma_{1,1} \rma_{5,5} - \rma_{1,5}^2 ) \\
+ C_{1,3;3,5}(\rma_{1,3} \rma_{3,5} - \rma_{1,5} \rma_{3,3}) 
+ C_{1,3;5,5}(\rma_{1,3} \rma_{5,5} - \rma_{1,5} \rma_{3,5} )\\
+ C_{3,3;5,5}(\rma_{3,3} \rma_{5,5} - \rma_{3,5}^2 ) \equiv 0,
\end{multline}
where
\begin{align*}
C_{1,1;3,3} =&\; \wp_{1,5}(s)^3   (2\wp_{1,1}(s) +\lambda_2) \\
&+ \wp_{1,5}(s)^2 \big( \wp_{1,3}(\bm{C})  \wp_{1,3}(\bar{\bm{C}}) 
- (\wp_{1,5}(\bm{C}) + \wp_{1,5}(\bar{\bm{C}}))\wp_{1,1}(u)
 + \lambda_8  \big),\\
C_{1,1;3,5} = &\;{-}2 \wp_{1,5}(s)^3 - 2\wp_{1,5}(s)^2  \big( \wp_{1,3}(s) (2\wp_{1,1}(s) +\lambda_2) 
+ \wp_{1,5}(\bm{C}) + \wp_{1,5}(\bar{\bm{C}})\\
&+  (\wp_{1,3}(\bm{C}) + \wp_{1,3}(\bar{\bm{C}}) )  \wp_{1,1}(s)
 - \lambda_6 \big)
 + \wp_{1,5}(s) \wp_{1,3}(s) \big(\wp_{1,3}(\bm{C})  \wp_{1,3}(\bar{\bm{C}})  \\
& +(\wp_{1,5}(\bm{C}) + \wp_{1,5}(\bar{\bm{C}}) ) \wp_{1,1}(s)  - \lambda_8 \big) - 2 \lambda_{12} \wp_{1,5}(s),\\
C_{1,1;5,5} = &\; \wp_{1,5}(s)^2 \big( 2\wp_{1,3}(s) - \wp_{1,3}(\bm{C})  - \wp_{1,3}(\bar{\bm{C}})\big)
+ \wp_{1,5}(s) \big( \wp_{1,3}(s)^2 (2\wp_{1,1}(s) +\lambda_2) \\
& + \wp_{1,3}(s) \big({-} (\wp_{1,3}(\bm{C}) + \wp_{1,3}(\bar{\bm{C}}))\wp_{1,1}(s)
+ \lambda_6 \big) + \lambda_{10} \big),\\
C_{1,3;3,5} = &\; \wp_{1,5}(s)^2 \big(2 \wp_{1,3}(s)+
 \wp_{1,1}(s) (4\wp_{1,1}(s) + 3\lambda_2) + \wp_{1,1}(\bm{C})  \wp_{1,1}(\bar{\bm{C}}) + \lambda_4\big) \\
 &+ \wp_{1,5}(s) \big({-} (\wp_{1,5}(\bm{C}) + \wp_{1,5}(\bar{\bm{C}})) ( \wp_{1,1}(s)^2 + \wp_{1,3}(s) ) \\
& + (\wp_{1,3}(\bm{C})  \wp_{1,3}(\bar{\bm{C}})  + \lambda_8) \wp_{1,1}(s) \big),\\
C_{1,3;5,5} =&\; {-} \wp_{1,5}(s)^2 (2\wp_{1,1}(s) +\lambda_2) 
- \wp_{1,5}(s) \wp_{1,3}(s) \big(2 \wp_{1,3}(s) \\
&+ \wp_{1,1}(s) (4\wp_{1,1}(s) + 3\lambda_2) + 2 \lambda_4\big) 
 - \wp_{1,5}(s) \big((\wp_{1,3}(\bm{C}) + \wp_{1,3}(\bar{\bm{C}}))\wp_{1,1}(s)^2 \\
& + \lambda_6 \wp_{1,1}(s) + \wp_{1,3}(\bm{C}) \wp_{1,3}(\bar{\bm{C}}) + \lambda_8\big),\\
C_{3,3;5,5} = &\; \wp_{1,5}(s) \wp_{1,3}(s) (2\wp_{1,1}(s) +\lambda_2) 
+ \wp_{1,5}(s) \big(2 \wp_{1,1}(s)^3 + 2 \lambda_2 \wp_{1,1}(s)^2 \\
& + (\wp_{1,1}(\bm{C})  \wp_{1,1}(\bar{\bm{C}}) + \lambda_4 ) \wp_{1,1}(s) 
+ \wp_{1,5}(\bm{C}) + \wp_{1,5}(\bar{\bm{C}})  + \lambda_6 \big).
\end{align*}
The identities $\rma_{i,j} \rma_{k,l} - \rma_{i,k} \rma_{j,l} = 0$ expressed in terms of $2$-index $\wp$-functions
define the Kummer variety of $\mathcal{V}$.

 This completes the proof of Lemma~\ref{P:EtaEta}.


\end{document}